\pdfoutput=1
\documentclass[11pt]{article}
\usepackage[a4paper,lmargin=2.5cm,rmargin=2.5cm,tmargin=4cm,bmargin=3.5cm]{geometry}
\usepackage[utf8x]{inputenc}
\usepackage{dsfont}
\usepackage{amsmath,amssymb,amstext,amsthm,amscd, amsxtra, mathrsfs,bm,slashed}
\usepackage{nicefrac}
\usepackage{enumitem}
\usepackage{array}
\usepackage{cite}
\usepackage{todonotes}
\usepackage{mathtools}
\usepackage[affil-it]{authblk}

 \usepackage{hyperref}
 \hypersetup{
   pdftitle   = {LKMS states},
  pdfauthor  = {M. Gransee, N. Pinamonti, R. Verch},
 }

%
\theoremstyle{plain}
\newtheorem{lemma}{Lemma}[section]
\newtheorem{prop}[lemma]{Proposition}
\newtheorem{theorem}[lemma]{Theorem}
\newtheorem{coro}[lemma]{Corollary}

\theoremstyle{definition}
\newtheorem{defi}[lemma]{Definition}

\def\ben{\begin{equation}}
\def\een{\end{equation}}
\def\non{\nonumber}

\def\cA{{\mathcal A}}

\def\cC{{\mathcal C}}
\def\cD{{\mathcal D}}

\def\cF{{\mathcal F}}

\def\cI{{\mathcal I}}

\def\cM{{\mathcal M}}

\def\cO{{\mathcal O}}

\def\cP{{\mathcal P}}
\def\cQ{{\mathcal Q}}

\def\cS{{\mathcal S}}
\def\cT{{\mathcal T}}
\def\cU{{\mathcal U}}


\def\IC{{\mathbb C}}
\def\IM{{\mathbb M}}
\def\IN{{\mathbb N}}
\def\IR{{\mathbb R}}

\def\a{\alpha}
\def\b{\beta}
\def\g{\gamma}        \def\G{\Gamma}
\def\d{\delta}        
\def\eps{\varepsilon} 
     
\def\z{z }
    \def\Th{\Theta}
\def\k{\kappa}
\def\l{\lambda}       \def\L{\Lambda}


\def\s{\sigma}		 
    
\def\t{\tau}

\def\c{\chi}
\def\o{\omega}        


\def\fE{{\mathfrak E}}

\def\fR{{\mathfrak R}}
\def\fS{{\mathfrak S}}
\def\fT{{\mathfrak T}}		
		\def\fu{{\mathfrak{u}}}

		\def\fw{{\mathfrak{w}}}


\def\to{\rightarrow}


\def\del{\partial}

\newcommand{\abs}[1]{{\left|{#1}\right|}}			
\newcommand{\eins}{{\mathds{1}}}				
\newcommand{\WF}{{\mathrm{WF}}}				
\newcommand{\wfou}{{\widehat{\fw}}}				
\newcommand{\ufou}{{\widehat{\fu}}}
\newcommand{\diff}{{\mathrm{d}}}				
\newcommand{\dnull}[1]{{\left. \frac{d^{#1}}{d t^{#1}}\right|_{t=0}}} 

\newcommand{\bb}{{\boldsymbol{\b}}}					
\newcommand{\va}{{\underline{\a}}}

\newcommand*{\Bigcdot}{\raisebox{-0.8ex}{\scalebox{2.5}{$\cdot$}}}


\begin{document}
\title{KMS-like Properties of Local Equilibrium States\\ in Quantum Field Theory}
\author[1,4]{Michael Gransee}
\affil[1]{MPI f\"ur Mathematik in den Naturwissenschaften, 04103 Leipzig, Germany}
\author[2,3]{Nicola Pinamonti}
\affil[2]{Dipartimento di Matematica, Universit\`a di Genova, 16146 Genova, Italy}
\affil[3]{INFN, Sezione di Genova, 16146 Genova, Italy}
\author[4]{Rainer Verch}
\affil[4]{Institut f\"ur Theoretische Physik, Universit\"at Leipzig, 04103 Leipzig, Germany}
\date{}
\maketitle

\begin{abstract} A new condition, called ``Local KMS Condition'', characterizing states of a quantum field to which one can
ascribe, at a given spacetime point, a temperature, is introduced in this article.
It will be shown that the Local KMS Condition (LKMS condition) is equivalent to the Local Thermal Equilibrium (LTE) condition, proposed
previously by Buchholz, Ojima and Roos, for states of the quantized scalar Klein-Gordon field that fulfill the analytic
microlocal spectrum condition. Therefore, known examples of states fulfilling the LTE condition 
provide examples of states obeying the LKMS condition with a temperature distribution varying in space and time.
The results extend to the generalized cases of mixed-temperature LKMS and LTE states. The LKMS condition therefore provides
a promising generalization of the KMS condition, which characterizes global thermal equilibrium states with respect to an
inertial time evolution, to states which are globally out of equilibrium but still possess a local temperature distribution.

\end{abstract}

\section{Introduction}
In quantum field theory as well as in quantum statistical mechanics, the characterization of states which have a temperature distribution varying in space and time has proved notoriously difficult, in particular within model-independent frameworks. Already the initial step, consisting of a definition of ``temperature'' of the system at any spacetime point is a non-trivial step. While the KMS (Kubo-Martin-Schwinger) condition serves as both a sufficiently general and suitably restrictive condition of global thermal equilibrium states (in the sense of generalized Gibbs ensembles) for a large class of systems in quantum statistical mechanics and quantum field theory \cite{Emc14,Haa92} a likewise universal concept of ``local'' temperature for such systems does not appear to be at hand. This problem apparently not only besets non-equilibrium situations in (relativistic) quantum field theory, or in quantum systems as such; it is known that, for a wide range of physical systems, there does not seem to exist a universal concept of temperature when the systems are not in thermal equilibrium. We refer to the review \cite{CVJ03} for considerable further discussion on this topic.

One generalization of the KMS condition, the ``non-equilibrium steady states'', has been proposed by D. Ruelle \cite{Rue00}; this class of states corresponds to a medium (described in terms of a generic quantum system) coupled to several heat reservoirs at possibly different temperatures which thereby can attain a temperature distribution that may vary in space, but remains constant in time.

Another approach towards a local temperature concept, intended particularly for application in relativistic quantum field theory, has been proposed by Buchholz, Ojima and Roos \cite{BOR02}. The basic idea here is to 
introduce sets $\cS_q$ of thermal observables of a quantum field theory at any spacetime point $q$ and to compare the expectation values of the thermal observables in a given state with the expectation values of the thermal observables in a global thermal equilibrium state (characterized as a KMS state). The better
the coincidence of the expectation values in the thermal observables, the closer the given state is to a thermal equilibrium state at spacetime point $q$, and may therefore be assigned a temperature at $q$ --- the temperature of the thermal equilibrium state admitting best coincidence of expectation values. We will give a more precise summary of the ``local thermal equilibrium'' (LTE) state concept in Sec.\ 2. Let us remark here that the LTE state concept has some promising features which have been explored in some publications, cf. \cite{Buc03,BS13,SV08,Bah06,Sol10,Sol12,Ver12}. There are, however, some drawbacks. One possible drawback is that the concept
of LTE states very much depends on how the set $\cS_q$ has been chosen. While \cite{BOR02} give very good arguments for their choice of $\cS_q$, it is not clear that this is the only reasonable choice which could be made. Moreover, the definition of $\cS_q$ of \cite{BOR02} does not appear to be related to the time-evolution of the quantum field under consideration, while in contrast, the KMS condition refers explicitly to the time-evolution, and time-like correlations of quantum field observables. That is to say, the LTE concept does not appear as a local generalization of the KMS condition --- the relation to the KMS condition is somewhat indirect
through the use of local thermal observables $\cS_q$ and global thermal equilibrium states of the system for comparison. 

In the present article, we will propose a new class of states, called ``local KMS'' (LKMS) states for the linear, scalar Klein-Gordon field on Minkowski spacetime. This class of states is characterized by a local generalization of the KMS condition,at any spacetime point $q$, at the level of the 2-point functions of the states. 
We will soon give a somewhat more detailed sketch of the LKMS condition to explain to what extent they generalize the KMS condition in a ``local'' sense. Yet, running somewhat ahead, let us already mention that the main result of this article will  consist in showing that, under certain conditions on the states considered, {\it the 
LKMS condition and the LTE condition are equivalent.}  

In order to describe the LKMS condition in more technical terms --- full details and proofs will be given in the main body of the text --- let us suppose that $\phi(f)$, where $f$ denotes test-functions of Schwartz type, are the quantum field operators of the scalar Klein-Gordon field on 4-dimensional Minkowski spacetime. (Basically everything can readily be generalized to $d$-dimensional Minkowski spacetime with $d \ge 4$ upon making some suitable adaptations.) The field operators can be taken to be given in the vacuum representation on their standard Wightman domain; alternatively, they could also be regarded as elements of a $*$-algebra (e.g.\ the Borchers-Uhlmann algebra), initially without a concrete Hilbert space representation. The latter point of view is implicitly adopted when considering expectation values of field operator products in KMS states. Then one can consider a timelike, future-directed, normalized direction vector $e$ of Minkowski spacetime, and we take $\omega_{\beta,e}$ as the KMS-state with respect to the time-direction specified by $e$, at inverse temperature $\beta$. In particular, this implies that the 2-point correlation function $t \mapsto \omega_{\beta,e}(\phi(q - t e /2)\phi(q' + te/2))$ fulfills the KMS condition. Here, we have adopted common practice to denote the distribution 
with respect to the spacetime variables $q$ and $q'$ as if it was a function. Actually, it is an analytic function
as long as $q - t e/2$ and $q' + te/2$ don't coincide or are lightlike related.  Looking only at quantum field operators which, very formally speaking, are localized at the same, coinciding points $q' = q$, one obtains that the 2-point correlation $t \mapsto \omega_{\beta,e}(\phi(q - t e/2)\phi(q + te/2))$ fulfills (a version of) the KMS condition. At this point, one must take some care in expressing the KMS condition because of the distributional nature of that 2-point correlation in $t$ at $t = 0$; however, as we will show in the main body of this article, this difficulty can be overcome.

Then, the basic idea in setting up a local version of the KMS condition that we here propose is to view the KMS condition fulfilled by  $\omega_{\beta,e}(\phi(q - t e/2)\phi(q + te/2))$ with respect to $t$ as a remnant of the full KMS condition at the spacetime point $q$, and to allow the time-direction $e$ as well as the inverse temperature parameter $\beta$ to vary with $q$. In other words, the idea is to say that a (sufficiently regular) state $\omega$ fulfills the local KMS condition at some spacetime point $q$ if there are a time-direction vector $e$ and an inverse temperature $\beta > 0$ such that the two-point correlation function $\omega(\phi(q - t e / 2)\phi(q + t e /2))$ fulfills the same (remnant) KMS-condition with respect to $t$ as does the two-point correlation function $\omega_{\beta,e}(\phi(q - t e /2)\phi(q + t e /2))$ of a proper KMS state $\omega_{\beta,e}$ with respect to the time-direction $e$ at inverse temperature $\beta$.

It turns out that there is some leeway as to how precisely the said idea for an LKMS condition should be implemented. This concerns in particular the behaviour with respect to $t$. The initial thought is to try and keep the LKMS condition as local as possible in $t$ and therefore to base the LKMS criterion on the values of $\omega(\phi(q - t e / 2)\phi(q + t e /2))$ and its $t$-derivatives at $t = 0$. In this spirit, we have given a preliminary definition of the LKMS condition in \cite{Ver12} in the following form. First, abbreviating the 2-point correlation $\omega_{\beta,e}(\phi(q - t e/2)\phi(q + te/2))$ of the KMS state with respect to time direction $e$ and at inverse temperature $\beta$ as 
\begin{align}
 \varphi_q(t) = \omega_{\beta,e}\left(\phi\left(q - \frac{t}{2} e\right)\phi\left(q + \frac{t}{2} e\right)\right) \,
\end{align}
it is not difficult to observe that, as a consequence of the KMS condition, there is a function
\begin{align}
 f_q : S_\beta = \{ t + i\sigma : 0 < \sigma < \beta \} \to \mathbb{C}
\end{align}
which is defined and analytic on the open strip $S_\beta$, and has a continuous extension to the closure $\overline{S}_\beta$
{\it except} at the boundary points with $t = 0$, such that 
\begin{align}
 \lim_{\sigma \to 0}\, (\varphi_q(t) - f_q(t + i\sigma)) = 0 \,, \quad \lim_{\sigma \to \beta} (\varphi_q(-t) - f_q(t + i\sigma)) = 0
\end{align}
holds for all $t \in \mathbb{R}$. More precisely, the previous equations hold at the level of distributions with respect to the variable $t$. In fact, the singularities of $\varphi_q$ and $f_q$ at $t = 0$ compensate in the difference at the boundary of the strip $S_\beta$, so that one has 
\begin{align} \label{KMS-gen}
 \partial_t^n \lim_{\sigma \to 0}\, (\varphi_q(t) - f_q(t + i\sigma)) = 0 \,, \quad \partial_t^n\lim_{\sigma \to \beta} (\varphi_q(-t) - f_q(t + i\sigma)) = 0
\end{align}
for all $n \in \mathbb{N}_0$ and for all $t \in \mathbb{R}$.

In fact, this condition, appropriately interpreted at the level of distributions, {\it together} with the condition that $\varphi_{q + te} = \varphi_q$ for all $t \in \mathbb{R}$, is equivalent to the KMS-condition for the two-point distribution $f_1,f_2 \mapsto \omega(\phi(f_1)\phi(f_2))$ with respect to the time-direction $e$, at inverse temperature $\beta$ \cite{BR97}. The condition $\varphi_{q + t e} = \varphi_{q}$ expresses invariance of condition \eqref{KMS-gen} with respect to the time-shifts along the time-direction given by $e$, meaning in particular that $\beta$ is constant along this time-direction. As we wish to define a condition characterizing states having a temperature at a given point $q$ in spacetime which may vary when $q$ varies, this suggests keeping condition \eqref{KMS-gen} and dropping the condition $\varphi_{q + te} = \varphi_q$. Thus, we were led in \cite{Ver12} to say that a (sufficiently regular) state $\omega$ fulfills the LKMS condition at the spacetime point $q$ if there is a time direction vector $e$ and and inverse temperature $\beta$ (both possibly $q$-dependent) such that, upon writing
\begin{align}
 \psi_q(t) = \omega\left(\phi\left(q - \frac{t}{2} e\right)\phi\left(q + \frac{t}{2} e\right)\right)\,,
\end{align}
there is a function $f_q$ with the properties as specified above such that
\begin{align} \label{LKMS1}
\lim_{t \to 0}\, \partial_t^n \lim_{\sigma \to 0}\, (\psi_q(t) - f_q(t + i\sigma)) = 0 \,, \quad 
\lim_{t \to 0}\, \partial_t^n \lim_{\sigma \to \beta}\, (\psi_q(-t) - f_q(t + i\sigma)) = 0
\end{align}
holds for all $n \in \mathbb{N}_0$.

Such a variant of the LKMS condition appears attractive since it is completely intrinsic in the sense that is uses only properties of the 2-point correlation function of the state $\omega$ in an infinitesimal neighbourhood of the spacetime point $q$, and does not require any states or observables (or limits thereof) for comparison of properties. Moreover, it is apparent in which sense it is a local remnant of the KMS condition. There is also a clear relation to the concept of LTE states. 

To explain that, we recall that a typical choice for $\cS_q$, the local thermal observables at a
spacetime point $q$, is the collection of all balanced derivatives of the Wick-square of the quantum field $\phi$ at $q$. We will explain this in more detail in the main body of this article. As a consequence, if
$\omega$ is an LTE state of infinite order at $q$, then it follows that
\begin{align} \label{diffLTE1}
 \partial_{z^{\mu_1}} \cdots \partial_{z^{\mu_n}} \,\left. \left(\omega(\phi(q + z)\phi(q -z)) -
 \omega_{\beta,e}(\phi(q + z)\phi(q - z)) \right) \right|_{z = 0} = 0
\end{align}
holds for all $n\in\IN_0$, where $z$ is an arbitrary spacelike vector. Assuming sufficient regularity of $\omega$, $\omega(\phi(q)\phi(q')) - \omega_{\beta,e}(\phi(q)\phi(q'))$ is a jointly smooth function of the spacetime points $q$ and $q'$, and thus \eqref{diffLTE1} can be extended to timelike vectors $z$ by applying a polarization formula. Since $\omega_{\beta,e}$ is a KMS state with respect to the time-direction $e$, \eqref{diffLTE1} then implies also that $\omega$ satisfies relation \eqref{LKMS1}. This shows that --- up to some mathematical details which we will properly address in the main part of this work ---  LTE states of infinite order are LKMS states in the sense described above. 

However, we are interested to see under which conditions the LTE and LKMS conditions are equivalent. At this point, it matters to specify the ``sufficient regularity'' of the state $\omega$ (respectively, its two-point function) alluded to before. So far, ``sufficient regularity'' can be expressed as the (implicit) assumption that $\omega$ be a Hadamard state, which is equivalent to saying that its two-point function is of Hadamard form, or, equivalently, that it satisfies the microlocal spectrum condition \cite{Rad96,SVW02}. Making this assumption, one can show that $\omega$ fulfills the LKMS condition if it fulfills the LTE condition (of infinite order) as we have just sketched. However, in order to obtain the reverse conclusion, it turns out that imposing the microlocal spectrum condition on (the two-point function of) $\omega$ is not enough. Stronger regularity on the two-point function of $\omega$ must be imposed, together with a somewhat stronger form of the remnant KMS condition expressed so far in the form of \eqref{LKMS1}. We find that, in order to conclude that the LKMS condition implies the LTE condition, it is sufficient that the two-point function of $\omega$ fulfills the analytic microlocal spectrum condition \cite{SVW02} and that the remnant LKMS condition is strengthened to a remnant form of the relativistic KMS condition previously investigated by Bros and Buchholz \cite{BB94,BB96}. Imposing these stronger conditions, we then obtain as our main result the equivalence of LKMS and LTE conditions (given in versions both for infinite and for finite order). We will also extend the result to mixed-temperature LKMS and LTE states.

As a consequence of the equivalence result, known examples of LTE states with a temperature distribution varying in space and time, such as the hot bang state constructed for the massless Klein-Gordon field in \cite{Buc03}, provide examples for LKMS states with a temperature that varies in space and time. Similarly, results guranteeing the existence of (varying temperature) mixed LTE states \cite{Sol10} also pertain for mixed LKMS states, and likewise, results constraining the existence of LTE states \cite{Buc03} also apply in the case of LKMS states.

The present article is organized as follows.
In Section 2, we discuss basic properties of the quantized Klein-Gordon field on (patches of) Minkowski spacetime, as well as properties of KMS states (with respect to given time-directions) and of their corresponding two-point functions. The LTE (local thermal equilibrium) condition for states of the quantized Klein-Gordon field according to \cite{BOR02} will also be summarized. Section 3 is devoted to definition and discussion of our LKMS condition. This includes various characterizations of the LKMS condition and some results related to their analytic properties. Our main result on the equivalence of LKMS and LTE states, both for LTE states of finite and infinite order, for states fulfilling the analytic microlocal spectrum condition, will also be presented in Section 3. In Section 4, we discuss the generalization of our results to the case of mixed-temperature LTE and LKMS states. The article will be concluded by Summary and Outlook in Section 5. Several auxiliary technical results appear in Appendices to the main text.

\section{Preliminaries}

We consider the quantized uncharged Klein-Gordon field on a globally hyperbolic subregion $\cM\subseteq\IM$ of Minkowski spacetime $\IM=\IR^4$, with Lorentzian pseudo-metric of the form $\eta=\text{diag}(+1,-1,-1,-1)$. The field $\phi(x)$ is regarded as an operator-valued distribution on the space $\cD(\cM)$ of test functions with compact support contained in $\cM$, i.e. formally we have
\ben
\phi(f)=\int \phi(x)f(x) dx,\qquad f\in\cD(\cM).
\een
The field fulfills the Klein-Gordon equation in the sense of distributions, i.e.
\ben
\phi((\Box +m^2)f)=0\quad\forall f\in\cD(\cM),
\een
where $\Box$ denotes the d'Alembertian and $m\geq 0$ is the mass parameter. Furthermore, $\phi(x)$ is regarded as hermitian, i.e. $\phi^*(f)=\phi(\bar{f})$, and to fulfill the Canonical Commutation Relations (CCR), expressed by
\ben
[\phi(f),\phi(g)]=i E(f,g)\eins, \quad f,g\in\cD(\cM),
\label{eq:CCR}
\een
where $[A,B]:=AB-BA$ denotes the commutator and $E$ is the causal propagator, defined as the difference of the advanced minus retarded fundamental solution of the Klein-Gordon equation. Locality (or Einstein causality) is expressed by the fact that $E(f,g)=0$ whenever $f,g\in\cD(\cM)$ have mutually spacelike separated supports. The algebra of local fields is the ${}^*$-algebra $\cA(\cM)$, generated by multiples of $\eins$ and finite sums as well as products of the field operators. If $\cM=\IM$ the algebra of observables $\cA(\IM)$ is stable under the the action of the proper, orthochronous Poincar\'e group  $\mathcal{P}_+^{\uparrow}$, implemented on the field operators by
\ben
 \t_{(\L,a)}(\phi(f))=\phi(f_{(\L,a)}), 
\een
where $f_{(\L,a)}(x)=f(\L^{-1} (x-a))$. If $\cM\subset\IM$ this holds for all $(\L,a)$ in a neighbourhood of the identity in $\cP_+^\uparrow$.

A \emph{state on} $\mathcal{A}(\cM)$ is a normalized positive linear functional\footnote{If $\cA(\cM)$ carries a topology, $\o$ is also required to be continuous with respect to this topology.} $\o:\cA(\cM)\rightarrow \IC$. The $n$-point ``functions'' of a state are distributions $\o_n\in\cD^\prime(\cM^n)$, given by
\ben
\o_{n}(f_1,\ldots,f_n)=\o(\phi(f_1)\cdots\phi(f_n)),\quad n\in\IN.
 \een
It is often convenient to formally write the $n$-point distributions as if they were functions,
\ben
 \o_{n} (x_1,\ldots,x_n):=\o(\phi(x_1)\cdots\phi(x_n)),\quad n\in\IN,
\een
where the $x_j$ are spacetime points. Here, we take the viewpoint of the reconstruction theorem \cite{SW00} and restrict to states which are completely determined by their $n$-point functions $\o_n,n\in\IN$. 

We will call a \emph{quasifree state} any state $\o$ on $\cA(\cM)$ which is completely determined by its two-point function $\o_2$ via
\ben
\o\left(e^{it\phi(f)}\right)=e^{-\frac{1}{2}\o_2(f,f)\cdot t^2},
\een
where the equation is to  be interpreted as equating terms of equal order in $t$.

In what follows we will restrict to so-called Hadamard states. Those are characterized by a specific singular behaviour of their two-point function which mimics that of $\o_2^{\text{vac}}$, the two point function of the unique vacuum state $\o_\text{vac}$ on $\cA(\IM)$. Namely, for every such state $\o$ on $\cA(\cM)$ one has  $\o_2-\left.\o_2^\text{vac}\right|_{\cM\times\cM}\in C^\infty(\cM\times\cM)$, where $\left.\o_2^\text{vac}\right|_{\cM\times\cM}$ is the restriction of $\o_2^\text{vac}\in\cD'(\IR^4\times\IR^4)$ to $\cM\times\cM$. As first recognized by Radzikowski \cite{Rad96}, a Hadamard state $\o$ can be characterized by a certain condition on $\WF(\o_2)$, the wave front set \cite{Hoe90} of its two-point function $\o_2$. We will call a \emph{Hadamard state} a state $\o$ on $\cA(\cM)$ which fulfills
\ben
 \WF(\o_2)=\{(x,x^\prime,k,-k)\in T^\ast\cM^2\backslash\{0\}:x\sim_k x^\prime,k_0>0\},
\label{eq:WFmink}
\een
where $x\sim_k x^\prime$ means that $x$ and $x'$ can be connected by the uniquely defined lightlike geodesic (a straight line in $\cM$) with cotangent vector $k$, while the condition $k_0>0$ requires $k$ to be future directed.
A stronger form of the Hadamard condition, the so-called \emph{analytic microlocal spectrum condition} has been given in \cite{SVW02}. Following this, we will call an \emph{analytic Hadamard state} a state $\o$ on $\cA(\cM)$ which fulfills
\ben
 \WF_A(\o_2)=\{(x,x^\prime,k,-k)\in T^\ast\cM^2\backslash\{0\}:x\sim_k x^\prime,k_0>0\},
\label{eq:WFAmink}
\een
where $\WF_A(\o_2)$ denotes the analytic wavefront set \cite{Hoe90} of $\o_2$. Note that in this case we have $\o_2-\left.\o_2^\text{vac}\right|_{\cM\times\cM}\in C^A(\cM\times\cM)$, where $C^A$ is the class of real-analytic functions. In semiclassical gravity Hadamard states are an indispensable tool when it comes to the regularization of states, since in general there is no preferred state akin to the vacuum in a generic curved spacetime. In particular, they play an important role in the problem of defining an appropriate quantum stress-energy tensor which should replace the classical one in the semiclassical Einstein equations. This has provided some motivation for viewing Hadamard states as physical states for the quantized Klein-Gordon field on curved backgrounds \cite{Wal94}. On the other hand, the characterization of Hadamard states in terms of conditions on the wave front set of $\o_2$ has proved instrumental for quantum field theory on curved spacetimes, cf. \cite{HW01, HW02, BF09, BDH13} and references cited there.

It is well-known that global thermal equilibrium states of infinitely extended quantum systems can be described in a mathematically rigorous manner by means of the so-called \emph{KMS condition}, first considered in \cite{HHW67} (for a detailed discussion of quantum statistical mechanics in the operator-algebraic framework, see \cite{BR97}). In its usual formulation it relies on the existence of a one-parameter group $\a_t$ of automorphisms, the group of time translations. However, in the setting of relativistic QFT, a Lorentz frame is fixed by the choice of a future-directed timelike unit vector $e$ and the latter is interpreted as the ``time-direction'' of this frame. For later use we define the set of time-directions by
\ben
V_+^1:=\{e\in V_+:e^\mu e_\mu=1\},
\een 
where $V_+$ denotes the open forward light cone. Then $$\a_t^{(e)}=\t_{(1,te)},\quad t\in\IR,$$ is the one-parameter group of time evolution on $\cA(\IM)$ with respect to the Lorentz frame whose time-direction is fixed by $e\in V_+^1$. The KMS condition can now be stated as follows:

\begin{defi}
Let $e\in V_+^1$. A state $\o$ on $\cA(\IM)$ is called a \emph{KMS state at inverse temperature} $\b>0$ \emph{with respect to} $\a_t^{(e)}$ (or $(\b,\a_t^{(e)})$-\emph{KMS state}, for short), iff for any $A,B\in\cA(\IM)$ there exists a function $F_{A,B}$, which is defined and analytic on the open strip $S_\b:=\{z\in\IC:0<\Im z<\b\}$, and defined and continuous on $\bar{S}_\b$ with boundary values
\begin{align}
 F_{A,B}(t)&=\o\left(A\a_t^{(e)}(B)\right),\\
 F_{A,B}(t+i\b) &= \o \left(\a_t^{(e)}(B)A\right),\qquad \forall t\in\IR.
\end{align}
\end{defi}

In relativistic QFT a $(\b,\a_t^{(e)})$-KMS state $\o$ is regarded as a thermal equilibrium state at inverse temperature $\b$ with respect to the rest system (or Lorentz frame) specified by some $e\in V_+^1$.  Therefore thermal equilibrium states in relativistic QFT are indicated by both inverse temperature $\b$ and time direction $e$ of the rest system. It is convenient to combine the two quantities into the inverse temperature four-vector $\bb=\b e \in V_+$ so that $\o_\bb$ denotes a $(\b,\a^{(e)}_t)$-KMS state on $\cA(\IM)$. In the following we will call $\o_\bb$ simply a $\bb$-\emph{KMS state}. In the present model we have vanishing chemical potential and can therefore assume that for any given $\bb\in V_+$ there is a unique (gauge-invariant) $\bb$-KMS state $\o_\bb$ on $\cA(\IM)$, i.e. the set $\cC_\bb$ of all $\bb$-KMS states on $\cA(\IM)$ is non-degenerate. This assumption implies that $\o_\bb$ is also invariant under space-time translations. Furthermore, $\o_\bb$ fulfills the following time-clustering property (cf. \cite{BB96}):
\ben
\o_\bb(\phi(f)\a_t^{(e)}\phi(g))\xrightarrow[\abs{t}\to\infty]{} 0\quad \forall f,g\in\cS(\IR^4),
\een
which can be expressed more formally as
\ben
\o_2^\bb(x,y+te)\xrightarrow[\abs{t}\to\infty]{} 0\quad \forall x,y\in\IM.
\label{eq:clust}
\een

The two-point function $\o_2^\bb\in\cD'(\IR^4\times\IR^4)$ of a $\bb$-KMS state $\o_\bb$ on $\cA(\IM)$ in fact is a tempered distribution, $\o_2^\bb\in\cS'(\IR^4\times\IR^4)$. It is given by
\ben
\o_2(f,g)=2\pi\int d^4p \frac{\eps(p_0)\d(p^2-m^2)}{1-e^{-\bb p}}\hat{f}(-p)\hat{g}(p),\quad f,g\in\cS(\IR^4),
\een
where $\cS(\IR^4)$ denotes the space of Schwartz functions on $\IR^4$.

 It has been shown by Buchholz and Bros in \cite{BB94} that for $A,B\in\cA(\IM)$ the correlation functions $F_{A,B}(x):=\o_\bb(A\t_{(1,x)}(B)),x\in\IR^4,$ of $\bb$-KMS states $\o_\bb$ on $\cA(\IM)$ have in fact stronger analyticity properties than that imposed by the KMS condition. These analyticity properties can be viewed as a remnant of the relativistic spectrum condition in the case of thermal equilibrium states, and consequently the term \emph{relativistic KMS condition} was introduced by the authors of \cite{BB94}. 

\subsection{The LTE condition of Buchholz, Ojima and Roos}

In \cite{BOR02} Buchholz, Ojima and Roos developed a method for distinguishing states which are out of equilibrium but locally still have a thermodynamical interpretation. Heuristically speaking, a \emph{local thermal equilibrium (LTE) state} is a state for which one can define local (pointlike) intensive thermal quantities like temperature, pressure and thermal stress-energy which then take the values they would have if the field was in some global thermal equilibrium state. 

The first key step in the analysis of \cite{BOR02} is the construction of spaces $\cQ_q$ of idealized observables (density-like quantities) located at some $q\in\cM$. Those observables are well-defined as quadratic forms in all states with an appropriate high-energy behaviour. From the spaces $\cQ_q$ one then selects certain subspaces $\cS_q\subset\cQ_q$ of local thermal observables $s(q)$. The thermal interpretation of these observables is justified by evaluating them in thermal reference states. The set of these reference states is denoted by $\cC_B$ and consists of mixtures of KMS states $\o_\bb$ on $\cA(\IM)$, with $\bb$ contained in some compact subset $B\subset V_+$. A generic state $\o_B\in\cC_B$ is represented in the form
\ben
\o_B(A)=\int_B \diff \rho(\bb) \o_\bb(A), \quad A\in\cA(\IM),
\label{eq:mixed}
\een
where $\rho$ is a positive normalized measure on $V_+$ with support contained in $B$. In the following we will first restrict to states with \textit{sharply defined} local thermal parameters, corresponding to pointlike-concentrated measures $\rho=\d_\bb$ in (\ref{eq:mixed}). It is thus sufficient to take for the moment as the space of reference states the spaces $\cC_\bb$, defined at the end of the previous section. We will discuss the case of states with``mixed" thermal parameters in Section 4.

For the present model the spaces of thermal observables are defined as the spaces $\cS_q^n$, spanned by the so-called \emph{balanced derivatives of the Wick square up to order} $n$, defined as
\ben
\eth_{\mu_1\ldots\mu_n}:\phi^2:(q):=\lim\limits_{\z\to 0}\del_{z ^{\mu_1}}\ldots\del_{z ^{\mu_n}}\left[\phi(q+\z)\phi(q-\z)-\o_\text{vac}(\phi(q+\z)\phi(q-\z))\cdot\eins\right],
\label{eq:bderiv}
\een
where $\o_\text{vac}$ is the unique vacuum state on $\cA(\IM)$ and the limit is taken along spacelike directions $z $. For the Klein-Gordon field with $m=0$ an easy computation yields
\ben
\fS(\bb):=\o_\bb(\eth_{\mu_1\cdots\mu_n}:\phi^2:(x))=c_n\del_{\bb^{\mu_1}}\ldots\del_{\bb^{\mu_n}}\left(\bb^2\right)^{-1},
\een
where the $c_n$ are some universal numerical constants \cite{BOR02}. This makes clear that the functions $\fS(\bb)$ can be constructed completely out of $\bb$.\footnote{This is also true in the massive case. However, in that case the thermal functions are given by a more involved expression which is analytic in $\bb$. Furthermore, they depend on the choice of a renormalization condition \cite{HW01}} and thus can be viewed as thermal functions corresponding to the micro-observables $s(q)$. Furthermore, due to the invariance of $\o_\bb$ under space-time translations, they are independent of $q$. Note that for odd $n$ the thermal functions are equal to $0$.

The definition of local thermal equilibrium in the sense of \cite{BOR02, Buc03} can now be stated in the case of the Klein-Gordon field as follows:
\begin{defi}
\label{defi:LTE}
 Let $N\in\IN$ and $\omega$ a Hadamard state on $\cA(\cM)$. We say that $\o$ is a \emph{local thermal equilibrium state of order} $N$ \emph{at $q\in\cM$ with sharp inverse temperature vector} $\bb$, or $[\bb,q,N]$-\emph{LTE state} for short, iff there exists a $\bb$-KMS state $\o_{\bb}$ on $\cA(\IM)$, such that
\ben
\o(s(q))=\o_{\bb}(s(q))\quad \forall s(q)\in \cS_q^n,\ n\leq N.
\label{eq:LTE}
\een
We will say that $\o$ is a $[\bb,q]$-LTE state iff it is a $[\bb,q,N]$-LTE state for all $N\in\IN$.
\end{defi}

Of particular interest is the space $\cS_q^2$ which contains (besides the unit $\eins$) two thermal observables which play a prominent role. The first one is $:\phi^2:(q)$, the \textit{Wick square} of $\phi$ at the point $q\in\IM$, defined by
\ben
:\phi^2:(q)=\lim\limits_{\z\to 0}\left[\phi(q+\z)\phi(q-\z)-\o_\text{vac}(\phi(q+\z)\phi(q-\z))\eins\right],
\label{eq:Wick}
\een
where the limit is taken along spacelike directions $z$. This observable is usually regarded as a ''thermometer observable``, which is due to the fact that its evaluation in a $\bb$-KMS state yields for the Klein-Gordon field with $m=0$:\footnote{In the massive case the expression $\o_\bb({:\phi^2:}(q))$ yields a more complicated function of $\b$ which is still monotonously decreasing in $\b$ and contains a possible renormalization freedom coming from $:\phi^2:$ \cite{HW01}.}
\ben
\o_{\bb}(:\phi^2:(q))=\frac{1}{12\b^2}=\frac{k_B^2}{12} T^2.
\een
The other thermal observable contained in $\cS_q^2$ is the second balanced derivative of $:\phi^2:(q)$, which is of special interest since its expectation values $\o_\bb(\eth_{\mu\nu}:\phi^2:(q))$ are, up to a constant, equal to the expectation values of the thermal stress-energy tensor $E_{\mu\nu}(\bb)$. It is a fundamental fact in special relativistic thermodynamics that all relevant macroscopic thermal parameters, in particular the entropy current density, for a (local) equilibrium state can be constructed once the components of $E_{\mu\nu}$ are known \cite[Chapter 4]{Dix78}. For increasing $n$ the spaces $\cS_q^n$ contain more and more elements. Thus, the $[\bb,q,N]$-LTE condition introduces a hierarchy among the local equilibrium states in the following sense: If we successively increase the order $N$ we get an increasingly finer resolution of the thermal properties of this state. For finite $N$ we thus obtain a measure of the deviation of the state $\o$ from complete local thermal equilibrium (which would amount to the case of a $[\bb,q]$-LTE state).

\section{KMS-like properties of sharp-temperature LTE states}

The $\bb$-KMS condition fulfilled by the comparison states $\o_\bb$ can be used as a starting point for an investigation of the analyticity properties of the two-point function $\o_2\in\cD'(\cM\times\cM)$ of a $[\bb,q,N]$-LTE state $\o$ on $\cA(\cM)$.
 We start by noticing that if a state $\o_\bb$ on $\cA(\IM)$ fulfills the $\bb$-KMS condition, then for fixed $f,g\in\cS(\IR^4)$ there exists a function $F_{f,g}$, analytic on $S_\b$, bounded and continuous on $\bar{S_\b}$ such that:
\begin{align}
 F_{f,g}(t)&=\o_\bb\left(\phi(f)\a^{(e)}_t\phi(g)\right),\\[1ex]
 F_{f,g}(t+i\b) &= \o_\bb \left(\a^{(e)}_t\phi(g)\phi(f)\right). 
\end{align}
The $\a^{(e)}_{t}$-invariance of $\o_\bb$, together with the group properties of $\a^{(e)}$, implies:
\begin{align}
 F_{f,g}(t)=\o_\bb\left(\a^{(e)}_{-\frac{t}{2}}\phi(f)\a^{(e)}_\frac{t}{2}\phi(g) \right)\\[1ex]
 F_{f,g}(t+i\b)=\o_\bb\left(\a^{(e)}_\frac{t}{2}\phi(g)\a^{(e)}_{-\frac{t}{2}}\phi(f) \right)
\end{align}

Let $q\in\cM$ arbitrary but fixed, $e\in V_+^1$ a time-direction. In the following, we will denote by $\cI_{q,e}$ an open interval around $0$ resp. by $\cU_q$ an open neighbourhood of $0\in\IR^4$ such that
\begin{align}
 \cI_{q,e}\subset\{t\in\IR:q\pm \frac{t}{2}e\in\cM\}\\[1ex]
\cU_q\subset\{z\in\IR^4:q\pm\frac{z}{2}\in\cM\}.
\end{align}
Now, let $\o$ be a Hadamard state on $\cA(\IM)$ and consider the ``relative-time-variable correlation function'' $f_q\colon\IR\supset \cI_{q,e} \to \IC$,  which is formally defined as
\ben
 f_{q,e}(t):= \o\left(\phi\left(q-\frac{t}{2}e\right)\phi\left(q+\frac{t}{2}e\right)\right),\quad t\in\cI_{q,e},
\label{eq:fq}
\een
where $e\in V_+^1$. Since $\o$ is a Hadamard state, $f_q$ can only be well-defined as a continuous function if restricted to $\cI_q\backslash\{0\}$. Similarly, consider the ``relative-variable correlation function'' $F_q\colon\IR^4\supset \cU_q\to\IC$, formally defined by
\ben
F_q(z ):=\o\left(\phi\left(q-\frac{z}{2}\right)\phi\left(q+\frac{z}{2}\right)\right),\quad z \in\cU_q.
\label{eq:Fq}
\een
This is a well-defined continuous function only as long as restricted to $\cU_q\backslash(\cU_q\cap(\del V_+\cup\del V_-)$, which resembles the restrictions which are met for the vacuum state $\o_\text{vac}$ of a Wightman quantum field theory \cite{SW00}. In this case the \emph{relativistic spectrum condition} plays a crucial role. By using arguments from complex function theory, one is led to the statement that the vacuum expectation values $\o_\text{vac}(\phi(x_1)\cdots\phi(x_n))$ are the \emph{distributional} boundary values of certain holomorphic functions. One could expect that similar statements also hold in the case of thermal equilibrium states. In fact, an axiomatic approach to thermal quantum field theory has been given in \cite{BB96}, wherein the spectrum condition is replaced by the (relativistic) KMS condition. In particular, one has the following relation between the Fourier transform of the  thermal two-point function $\o_2^\bb(x,y )\equiv\o_2^\bb(x-y)$ and that of the commutator function $E(x,y)\equiv E(x-y)$, introduced in eq. (\ref{eq:CCR}):
 \ben
\widehat{\o}_2^\bb(p)=\frac{i\hat{E}(p)}{1-e^{-\bb p}},
\label{eq:KMStwo}
\een
which has to be understood in the sense of distributions. This relation replaces the relation $\widehat{\o}_2^\text{vac}(p)=i\Th(p)\hat{E}(p)$, expressing the relativistic spectrum condition in energy-momentum space.

We first prove a lemma which shows that the (up to now, formal) expressions $f_q$ and $F_q$, defined by eqns. (\ref{eq:fq}) and (\ref{eq:Fq}) can be meaningfully defined in the sense of distributions. 

\begin{lemma}
\label{lemma:pullback}
 Let $\o$ be a Hadamard state on $\cA(\cM)$, $q\in\cM$ and $e\in V_+^1$. Introduce smooth maps $\chi_{q,e}:\cI_{q,e}\to \cM\times\cM$ and $\k_q:\cU_q\to\cM\times\cM$ via
\begin{align}
 \chi_{q,e}(t)&:=\left(q-\frac{t}{2}e,q+\frac{t}{2}e\right),\quad t\in \cI_{q,e},\label{eq:chiq}\\
\k_q(z)&:=\left(q-\frac{1}{2}z,q+\frac{1}{2}z\right),\quad z\in\cU_q.
 \label{eq:kappaq}
\end{align}
Then ${\sf u}_{q,e}:=\chi_{q,e}^* \o_2$ and ${\sf w}_q:=\k_q^* \o_2$, the pullbacks of $\o$ with respect to $\chi_{q,e}$ and $\k_q$, can be defined as distributions in $\cD'(\cI_{q,e})$ and $\cD'(\cU_q)$, such that
\begin{align}
 \WF({\sf u}_{q,e})&\subset\{(0,k)\in\IR^2:k>0\},\label{eq:WFu}\\
  \WF({\sf w}_q)&\subset\left\{(z,p)\in\IR^4\times \IR^4:z\in\del V_+\cup\del V_-,p\in\del V_+(q)\right\}.
  \label{eq:WFw}
\end{align}
If $\o$ is an analytic Hadamard state, the latter relations get replaced by
\begin{align}
  \WF_A({\sf u}_{q,e})&\subset\{(0,k)\in\IR^2:k>0\},\label{eq:WFAu}\\
  \WF_A({\sf w}_q)&\subset\left\{(z,p)\in\IR^4\times \IR^4:z^\mu z_\mu=0,p\in\partial V_+(q)\right\}.\label{eq:WFAw}
\end{align}
\end{lemma}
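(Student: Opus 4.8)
The tool I would use is Hörmander's theorem on pullbacks of distributions by smooth maps \cite{Hoe90}: if $f\colon X\to Y$ is smooth and $u\in\cD'(Y)$, then the pullback $f^*u\in\cD'(X)$ is well-defined as soon as the set of normals
\[
 N_f=\{(f(x),\eta)\in T^*Y\setminus\{0\}:\ {}^t(df_x)\,\eta=0\}
\]
is disjoint from $\WF(u)$, and in that case one has the bound
\[
 \WF(f^*u)\subset f^*\WF(u):=\{(x,{}^t(df_x)\,\eta):(f(x),\eta)\in\WF(u)\}.
\]
For \emph{real-analytic} $f$ the same statements hold verbatim with $\WF$ replaced by $\WF_A$. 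Since $\chi_{q,e}$ and $\k_q$ are affine-linear in $t$ resp. $z$, hence real-analytic, and since the neighbourhoods $\cI_{q,e}$, $\cU_q$ were chosen precisely so that their images lie in $\cM\times\cM$ where $\o_2\in\cD'(\cM\times\cM)$ lives, the whole proof reduces to checking the transversality condition $N_f\cap\WF(\o_2)=\emptyset$ and then evaluating $f^*\WF(\o_2)$; the analytic statements will follow from the identical computation applied to $\WF_A$.

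For $\chi_{q,e}$ the differential is $d\chi_{q,e}(\partial_t)=(-\tfrac12 e,\tfrac12 e)$, so its transpose sends $(k_1,k_2)\mapsto\tfrac12(k_2-k_1)_\mu e^\mu$ and $N_{\chi_{q,e}}$ consists of the covectors over $\chi_{q,e}(t)$ with $(k_2-k_1)_\mu e^\mu=0$. I would first note that the two entries of $\chi_{q,e}(t)=(q-\tfrac t2 e,q+\tfrac t2 e)$ are separated by the timelike vector $te$, so for $t\neq0$ they are timelike related and $\chi_{q,e}(t)$ carries \emph{no} element of $\WF(\o_2)$ at all by \eqref{eq:WFmink}. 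At $t=0$ the relevant covectors are $(k,-k)$ with $k$ future-directed null, for which $(k_2-k_1)_\mu e^\mu=-2\,k_\mu e^\mu$; since $k$ is future-directed null and $e\in V_+^1$ is future-directed timelike, $k_\mu e^\mu>0$, so this never vanishes. Hence $N_{\chi_{q,e}}\cap\WF(\o_2)=\emptyset$, ${\sf u}_{q,e}$ is a well-defined distribution, and the only contributions to $f^*\WF(\o_2)$ sit over $t=0$ with covector $\propto k_\mu e^\mu\neq0$; with the sign convention fixed by \eqref{eq:WFmink} this picks out the positive half-line, yielding \eqref{eq:WFu}, and \eqref{eq:WFAu} by the analytic pullback.

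The map $\k_q$ is treated the same way. Its differential sends $z\mapsto(-\tfrac12 z,\tfrac12 z)$, the transpose sends $(k_1,k_2)\mapsto\tfrac12(k_2-k_1)$, and $N_{\k_q}$ consists of nonzero covectors with $k_1=k_2$. On $\WF(\o_2)$ one has $(k_1,k_2)=(k,-k)$, for which $k_1=k_2$ forces $k=0$, which is excluded; thus the transversality holds and ${\sf w}_q$ is well-defined. For the wave-front bound, $\k_q(z)=(q-\tfrac z2,q+\tfrac z2)$ lies in the base of $\WF(\o_2)$ only when its two entries are null related, i.e. when their separation $z$ is lightlike, $z\in\del V_+\cup\del V_-$ (including $z=0$), and the pulled-back covector $\tfrac12(k_2-k_1)=-k$ is then proportional to a future-directed null covector. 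This gives \eqref{eq:WFw}; the analytic bound \eqref{eq:WFAw}, whose base condition $z^\mu z_\mu=0$ describes the same null set, follows identically.

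The one step that is more than routine is the analytic case. The $C^\infty$ pullback is immediate, but \eqref{eq:WFAu} and \eqref{eq:WFAw} require invoking the corresponding statement for the \emph{analytic} wave-front set, which is legitimate exactly because $\chi_{q,e}$ and $\k_q$ are affine and the transversality I verified rests only on the algebraic relation between $(k,-k)$ and the tangent directions $\pm\tfrac12 e$ (resp. $\pm\tfrac12\,\mathrm{id}$), a feature insensitive to the passage from $\WF$ to $\WF_A$. I expect the genuine obstacle to be confirming that the analytic-microlocal pullback theorem applies in precisely this form, and checking that the resulting covectors fall into the stated future cones once the Fourier-sign convention underlying \eqref{eq:WFAmink} is pinned down; everything else is a direct computation of transposes of affine maps.
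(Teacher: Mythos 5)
Your proposal is correct and follows essentially the same route as the paper's own proof in Appendix C: both invoke H\"ormander's pullback theorem (Thm.\ 8.2.4 for $\WF$, Thm.\ 8.5.1 for $\WF_A$), compute the transposed differentials of the affine maps $\chi_{q,e}$ and $\k_q$, and rule out intersection of the normal sets with $\WF(\o_2)$ via $p^\mu e_\mu\neq 0$ for future-pointing lightlike $p$ and the impossibility of $k=-k$ for $k\neq 0$. Your extra case split at $t=0$ versus $t\neq 0$ and your explicit flagging of the Fourier-sign convention behind the half-line in \eqref{eq:WFu} are, if anything, slightly more careful than the paper's argument, which handles all $t$ uniformly and leaves the sign bookkeeping implicit in the citation of H\"ormander.
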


\begin{proof}
 The proof can be found in Appendix C.
\end{proof}

\begin{defi}
 Let $q\in\cM$ and $\o$ a Hadamard state on $\cA(\cM)$. Then we call ${\sf u}_q\in\cD'(\cI_{q,e})$ the \emph{relative-time-variable two-point function of $\o$ at $q$} and ${\sf w}_q\in\cD'(\cU_q)$ the \emph{relative-variable two-point function of $\o$ at $q$.}
\end{defi}

If $\o$ is a Hadamard state on $\cA(\IM)$, i.e. $\o$ is globally defined, and $q\in\IM$ then $\cI_{q,e}=\IR$ for all $e\in V_+^1$, $\cU_q=\IR^4$, and we denote by $\fu_q\in\cD'(\IR)$ (resp. $\fw_q\in\cD'(\IR^4)$) the relative-time-variable (resp. relative-variable) two-point function of $\o$ at $q$. For a globally hyperbolic subregion $\cM$ of $\IM$, we denote by $\left.\o\right|_\cM$ the restriction of $\o$ to $\cA(\cM)$. Then $\left.\o\right|_\cM$ is a state on $\cA(\cM)$ which fulfills the (local) Hadamard condition, eq. \eqref{eq:WFmink}, and we denote by ${\sf u}_q\in\cD'(\cI_{q,e})$ (resp. ${\sf w}_q\in\cD'(\cU_q)$) the relative-time-variable (resp. relative-variable) two-point function of $\left.\o\right|_\cM$ at $q$. Clearly, ${\sf u}_q$ and ${\sf w}_q$ are the restrictions of $\fu_q$ and $\fw_q$ to $\cI_{q,e}$ and $\cU_q$, i.e. it holds
\begin{align}
{\sf u}_q(h)=\fu_q(h)\quad\forall h\in\cD(\cI_{q,e}),\\
{\sf w}_q(h)=\fw_q(h)\quad \forall h\in\cD(\cU_q).
\end{align} 

Now, if $q\in\IM$ and $\o_\bb$ is a $\bb$-KMS state on $\cA(\IM)$ with $\bb=\b e$,  we have $\fu_q^\bb\in\cS'(\IR)$ and $\fw_q^\bb\in\cS'(\IR^4)$, since the two-point function $\o_2^\bb$ is tempered as well. Furthermore, due to the spacetime-translation invariance of $\o_\bb$, the relative-variable two-point function $\fw_q^\bb$ is independent of the choice of $q\in\IM$ and the same is true for $\fu_q$, i.e. $\fw_q^\bb\equiv\fw_\bb$ and $\fu_q^\bb\equiv\fu_\bb$ for all $q\in\IM$. In fact, for every Hadamard state $\o$ on $\cA(\IM)$ with $\o\circ\t_a=\o$ one has
\ben
\o_2(x,y)=\o_2(0,y-x)=\o_2(x-y,0),
\een
and setting $z :=x-y$ we find
\ben
\fw_q(z )=\o_2\left(q-\frac{z}{2},q+\frac{z}{2}\right)\equiv\fw(z),
\een
to be understood in the sense of distributions. Thus, for translation-invariant states $\o$ on $\cA(\IM)$  we will call ${\fw}$ just the \emph{relative-variable two-point function of $\o$} and it still contains all relevant information on the state (at least for quasifree states). In contrast, a generic Hadamard state $\o$ on $\cA(\cM)$ will not be invariant under spacetime translations and so we have, in general, 
\ben
\o_2(x,y)={\sf w}_{\frac{1}{2}(x+y)}(x-y),\quad x,y\in\cM
\een
or, expressed formally with $q=\frac{1}{2}(x+y)$ and $z =x-y$:
\ben
{\sf w}_q(z )=\o_2\left(q-\frac{z}{2},q+\frac{z}{2}\right),\quad z\in \cU_q.
\een
Thus, in order to know the two-point function $\o_2$ everywhere, one needs to know the  relative-variable two-point functions ${\sf w}_q$ \emph{for all} $q\in\cM$. Thence , there is the possibility of ascribing certain \emph{local} properties (e.g. of thermal nature) to a state in a spacetime region $\cO\subset\cM$ by giving respective conditions only on the relative-variable two-point functions ${\sf w}_q$ with $q\in\cO$.

After these preparations we will now show that for the class of $[\bb,\cO,N]$-states, introduced in Definition \ref{defi:LTE}, one obtains a point-wise remnant of the KMS condition, both in position space and in in momentum space. 

\subsubsection*{Sharp-temperature LTE states of infinite order}

We first show that for sufficiently regular states the $[\bb,q]$-LTE condition of Definition \ref{defi:LTE} can be expressed in terms of the relative-variable two-point function ${\sf w}_q\in\cD'(\cU_q)$, where, as before, $\cU_q$ is an open neighbourhood of $0\in\IR^4$ such that $\cU_q\subset\{z\in\IR^4:q\pm z/2\in\cM\}$.

\begin{lemma}
\label{lemma:LTEequiv}
 Let $q\in\cM$. A Hadamard state $\o$ on $\cA(\cM)$ is a $[\bb,q]$-LTE state if and only if there is a $\bb$-KMS state $\o_{\bb}$ on $\cA(\IM)$, such that 
\ben
[\del_\va ({\sf w}_{\bb}-{\sf w}_q)](0)=0 \quad\forall \va\in\IN_0^4,
\een 
where ${\sf w}_\bb$ is the relative-variable two-point function of $\left.\o_\bb\right|_\cM$.
\end{lemma}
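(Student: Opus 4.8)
The plan is to translate the $[\bb,q]$-LTE condition --- which is phrased through the balanced derivatives $\eth_{\mu_1\ldots\mu_n}:\phi^2:(q)$ --- into a statement about the Taylor jet of ${\sf w}_q$ at the origin, and then to use the Hadamard property to make the a priori meaningless ``value of a $\del_\va$-derivative at $0$'' well defined. First I would note that, as $n$ and the index string $(\mu_1,\ldots,\mu_n)$ range freely, the operators $\del_{z^{\mu_1}}\cdots\del_{z^{\mu_n}}$ exhaust all multi-index derivatives $\del_\va$, $\va\in\IN_0^4$ (the case $n=0$ giving the Wick square $:\phi^2:(q)$ itself). Hence demanding $\o(s(q))=\o_\bb(s(q))$ for all $s(q)\in\cS_q^n$ and all $n$ is the same as demanding equality of all derivatives at $z=0$ of $z\mapsto\o(\phi(q+z)\phi(q-z))-\o_\text{vac}(\phi(q+z)\phi(q-z))$ with the corresponding quantity for $\o_\bb$. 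Since the map $z\mapsto(q+z,q-z)$ underlying the balanced derivatives is obtained from $\k_q$ of \eqref{eq:kappaq} by the invertible linear rescaling $z\mapsto-2z$ of the relative coordinate, derivatives at the origin of these two parametrizations are related by the nonvanishing factors $(-2)^{|\va|}$; thus vanishing of all derivatives of the one is equivalent to vanishing of all derivatives of the other.

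The decisive point is that ${\sf w}_q$ and ${\sf w}_\bb$ are individually only distributions, singular over $z=0$ by Lemma \ref{lemma:pullback}, so a literal $[\del_\va{\sf w}_q](0)$ is meaningless; everything becomes classical only after the vacuum subtraction. Here I would invoke the Hadamard property of both $\o$ and $\o_\bb$: by definition $\o_2-\o_2^\text{vac}$ and $\o_2^\bb-\o_2^\text{vac}$ are smooth on $\cM\times\cM$, and the pullback of a smooth function along the smooth map $\k_q$ (equivalently along $z\mapsto(q+z,q-z)$) is again smooth. Consequently $z\mapsto\o(\phi(q+z)\phi(q-z))-\o_\text{vac}(\phi(q+z)\phi(q-z))$ is smooth near $0$, so the limit along spacelike directions in the definition \eqref{eq:bderiv} of $\eth_{\mu_1\ldots\mu_n}:\phi^2:(q)$ is simply the ordinary partial derivative at $z=0$ of this smooth function, and likewise with $\o$ replaced by $\o_\bb$. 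Taking the difference, the common vacuum jet cancels and what remains is ${\sf w}_q-{\sf w}_\bb=({\sf w}_q-\k_q^*\o_2^\text{vac})-({\sf w}_\bb-\k_q^*\o_2^\text{vac})$, which is therefore smooth near $0$, so that $[\del_\va({\sf w}_\bb-{\sf w}_q)](0)$ is a genuine number for every $\va$.

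Assembling these observations gives the equivalence directly. By Definition \ref{defi:LTE}, $\o$ is a $[\bb,q]$-LTE state iff $\o(s(q))=\o_\bb(s(q))$ for every $s(q)\in\cS_q^n$ and every $n$; by the first paragraph this is the equality of all $\del_\va$-derivatives at $0$ of the two vacuum-subtracted correlations, and by the second paragraph these derivatives are the classical Taylor coefficients of smooth functions. The vacuum contributions cancel in the difference, leaving exactly $[\del_\va({\sf w}_\bb-{\sf w}_q)](0)=0$ for all $\va\in\IN_0^4$, as claimed. I expect the main obstacle to lie precisely in the handling of the singularity at $z=0$: one must argue carefully that the spacelike-limit prescription defining $\eth$, when applied to the vacuum-subtracted correlation, really does reproduce the ordinary Taylor coefficients, and that this step is legitimate only because the Hadamard condition removes the singularity in the difference ${\sf w}_q-{\sf w}_\bb$. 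The remaining ingredients --- the linear change of the relative coordinate and the cancellation of the vacuum jet --- are routine.
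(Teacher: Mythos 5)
Your proposal is correct in substance and ends at the same reduction as the paper: the identity \eqref{eq:bderiv2} translating expectation values of balanced derivatives into derivatives of ${\sf w}_q-{\sf w}_\text{vac}$ at $0$, the harmless factor $(-2)^{\abs{\va}}$ from rescaling the relative coordinate, smoothness of the vacuum-subtracted correlations by the Hadamard property of $\o$ and $\o_\bb$ (so that $[\del_\va({\sf w}_\bb-{\sf w}_q)](0)$ is a genuine number), and cancellation of the vacuum jet in the difference; your converse direction coincides with the paper's. Where you genuinely deviate is in the forward direction. The paper does not grant itself all tensor components of the balanced derivatives at once: because the limit in \eqref{eq:bderiv} is taken along spacelike directions, it extracts from the $[\bb,q]$-LTE condition only the contractions
\begin{equation}
v_1^{\mu_1}\cdots v_n^{\mu_n}\,[\del_{z^{\mu_1}}\cdots\del_{z^{\mu_n}}({\sf w}_\bb-{\sf w}_q)](0)=0,\qquad v_1,\ldots,v_n\notin\overline{V_+}\cup\overline{V_-},
\end{equation}
and then upgrades to all multi-indices $\va\in\IN_0^4$ via Lemma \ref{lemma:symtens} (a symmetric $\binom{0}{n}$-tensor is determined by its diagonal values on spacelike vectors), which in turn rests on the polarization identity of Lemma \ref{lemma:polid} and the openness of the cones. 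You instead argue that each component $\eth_{\mu_1\ldots\mu_n}{:}\phi^2{:}(q)$ is itself a thermal observable and that, by smoothness, the spacelike limit of any coordinate partial derivative equals the full Taylor coefficient at $0$. That is defensible under the literal reading of \eqref{eq:bderiv} (free indices; all coordinate partials make sense on the open spacelike cone) and renders the polarization machinery superfluous, which is an honest simplification. What the paper's route buys is robustness against the stricter reading — made explicit in the introduction around \eqref{diffLTE1}, where it is stated that the LTE condition yields derivative information for spacelike $z$ which ``can be extended to timelike vectors $z$ by applying a polarization formula'' — under which only derivatives tied to spacelike directions are certified by the LTE condition. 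On that reading, your opening claim that the balanced derivatives ``exhaust all $\del_\va$'' is precisely the point at issue rather than a routine observation, and the polarization step you set aside is the bridge from spacelike to arbitrary (in particular timelike) directions. So: either state your component-wise reading of $\cS_q^n$ explicitly and your proof stands, or insert Lemma \ref{lemma:symtens} at that juncture, as the paper does.
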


\begin{proof}
In view of the definition (\ref{eq:bderiv}) of the Wick square and its balanced derivatives we have for any Hadamard state $\o$ on $\cA(\cM)$:
\ben
\o(\eth_{\mu_1\ldots\mu_n}:\phi^2:(q))=[\del_{z ^{\mu_1}}\ldots\del_{z ^{\mu_n}}({\sf w}_q-{\sf w}_\text{vac})](0)\quad \forall n\in\IN_0,
\label{eq:bderiv2}
\een
where ${\sf w}_\text{vac}\in\cD'(\cU_q)$ is the relative-variable two-point function of $\left.\o_\text{vac}\right|_\cM$, the restriction to $\cM$ of the unique vacuum state $\o_\text{vac}$ on $\cA(\IM)$. 
${}$\\
\noindent Now let $\o$ be a $[\bb,q]$-LTE state. According to (\ref{eq:bderiv2}) this implies that there is a $\bb$-KMS state $\o_\bb$ on $\cA(\IM)$ such that
\ben
v_1^{\mu_1}\ldots v_n^{\mu_n}[(\del_{z ^{\mu_1}}\ldots\del_{z ^{\mu_n}}({\sf w}_{\bb}-{\sf w}_q))(0)]=0\quad\forall n\in\IN_0;\ v_1,\ldots,v_n\notin \overline{V_+}\cup\overline{V_-}.
\een
Applying Lemma \ref{lemma:symtens} to the symmetric $\binom{0}{n}$-tensors $$\fR_{\mu_1\cdots\mu_n}:=[\del_{z ^{\mu_1}}\ldots\del_{z ^{\mu_n}}({\sf w}_{\bb}-{\sf w}_q)](0),\ n\in\IN_0,$$ it follows that all directional derivatives of ${\sf w}_{\bb}-{\sf w}_q$ vanish at $z =0$.
${}$\\
\noindent Assume conversely that there is a $\bb$-KMS state $\o_\bb$ on $\cA(\IM)$ such that
\ben
[\del_\va ({\sf w}_{\bb}-{\sf w}_q)](0)=0 \quad\forall \va\in\IN_0^4.
\een
In particular we have
\ben
[\del_{z ^\mu_1}\ldots\del_{z ^{\mu_n}}({\sf w}_q-{\sf w}_\text{vac})](0)=[\del_{z ^{\mu_1}}\ldots\del_{z ^{\mu_n}}({\sf w}_{\bb}-{\sf w}_\text{vac})](0)\quad \forall n\in\IN_0,
\een
which implies the ${[\bb,q]}$-LTE condition by (\ref{eq:bderiv2}). This completes the proof.
\end{proof}

\begin{coro}
\label{coro:LTEequiv}
 Let $q\in\cM$. An analytic Hadamard state on $\cA(\cM)$ fulfills the $[\bb,q]$-LTE condition if and only if there exists a $\bb$-KMS state $\o_\bb$ on $\cA(\IM)$ such that $\fw_\bb\in\cS'(\IR^4)$ is the unique extension of the relative-variable two-point function ${\sf w}_q\in\cD'(\IR^4)$ of $\o$ at $q$.
\end{coro}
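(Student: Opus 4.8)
The plan is to promote Lemma \ref{lemma:LTEequiv} from a statement about all derivatives at the origin to a statement about identity of distributions, the extra leverage being real-analyticity. The decisive input is that, for an analytic Hadamard state, the relative-variable two-point function agrees with its vacuum counterpart up to a real-analytic function. Indeed, by the remark following \eqref{eq:WFAmink} one has $\o_2-\left.\o_2^\text{vac}\right|_{\cM\times\cM}\in C^A(\cM\times\cM)$, and since the map $\k_q$ of \eqref{eq:kappaq} is affine (hence real-analytic), the pullback ${\sf w}_q-{\sf w}_\text{vac}=\k_q^*(\o_2-\o_2^\text{vac})$ belongs to $C^A(\cU_q)$; the analytic wave-front estimate \eqref{eq:WFAw} of Lemma \ref{lemma:pullback} encodes the same fact microlocally. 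The comparison state enjoys the analogous property on all of $\IR^4$: a $\bb$-KMS state of the free field satisfies the relativistic KMS condition of Bros and Buchholz \cite{BB94,BB96}, from which $\fw_\bb-{\sf w}_\text{vac}\in C^A(\IR^4)$. Subtracting, the difference
\begin{align}
{\sf w}_\bb-{\sf w}_q=(\fw_\bb-{\sf w}_\text{vac})-({\sf w}_q-{\sf w}_\text{vac})
\end{align}
is real-analytic on a connected neighbourhood contained in $\cU_q$.

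For the forward implication I would start from Lemma \ref{lemma:LTEequiv}: if $\o$ is a $[\bb,q]$-LTE state there is a $\bb$-KMS state with $[\del_\va({\sf w}_\bb-{\sf w}_q)](0)=0$ for all $\va\in\IN_0^4$. Since ${\sf w}_\bb-{\sf w}_q$ is real-analytic near $0$ and all its Taylor coefficients at $0$ vanish, it must vanish identically on the connected neighbourhood, i.e. ${\sf w}_q=\left.\fw_\bb\right|_{\cU_q}$. Thus the globally defined tempered distribution $\fw_\bb\in\cS'(\IR^4)$ is an extension of ${\sf w}_q$.

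Uniqueness of this extension I would obtain from the identity theorem for real-analytic functions. If $\fw_{\bb'}$ were another extension of ${\sf w}_q$ coming from a $\bb'$-KMS state, then $\fw_\bb-{\sf w}_\text{vac}$ and $\fw_{\bb'}-{\sf w}_\text{vac}$ are real-analytic on the connected set $\IR^4$ and coincide on the open set $\cU_q$; hence they agree everywhere, so $\fw_\bb=\fw_{\bb'}$, and by non-degeneracy of the set $\cC_\bb$ of $\bb$-KMS states also $\bb=\bb'$. The converse is immediate: if $\fw_\bb$ restricts to ${\sf w}_q$ on $\cU_q$, then ${\sf w}_\bb-{\sf w}_q$ vanishes on $\cU_q$, so all its derivatives vanish at $0$, and Lemma \ref{lemma:LTEequiv} returns the $[\bb,q]$-LTE property.

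The only genuinely non-formal step --- and the expected main obstacle --- is justifying the real-analyticity of ${\sf w}_\bb-{\sf w}_q$ through the origin, which lies on the light cone where \eqref{eq:WFAw} still permits singular codirections; this is precisely where the analytic microlocal content of Lemma \ref{lemma:pullback} for $\o$ and the Bros--Buchholz analyticity of the thermal two-point function for $\o_\bb$ enter, the two vacuum singularities cancelling in the difference. Once this is secured, the passage from vanishing derivatives at a point to vanishing on a neighbourhood, and the uniqueness via analytic continuation, are routine.
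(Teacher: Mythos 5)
Your proof is correct and takes essentially the same route as the paper's: both reduce to Lemma \ref{lemma:LTEequiv} (vanishing of all derivatives of ${\sf w}_\bb-{\sf w}_q$ at the origin), use the analytic Hadamard property of $\o$ and $\o_\bb$ to conclude that this difference is real-analytic and hence vanishes identically on $\cU_q$, and then obtain $\fw_\bb$ as the unique extension of ${\sf w}_q$ by analytic continuation. Your write-up merely makes explicit what the paper's terse proof leaves implicit --- the decomposition through $\o_2^{\text{vac}}$, the cancellation of the vacuum singularities on the light cone, and the Bros--Buchholz input guaranteeing the analytic Hadamard property of the KMS comparison state.
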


\begin{proof}
Assume first that ${\sf w}_q$ has the unique extension $\fw_q=\fw_\bb\in\cS'(\IR^4)$. Then restriction to $\cU_q\subset\IR^4$ yields ${\sf w}_q={\sf w}_\bb$ in the sense of distributions which implies the $[\bb,q]$-LTE condition by Lemma \ref{lemma:LTEequiv}.

Assume conversely that $\o$ is a $[\bb,q]$-LTE state. Since  $\o$ and $\o_\bb$ fulfill the analytic Hadamard condition, Lemma \ref{lemma:LTEequiv} implies that ${\sf w}_q={\sf w}_\bb$ in the sense of distributions. Using again the analytic Hadamard property of $\o$ and $\o_\bb$ it follows that  $\fw_q=\fw_\bb$ is the unique  extension (by analyticity) of ${\sf w}_q$ and the proof is complete.
\end{proof}

Note that by the time-clustering property of $\o_{\bb}$, eq. (\ref{eq:clust}), this also implies that
\ben
\fw_q(te)=\fw_{\bb}(te)=\o_2^{\bb}(q,q+t e)\xrightarrow[\abs{t}\to\infty]{} 0,
\een
which means that $\o$ fulfills a ``localized" version of the time-clustering property.

The next Proposition shows that if $\o$ is a $[\bb,q]$-LTE state on $\cA(\IM)$, where $\bb=\b e$ with $\b>0$ and $e\in V_+^1$,  the relative-time-variable two-point function ${\sf u}_q\in\cD'(\cI_{q,e})$, defined in Lemma \ref{lemma:pullback}, has certain analyticity and periodicity properties which are inherited from the $\bb$-KMS condition fulfilled by the comparison state $\o_\bb$.  Furthermore, these properties can be expressed as a remnant of the $\bb$-KMS condition in momentum space.

\begin{prop}
\label{prop:KMSrem1}
  Let $q\in\cM$, let $\cI_{q,e}$ be an open interval around $0\in\IR$ with $\cI_{q,e}\subset\{t\in\IR:q\pm te/2 \in\cM\}$,  and let $\o$ be an analytic Hadamard state on $\cA(\cM)$ which fulfills the $[\bb,q]$-LTE condition. Then the following hold:
\begin{itemize}  
 \item[(i)] There exists a complex function $f_q$ with the following properties:
\begin{itemize}
\item[(a)] $f_q$ is defined and holomorphic on $S_{q}=\{\z\in\IC:0<\Im\z<\b\}$.
\item[(b)] For all compact $K\subset(0,\b)$ there exist constants $C_K>0$ and $N_K\in\IN_0$ such that
\ben
\abs{f_{q}(\z)}\leq C_K (1+\abs{\z})^{N_K},\quad \forall \Im \z\in K.
\een
\item[(c)] We have:
\begin{align}
 \int dt f_q(t+i\s )h(t)&\xrightarrow[\s\to 0^+]{}{\sf u}_q(h)\quad \forall h\in\cD(\cI_{q,e}),\\[1ex]
 \int dt f_q(t+i(\b-\eta))h(t)&\xrightarrow[\eta\to 0^+]{}{\sf u}_q(h^-)\quad \forall h\in\cD(\cI_{q,e}),
\end{align}
where $h^-(t):=h(-t)$.
\end{itemize} 
\item[(ii)] The relative time-variable two-point function ${\sf u}_q$ of $\o$ at $q$ has a unique extension $\fu_q\in\cS'(\IR)$ such that 
\ben
e^{\b k}\ufou_q(-k)=\ufou_q(k).
\label{eq:KMSrem1fou}
\een
in the sense of distributions.
\end{itemize}
\end{prop}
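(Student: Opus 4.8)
The plan is to reduce every assertion to the two-point function of the comparison KMS state $\o_\bb$, for which analyticity in a strip, the boundary values, and the momentum-space relation are either immediate from the KMS condition or from the explicit form \eqref{eq:KMStwo}.

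First I would observe that the relative-time-variable two-point function is the restriction of the relative-variable one to the line $\IR e$: since $\chi_{q,e}(t)=\k_q(te)$, the pullback ${\sf u}_q=\chi_{q,e}^*\o_2$ is the pullback of ${\sf w}_q=\k_q^*\o_2$ along $t\mapsto te$, which is well defined by the analytic wavefront bounds \eqref{eq:WFAu}--\eqref{eq:WFAw} of Lemma \ref{lemma:pullback}. Because $\o$ is an analytic Hadamard state obeying the $[\bb,q]$-LTE condition, Corollary \ref{coro:LTEequiv} gives ${\sf w}_q={\sf w}_\bb$ together with the unique tempered extension $\fw_\bb$; restricting to $\IR e$ then shows that ${\sf u}_q$ has the unique tempered extension $\fu_q=\fu_\bb\in\cS'(\IR)$, the relative-time two-point function of $\o_\bb$. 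This already establishes the existence and uniqueness claim of part (ii), and it reduces part (i) and the relation \eqref{eq:KMSrem1fou} to assertions about the explicitly known distribution $\fu_\bb$.

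For part (i) I would take $f_q$ to be the KMS analytic function of $\o_\bb$ in the symmetric configuration: for fixed test functions the KMS condition furnishes $F_{f,g}$, holomorphic on $S_q=\{0<\Im\z<\b\}$, with boundary values $\o_\bb(\a^{(e)}_{-t/2}\phi(f)\a^{(e)}_{t/2}\phi(g))$ at $\Im\z=0$ and $\o_\bb(\a^{(e)}_{t/2}\phi(g)\a^{(e)}_{-t/2}\phi(f))$ at $\Im\z=\b$. Localising both arguments at $q$ sends these boundary data to $\fu_q(t)$ and to $\fu_q(-t)$, respectively, so the localised limit is the sought $f_q$ with the boundary behaviour (c), $h^-$ arising precisely from the reversal $t\mapsto-t$ of the two edges. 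The momentum representation \eqref{eq:KMStwo} then yields the remaining analytic data: slicing the mass-shell density $i\hat E(p)=2\pi\,\eps(p_0)\d(p^2-m^2)$ along $p\cdot e=k$ and using $\bb\cdot p=\b k$ on this slice gives $\ufou_q(k)=\rho(k)/(1-e^{-\b k})$, where $\rho$ is the slice of $i\hat E$. Since $E$ is antisymmetric, $\rho$ is odd, and the one-line computation $e^{\b k}\ufou_q(-k)=e^{\b k}(-\rho(k))/(1-e^{\b k})=\rho(k)/(1-e^{-\b k})=\ufou_q(k)$ proves \eqref{eq:KMSrem1fou}. The polynomial bound (b) follows from the same representation written as $f_q(\z)=\tfrac1{2\pi}\int e^{-ik\z}\ufou_q(k)\,dk$: the Boltzmann weight forces exponential decay of $\ufou_q$ on one half-line, and the strip width $\b$ is exactly matched to that decay rate, so that on any compact $K\subset(0,\b)$ the tempered density $\rho$ leaves an estimate $\abs{f_q(\z)}\le C_K(1+\abs{\z})^{N_K}$.

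The main obstacle is the corner $t=0$. For fixed test functions $F_{f,g}$ is bounded and continuous on $\bar S_q$, but in the localised limit $f,g\to\d_q$ a singularity develops at $t=0$, where $f_q$ no longer extends continuously to $\bar S_q$ and only the polynomial bounds (b) survive; likewise the boundary limits in (c) can hold only in $\cD'(\cI_{q,e})$, not pointwise. Controlling this is exactly what the analytic-wavefront input $\WF_A({\sf u}_q)\subset\{(0,k):k>0\}$ of Lemma \ref{lemma:pullback} provides: it localises the singularity at $t=0$ with purely positive frequencies, which both fixes the $i0$-prescription making the quotient $\rho(k)/(1-e^{-\b k})$ a well-defined tempered distribution near $k=0$ and guarantees that the holomorphic continuation to the full strip exists with distributional boundary values. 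I expect the careful bookkeeping of this $t=0$ singularity, and the justification of the boundary limits at the distributional level, to be the technical heart of the argument.
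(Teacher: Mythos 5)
Your proposal is correct, and its effective mechanism coincides with the paper's, but the framing differs in two respects worth recording. The reduction you start from is exactly the paper's: Corollary \ref{coro:LTEequiv} pins down $\fu_q=\fu_\bb$, and the mass-shell slicing you sketch is carried out in Appendix B (eqs.\ \eqref{eq:KMSpullm}--\eqref{eq:KMSpull}), giving $\ufou_q(-k)=i\widehat{\fE}(k)/(e^{\b k}-1)$ with $\widehat{\fE}$ odd, whence (ii). For (i), however, the paper never localizes the KMS functions $F_{f,g}$ in the limit $f,g\to\d_q$ --- the route you name first and correctly flag as unjustified. Instead it derives (i) \emph{from} (ii) abstractly: since $e^{\b k}\ufou_q(-k)=\ufou_q(k)\in\cS'(\IR)$, the set $\g_q=\{\s\in\IR:e^{\s k}\ufou_q(-k)\in\cS'(\IR)\}$ contains $0$ and $\b$ and is convex \cite{SW00}, so $e^{\s k}\ufou_q(-k)$ is tempered for all $\s\in[0,\b]$, and H\"ormander's Thm.\ 7.4.2 then delivers in one stroke the holomorphic function $f_q(t+i\s)=\cF[e^{\s k}\ufou_q(-k)](t)$ on $S_q$ \emph{with} the polynomial bounds (b); the boundary values (c) follow from weak continuity of $\cF$ on $\cS'$, the upper edge producing $\fu_q(-t)$ precisely via \eqref{eq:KMSrem1fou}, followed by restriction to $\cD(\cI_{q,e})$. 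This is the rigorous form of your ``Boltzmann weight matched to strip width'' argument, and it shows that the ``corner at $t=0$'' you anticipate as the technical heart simply never arises: no continuity up to $\bar{S}_q$ is asserted, all limits are distributional from the outset. Your related worry about an $i0$-prescription at $k=0$ is likewise misplaced in this direction of the argument: the quotient is regular there (for $m=0$ the odd density $\propto k$ cancels the zero of $1-e^{-\b k}$; for $m>0$ the density vanishes on $(-m,m)$), and the $c\,\d(p)$ ambiguity matters only in the converse implication (Theorem \ref{theorem:LKMS}, (iii)$\Rightarrow$(i)), where it is removed by the clustering condition --- here $\fu_q$ is already fixed equal to $\fu_\bb$, so nothing needs to be solved for.
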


\begin{proof}
By Corollary \ref{coro:LTEequiv} the $[\bb,q]$-LTE condition in particular implies that the relative-time-variable two-point function ${\sf u}_q\in\cD'(\cI_{q,e})$ has the unique extension $\fu_q=\fu_\bb\in\cS'(\IR^4)$, which by eq. (\ref{eq:KMSpull}) in Appendix B yields
\ben
\ufou_q(-k)=\ufou_{\bb}(-k)=\frac{i\widehat{\fE}(k)}{e^{\b k}-1},
\een
where $\widehat{\fE}(k)$ is given for  $m=0$ and $m>0$ by eq. (\ref{eq:commpull0}) and eq. (\ref{eq:commpullm}) in Appendix B. This immediately implies $(ii)$.
Now, defining $\g_q:=\{\s\in\IR:e^{\s k}\ufou_q(-k)\in\cS'(\IR)\}$, we have $0\in\g_q$ and $\b\in\g_q$, since $e^{\b k}\ufou_q(-k)=\ufou_q(k)\in \cS'(\IR)$. The set $\g_q$ is convex \cite[Thm. 2-5]{SW00} and it follows that $\l\b\in\g_q$ for all $\l\in[0,1]$. In other words, $e^{\s k}\ufou_q(-k)\in\cS'(\IR)$ for all $\s \in[0,\b]$. From the first part of Thm 7.4.2 in \cite{Hoe90} it then follows that there exists an holomorphic function $f_q:S_q\rightarrow\IC$ such that $f_q(t+i\s)=\cF[e^{\s k}\ufou_q(-k)](t)$ and property (ii) above holds. This immediately implies that for every fixed $\s_0\in(0,\b)$ we have $f_q(\Bigcdot+i\s_0)\in\cS^\prime(\IR)$. Since the Fourier transform $\cF:\cS'(\IR) \rightarrow\cS'(\IR)$ is weakly continuous and we know that $e^{\s k}\ufou_q(-k) \xrightarrow[\s\to 0^+]{} \ufou_q(-k)$ in $\cS'(\IR)$, we find
\ben
f_q(t+i\s)=\cF[e^{\s k}\ufou_q(-k)] \xrightarrow[\s\to 0^+]{} \cF[\ufou_q(-k)]=\fu_q(t)
\een
in the sense of distributions. Now consider the function $g_q$, defined by $g_q(t+i\eta):=f_q(t+i(\b-\eta))$, which is as well defined and holomorphic on $S_q$. Then analogous arguments as for $f_q$, together with (\ref{eq:KMSfou}) yield
\ben
 g_q(t+i\eta)=\cF[e^{(\b-\eta)k}\ufou_q(-k)] \xrightarrow[\eta\to 0^+]{} \cF[e^{\b k}\ufou_q(-k)]=\cF[\ufou_q(k)]=\fu_q(-t)
\een
in the sense of distributions. Restriction to test functions $h\in\cD(\cI_{q,e})$ completes the proof of $(ii)\Rightarrow (i)$ and thus the proof of the Proposition.

\end{proof}

It is immediately clear that properties $(i)$ and $(ii)$ in Prop. \ref{prop:KMSrem1} are not equivalent to the $[\bb,q]$-LTE condition. Namely, these (equivalent) conditions are too weak to imply that the relative-variable two-point function ${\sf w}_q\in\cD'(\cU_q)$ has the unique extension $\fw_q=\fw_\bb\in\cS'(\IR^4)$. In this respect, the $[\bb,q]$-LTE condition seems to impose stronger constraints on the relative variable two-point function.
The next lemma shows that the (relative-variable) two-point function $\fw_\bb$ of a $\bb$-KMS state $\o_\bb$ has in fact stronger analyticity properties than those imposed by Proposition \ref{prop:KMSrem1}. Those are remnants of the relativistic KMS condition of Buchholz and Bros \cite{BB94}.

\begin{lemma}
\label{lemma:KMSrem4}
 Let $\o_\bb$ be a $\bb$-KMS state on $\cA(\IM)$. Then there exists a complex function $F_\bb$ with the following properties:
\begin{itemize}
 \item[(i)] $F_\bb$ is defined and holomorphic in the (flat) tube $\cT_\bb=\{\zeta\in\IC^4:\Im \z=\s e, 0<\s<\b\}$.
\item[(ii)] For all compact $K\subset(0,\b)$ there exist constants $C_K>0$ and $N_K\in\IN_0$ such that
\ben
\abs{F_\bb(z +i\s e)}\leq C_K(1+\abs{z +i\s e})^{N_K},\quad z \in\IR^4,\s\in K.
\een  
\item[(iii)] We have, in the sense of distributions:
\begin{align}
 F_\bb(z +i\s e)&\xrightarrow[\s \to 0^+]{} \fw_\bb(z ),\\
 F_\bb(z +i(\b-\eta) e)&\xrightarrow[\eta \to 0^+]{} \fw_\bb(-z ),
\end{align}
where $\fw_\bb\in\cS'(\IR^4)$ is the relative-variable two-point function of $\o_\bb$.
\end{itemize}
\end{lemma}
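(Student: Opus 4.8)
The plan is to carry out, in all four relative coordinates simultaneously, the momentum-space construction already used in the proof of Proposition~\ref{prop:KMSrem1}, with the scalar frequency $k$ there replaced by the Lorentz-invariant frequency $e\cdot p$. Since $\o_\bb$ is translation invariant its relative-variable two-point function $\fw_\bb\in\cS'(\IR^4)$ is tempered, and by the momentum-space form of the KMS condition \eqref{eq:KMStwo} its Fourier transform is
\[
  \wfou_\bb(p)=\frac{i\widehat{\fE}(p)}{1-e^{-\bb\cdot p}},
\]
a tempered distribution supported on the mass shell $\{p^2=m^2\}$ (up to the convention fixing the sign of the relative coordinate). Writing $\bb=\b e$, so that $\bb\cdot p=\b\,(e\cdot p)$, I would define the candidate function by
\[
  F_\bb(z+i\s e):=\cF^{-1}\!\big[e^{-\s(e\cdot p)}\,\wfou_\bb(p)\big](z),
\]
the point being that the complex substitution $z\mapsto z+i\s e$ in the oscillatory factor $e^{iz\cdot p}$ produces exactly the damping factor $e^{-\s(e\cdot p)}$. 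Holomorphy of $F_\bb$ on the flat tube $\cT_\bb$ and the polynomial bounds (ii) then reduce to temperedness of the family $e^{-\s(e\cdot p)}\wfou_\bb$ for $\s$ in the closed interval $[0,\b]$.

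Second, I would show that the set $\Gamma_\bb:=\{\s\in\IR:e^{-\s(e\cdot p)}\wfou_\bb(p)\in\cS'(\IR^4)\}$ contains $[0,\b]$, exactly as the corresponding one-dimensional set was handled in Proposition~\ref{prop:KMSrem1}. Clearly $0\in\Gamma_\bb$. The endpoint $\b\in\Gamma_\bb$ follows from the antisymmetry $\widehat{\fE}(-p)=-\widehat{\fE}(p)$ of the commutator, which turns the Bose factor inside out and yields the momentum-space KMS identity
\[
  e^{-\b(e\cdot p)}\,\wfou_\bb(p)=\wfou_\bb(-p)\in\cS'(\IR^4);
\]
this is the four-variable analogue of \eqref{eq:KMSrem1fou}. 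Convexity of $\Gamma_\bb$ (by the tube argument of \cite[Thm.~2-5]{SW00}, as already invoked in Proposition~\ref{prop:KMSrem1}) then gives $[0,\b]\subseteq\Gamma_\bb$. Applying the first part of H\"ormander's Theorem~7.4.2 \cite{Hoe90} to this one-complex-parameter family produces a function $F_\bb$ holomorphic on $\cT_\bb$ satisfying the polynomial bound (ii), with distributional boundary value $\fw_\bb(z)$ as $\s\to0^+$. The boundary value at the upper edge is read off from the displayed KMS identity: since the inverse Fourier transform of $\wfou_\bb(-p)$ is $\fw_\bb(-z)$, one obtains $F_\bb(z+i(\b-\eta)e)\to\fw_\bb(-z)$ as $\eta\to0^+$, which is precisely (iii).

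I expect the main obstacle to be the endpoint temperedness together with the matching of the upper boundary value, rather than the interior holomorphy (which is immediate once the exponential damping is in force for $0<\s<\b$). Concretely, one must check that multiplication by $e^{-\s(e\cdot p)}$ does not destroy temperedness of the distribution supported on the mass shell at $\s=0$ and $\s=\b$, and in the massless case one must verify that the infrared singularity of the Bose factor $1/(1-e^{-\bb\cdot p})$ at $p\to0$ on the shell remains integrable against Schwartz functions, so that $\wfou_\bb$ and all the $e^{-\s(e\cdot p)}\wfou_\bb$ genuinely lie in $\cS'(\IR^4)$; this is the step where the detailed form of $\widehat{\fE}$ from Appendix~B (and the softening it provides near $p=0$) is needed. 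An alternative, less self-contained route would be to invoke directly the relativistic KMS condition of Bros and Buchholz \cite{BB94}, which already furnishes holomorphy of the two-point function in the larger tube $\IR^4+i\big(V_+\cap(\bb-V_+)\big)$; restricting that function to the imaginary segment $\Im\zeta=\s e$ then yields $F_\bb$ with the stated boundary values and reproduces (i)--(iii).
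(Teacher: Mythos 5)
Your proposal is correct and follows essentially the same route as the paper's proof in Appendix C: the momentum-space KMS identity $e^{\bb p}\wfou_\bb(-p)=\wfou_\bb(p)$ obtained from \eqref{eq:KMStwo} and the antisymmetry of $\hat{E}$, convexity of the temperedness set via \cite[Thm.~2-5]{SW00}, H\"ormander's Theorem 7.4.2 for holomorphy and the polynomial bounds, and weak continuity of $\cF$ for the two boundary values (your $\cF^{-1}[e^{-\s(e\cdot p)}\wfou_\bb(p)]$ coincides with the paper's $\cF[e^{\s e p}\wfou_\bb(-p)]$ under the stated conventions). The Bros--Buchholz alternative you sketch at the end is a legitimate shortcut but is not the argument the paper uses.
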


\begin{proof}
 The proof can be found in Appendix C.
\end{proof}

The above properties of $\fw_\bb$ are not equivalent to the $\bb$-KMS condition. Since one has to choose some (arbitrary) point $q\in\IM$ in defining $\fw_\bb$, Lemma \ref{lemma:KMSrem4} yields only some \emph{local} properties of the state $\o_\bb$ (i.e. properties at $q$) which itself do not imply space-time translation invariance of $\o_\bb$. Based on this observation, we propose the following generalization of the $\bb$-KMS condition:

\begin{defi}
\label{defi:LKMS}
 Let  $q\in\cM$, let $\cU_q$ be an open neighbourhood of $0\in\IR^4$ with $\cU_q\subset\{z\in\IR^4:q\pm z/2\in\cM\}$, and let $\o$ be an analytic Hadamard state on $\cA(\cM)$. We say that $\o$ fulfills the \emph{local KMS condition at $q$ with respect to $\bb$}, or $[\bb,q]$-\emph{LKMS condition} for short, iff there exists a $\bb=\b e\in V_+$ and a complex function $F_q$ with the following properties:
\begin{itemize}
\item[(i)] $F_q$ is  defined and holomorphic in the (flat) tube $\cT_q=\{\zeta\in\IC^4:\Im \zeta=\s e, 0<\s<\b\}$.
\item[(ii)] For all compact $K\subset(0,\b)$ there exist constants $C_K>0$ and $N_K\in\IN_0$ such that
\ben
\abs{F_q(z +i\s e)}\leq C_K(1+\abs{z +i\s e})^{N_K},\quad z \in\IR^4,\s\in K.
\label{eq:LKMS1}
\een
\item[(iii)] We have: 
\begin{align}
 \int d^4 z\ F_q(z +i\s e) h(z)&\xrightarrow[\s\to 0^+]{}{\sf w}_q(h)\quad \forall h\in\cD(\cU_q),
\label{eq:LKMS2}\\[1ex]
 \int d^4 z F_q(z +i(\b-\eta)e)h(z)&\xrightarrow[\eta\to 0^+]{}{\sf w}_q(h^-)\quad \forall h\in\cD(\cU_q),
\label{eq:LKMS3}
\end{align}
where $h^-(z):=h(-z)$.
\item[(iv)] We have
\ben
F_q((t+i\s)e)\xrightarrow[\abs{t}\to\infty]{} 0\quad\forall \s\in(0,\b)
\label{eq:LKMS4}
\een
\end{itemize}
Let $\cO\subset\IM$. We say that $\o$ fulfills the $[\bb,\cO]$-\emph{LKMS condition}, iff there exists a continuous (resp. smooth, if $\cO$ is open) map $\bb:\cO\to B\subset V_+$, such that $\o$ fulfills the $[\bb,q]$-LKMS condition for all $q\in\cO$, with $\bb=\bb(q)$. 
\end{defi}

We are now able to show that this condition indeed is equivalent to the $[\bb,q]$-LTE condition and can as well be formulated in momentum space.

\begin{theorem}
\label{theorem:LKMS}
  Let $q\in\IM$ and $\o$ an analytic Hadamard state on $\cA(\cM)$. Then the following are equivalent:
\begin{itemize}
 \item[(i)] $\o$ is a $[\bb,q]$-LTE state.
\item[(ii)] $\o$ fulfills the $[\bb,q]$-LKMS condition.
\item[(iii)] There exists a $\bb=\b e\in V_+$ such that the relative-variable two-point function ${\sf w}_q$ has an extension $\fw_q\in\cS'(\IR^4)$ with
\ben
\fw_q(te)\xrightarrow[\abs{t}\to\infty]{} 0
\label{eq:clustrem}
\een
and the following holds in the sense of distributions:
\ben
e^{\bb p}\wfou_q(-p)=\wfou_q(p).
\label{eq:LKMSfou}
\een
\end{itemize}
\end{theorem}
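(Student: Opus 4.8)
The plan is to establish the cycle $(i)\Rightarrow(iii)\Rightarrow(ii)\Rightarrow(iii)\Rightarrow(i)$, which in practice reduces to the two equivalences $(i)\Leftrightarrow(iii)$ and $(ii)\Leftrightarrow(iii)$, with (iii) serving as the central momentum-space characterization. The implication $(i)\Rightarrow(iii)$ is immediate from the results already available: by Corollary~\ref{coro:LTEequiv} the $[\bb,q]$-LTE condition gives $\fw_q=\fw_\bb$ as the unique tempered extension of ${\sf w}_q$; the clustering \eqref{eq:clustrem} is then inherited from the clustering \eqref{eq:clust} of $\o_\bb$; and the momentum-space relation \eqref{eq:LKMSfou} is simply the relation that \eqref{eq:KMStwo} encodes for $\wfou_\bb$, rewritten. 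So this direction requires only bookkeeping.

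For $(ii)\Leftrightarrow(iii)$ I would argue exactly as in Proposition~\ref{prop:KMSrem1}, but in four dimensions and in the flat tube $\cT_q$ rather than on a single line. Given (iii), relation \eqref{eq:LKMSfou} shows that the set of $\s$ for which $e^{\s(e\cdot p)}\wfou_q(-p)\in\cS'(\IR^4)$ contains $0$ and $\b$; by the convexity argument (\cite[Thm.~2-5]{SW00} together with \cite[Thm.~7.4.2]{Hoe90}) it contains all of $[0,\b]$, and Fourier inversion produces a function $F_q$ holomorphic in $\cT_q$ with $F_q(z+i\s e)=\cF[e^{\s(e\cdot p)}\wfou_q(-p)]$, the polynomial bound of Definition~\ref{defi:LKMS}(ii), and the two boundary values \eqref{eq:LKMS2} and \eqref{eq:LKMS3}; the decay \eqref{eq:LKMS4} follows from \eqref{eq:clustrem}. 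Conversely, given the LKMS data, the holomorphic $F_q$ with its distributional boundary values and polynomial bound yields, by the same boundary-value theory (\cite[Thm.~7.4.2]{Hoe90}), a tempered extension $\fw_q$ of ${\sf w}_q$ whose two boundary values force $e^{\s(e\cdot p)}\wfou_q(-p)$ to interpolate between $\wfou_q(-p)$ at $\s=0$ and $\wfou_q(p)$ at $\s=\b$, which is precisely \eqref{eq:LKMSfou}, while property (iv) gives \eqref{eq:clustrem}. This is essentially a verbatim four-dimensional transcription of Proposition~\ref{prop:KMSrem1} and Lemma~\ref{lemma:KMSrem4}, so I expect no genuine difficulty here.

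The real content is $(iii)\Rightarrow(i)$, and this is where I expect the main obstacle. By Corollary~\ref{coro:LTEequiv} it suffices to prove $\fw_q=\fw_\bb$. The idea is to combine \eqref{eq:LKMSfou} with the CCR: the antisymmetric part of any two-point function is the state-independent commutator, so $\wfou_q(p)-\wfou_q(-p)$ and $\wfou_\bb(p)-\wfou_\bb(-p)$ both equal (up to convention) $i\widehat E(p)$, supported on the mass shell. Feeding \eqref{eq:LKMSfou} into this relation gives $\wfou_q(p)\,(1-e^{-\bb\cdot p})=i\widehat E(p)$, i.e.\ the same relation \eqref{eq:KMStwo} obeyed by $\wfou_\bb$; hence the difference $\hat d:=\wfou_q-\wfou_\bb$ satisfies $\hat d(p)\,(1-e^{-\bb\cdot p})=0$, so $\hat d$ is supported on the spacelike hyperplane $H=\{p:\bb\cdot p=0\}=e^{\perp}$. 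To locate the analytic singularities of $d$ I would invoke both remaining hypotheses: the analytic Hadamard condition, through the estimate \eqref{eq:WFAw} of Lemma~\ref{lemma:pullback}, confines the analytic singular directions of $\fw_q$ and $\fw_\bb$, hence of $d$, to lightlike covectors, whereas $\hat d$ being supported on $H$ confines them to $e^{\perp}$; since a nonzero lightlike covector cannot be orthogonal to the timelike $e$, this forces $\WF_A(d)=\emptyset$, so that $d$ is real-analytic on all of $\IR^4$. The clustering \eqref{eq:clustrem}, together with that of $\o_\bb$, must then eliminate the remaining $H$-supported freedom.

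The delicate point, and the step on which I would spend the most care, is precisely this elimination of the $H$-supported part of $\hat d$. A tempered distribution with $\hat d$ supported on $e^{\perp}$ is a finite sum $\sum_j c_j(z_\perp)\,(e\cdot z)^j$ of polynomials in $e\cdot z$ with transverse coefficients $c_j$, and clustering \eqref{eq:clustrem} along the single direction $e$ controls directly only the values $c_j(0)$. To conclude $d\equiv0$ one must also exploit that $d$ is \emph{even} (forced by the CCR, since $\fw_q$ and $\fw_\bb$ share the same antisymmetric part) and real-analytic, and combine this with the rigidity coming from the analytic microlocal spectrum condition, namely that $\fw_q-\fw_{\mathrm{vac}}$ is the global real-analytic continuation of ${\sf w}_q-{\sf w}_{\mathrm{vac}}$, so that an $H$-supported, non-clustering additive term is incompatible with $\fw_q$ being the relative-variable two-point function of an analytic Hadamard state. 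Turning this incompatibility into a quantitative argument, rather than a merely plausible one, is the crux of the whole theorem; once $d\equiv0$ is secured we have $\fw_q=\fw_\bb$, and Corollary~\ref{coro:LTEequiv} delivers the $[\bb,q]$-LTE property, closing the cycle.
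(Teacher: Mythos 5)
Your implications $(i)\Rightarrow(iii)$ and $(ii)\Leftrightarrow(iii)$ run on exactly the paper's technology (Corollary~\ref{coro:LTEequiv}, the convexity argument of \cite[Thm.~2-5]{SW00} combined with \cite[Thm.~7.4.2]{Hoe90}, as in Proposition~\ref{prop:KMSrem1} and Lemma~\ref{lemma:KMSrem4}), and they are fine; the paper itself runs the cycle $(i)\Rightarrow(ii)\Rightarrow(iii)\Rightarrow(i)$, obtaining $(i)\Rightarrow(ii)$ by setting $F_q\equiv F_\bb$, which has the small advantage that the interior decay (iv) of Definition~\ref{defi:LKMS} is inherited from the KMS state, whereas your claim that \eqref{eq:LKMS4} ``follows from \eqref{eq:clustrem}'' in the direction $(iii)\Rightarrow(ii)$ (decay of a boundary value implying decay along interior lines of the tube) is itself unargued. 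That wrinkle is avoidable by following the paper's ordering, so it is minor.

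The genuine gap is where you yourself place it: $(iii)\Rightarrow(i)$ is not proved. You correctly derive, via the CCR, that $\hat d:=\wfou_q-\wfou_\bb$ satisfies $(e^{\bb p}-1)\,\hat d(p)=0$, hence $\supp\hat d\subset H=\{p:\bb\cdot p=0\}$ (and since $e^{\bb p}-1$ has a simple zero on $H$, no derivatives of $\delta(\bb\cdot p)$ can occur, so your terms with $j\geq 1$ are in fact absent and $d$ depends on $z_\perp$ alone); you also correctly observe that the clustering \eqref{eq:clustrem} along the single direction $e$ only forces $c_0(0)=0$, and that evenness together with $\WF_A(d)=\emptyset$ does not obviously eliminate $c_0(z_\perp)$. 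But at that point you stop, conceding the elimination is ``merely plausible'': no argument is supplied, so the cycle is never closed. For comparison, the paper disposes of this step in one line: it asserts that the distributional solutions of $(e^{\bb p}-1)\wfou_q(-p)=i\hat E(p)$ are $\wfou_q(-p)=i\hat E(p)/(e^{\bb p}-1)+c\,\delta(p)$ with $c\in\IR$ — i.e.\ it takes the homogeneous ambiguity to be concentrated at $p=0$, modeled on the one-dimensional division behind \eqref{eq:KMSrem1fou} and Appendix~B — and then uses \eqref{eq:clustrem} to set $c=0$, giving $\fw_q=\fw_\bb$ and then $(i)$ by Corollary~\ref{coro:LTEequiv}. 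So the ambiguity you worry about is precisely the one the paper normalizes to $c\,\delta(p)$; if you do not grant that normalization, you owe a proof that no nonzero even, real-analytic, tempered $c_0(z_\perp)$ with $c_0(0)=0$ is compatible with the hypotheses of $(iii)$ — for instance by showing that the antisymmetric-part identity and \eqref{eq:LKMSfou} pin down the transverse part of the extension globally — and your proposal contains no such argument. As submitted, the decisive implication of the theorem is missing, not merely delicate.
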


\begin{proof}
We show $(i)\Rightarrow(ii)\Rightarrow(iii)\Rightarrow(i)$.

Let $\o$ be a $[\bb,q]$-LTE state on $\cA(\IM)$. Then, by Corollary \ref{coro:LTEequiv}, ${\sf w}_q\in\cD'(\cU_q)$ has the unique extension $\fw_q=\fw_\bb$. Applying Lemma \ref{lemma:KMSrem4} to the $\bb$-KMS state $\o_{\bb}$ and subsequent restriction to test functions $h\in\cD'(\cU_q)$ yields $(ii)$, with $F_q\equiv F_{\bb}$.

Assume that $\o$ fulfills the $[\bb,q]$-LKMS condition. By \cite[Thm. 7.4.2]{Hoe90} there exists a distribution $\hat{T}_q\in\cS'(\IR^4)$ such that $e^{\s e p}\hat{T}_q\in\cS'(\IR^4)$ for all $0\leq\s\leq \b$ and $F_q(z +i\s e)=\cF[e^{\s e p}\hat{T}_q(p)](z )$. Since $\cF\colon\cS'\to\cS'$ is weakly continuous and $e^{\s e p}\hat{T}_q\xrightarrow[\s\to 0^+]{} \hat{T}_q$ in the weak sense, it follows that
\ben
F_q(z +i\s e)\xrightarrow[\s \to 0^+]{}\cF[\hat{T}_q(p)](z )=T_q(-z).
\een
By eq. \eqref{eq:LKMS2} this yields 
\ben 
{\sf w}_q(h)=T_q(h^-)=T_q^{-}(h)\quad \forall h\in\cD(\cU_q),
\label{eq:Tq}
\een
 i.e. $T_q^{-} \in\cS'(\IR^4)$ is an extension of ${\sf w}_q$. Now set $G_q(z +i\s e):=F_q(z +i(\b-\s)e)$, which as well is defined and holomorphic on $\cT_q$. Analogous arguments as above yield
\ben
G_q(z +i\s e)=\cF[e^{(\b-\s) e p}\hat{T}_q(p)](z )\xrightarrow[\s \to 0^+]{}\cF[e^{\bb p}\hat{T}_q(p)](z)
\een
 in the sense of distributions, and restriction to test functions $h\in\cD(\cU_q)$, together with eq. \eqref{eq:LKMS3} and eq. \eqref{eq:Tq} gives
\ben
\left. \cF^{-1}\left[e^{\bb p}\hat{T}_q\right]\right|_{\cU_q}={\sf w}_q+{\sf R}_q=\left.T_q^-\right|_{\cU_q}.
\een
Writing $\fw_q:=T_q^-$ for the extension of ${\sf w}_q$, this implies
\ben
e^{\bb p}\wfou_q(-p)=\wfou_q(p).
\een
That eq. \eqref{eq:clustrem} is fulfilled follows directly from eq. \eqref{eq:LKMS4} by which the proof of $(ii)\Rightarrow(iii)$ is complete.

Now assume that $(iii)$ holds. From (\ref{eq:LKMSfou}), together with the canonical commutation relations (\ref{eq:CCR}), it follows that
\begin{align}
i\hat{E}(p)=\wfou_q(p)-\wfou_q(-p)=(e^{\bb p}-1)\wfou_q(-p).
\end{align}
As discussed before, the distributional solution $\fw_q(-p)$ of the latter equation is not unique:
\ben
\wfou_q(-p)=\frac{i\hat{E}(p)}{e^{\bb p}-1}+c\delta(p),\quad c\in\IR.
\een 
Thus, by (\ref{eq:KMStwo}) there exists a $\bb$-KMS state state $\o_\bb$, such that
\ben
\wfou_q(p)=\wfou_{\bb}(p)+c\delta(p),
\een
where we have used $\d(-p)=\d(p)$. But in view of the clustering property, eq. (\ref{eq:clustrem}), we find that $\fw_q$ cannot contain a constant term, i.e. $c=0$ and we have
\ben
\fw_q=\fw_{\bb}
\een
in the sense of distributions. Since this is equivalent to $(i)$ by Corollary \ref{coro:LTEequiv}, the proof is complete.
 \end{proof}

As already mentioned in the introduction, this result in particular implies that known examples of $[\bb,q]$-LTE states are $[\bb,q]$-LKMS states which guarantees existence of the latter. For example, as discussed in \cite{BOR02,Buc03}, in the case of a massless field the \emph{hot-bang states} $\o_\text{hb}$, whose two-point functions a formally given by
\ben
\o_2^\text{hb}(x,y)=\frac{1}{(2\pi)^3}\int d^4p \frac{\eps(p_0)\d(p^2-m^2)}{1-e^{-A(x+y)^\mu p_\mu}}\quad \forall x,y\in V_+, 
\een
with constant $A>0$, constitute examples of $[\bb,V_+]$-LTE states. Therefore, as a consequence of Theorem \ref{theorem:LKMS}, they fulfill the $[\bb,q]$-LKMS condition for all $q\in V_+$, with $\bb\equiv\bb(q)=A\cdot q$.
Apart from these examples we immediately obtain that the restriction $\left.\o_\bb\right|_\cO$ of a $\bb$-KMS state $\o_\bb$ on $\cA(\IM)$ to a spacetime region $\cO$ fulfills the $[\bb,q]$-LKMS condition with $\bb=\text{const.}$ for all $q\in\cO$. Note, that these restricted states do not fulfill the KMS condition. Thus, we have the following Corollary of Theorem \ref{theorem:LKMS}: 

\begin{coro}
\label{coro:exLKMS}
For any given $q\in\cM$ and $\tilde{\bb}\in V_+$ there exist states $\o_{\tilde{\bb},q}$ on $\cA(\cM)$ which fulfill the $[\tilde{\bb},q]$-LKMS condition.
\end{coro}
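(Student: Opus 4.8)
The plan is to produce the desired state simply by restricting a global KMS state to $\cM$. Given $q\in\cM$ and $\tilde{\bb}\in V_+$, I would first invoke the non-degeneracy assumption made in Section~2 to obtain the unique $\tilde{\bb}$-KMS state $\o_{\tilde{\bb}}$ on $\cA(\IM)$, and then set $\o_{\tilde{\bb},q}:=\left.\o_{\tilde{\bb}}\right|_\cM$, its restriction to the subalgebra $\cA(\cM)$. The whole content of the corollary is then reduced to the already-proven equivalence theorem.

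The first point to check is that $\o_{\tilde{\bb},q}$ is an admissible input for the LKMS condition, i.e.\ that it is an analytic Hadamard state on $\cA(\cM)$. This follows from the fact that the two-point function of $\o_{\tilde{\bb}}$ satisfies the relativistic KMS condition of Bros and Buchholz \cite{BB94}, whose holomorphy in the flat tube (the very property exploited in Lemma~\ref{lemma:KMSrem4}) is exactly what yields the analytic microlocal spectrum condition \eqref{eq:WFAmink}; restriction to $\cM\times\cM$ preserves this wavefront-set property, so $\left.\o_{\tilde{\bb}}\right|_\cM$ is again analytic Hadamard.

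The second, and essentially trivial, step is to observe that $\o_{\tilde{\bb},q}$ is a $[\tilde{\bb},q]$-LTE state. Indeed, in Definition~\ref{defi:LTE} one may take for the comparison state the very state $\o_{\tilde{\bb}}$ itself: since $\o_{\tilde{\bb},q}$ agrees with $\o_{\tilde{\bb}}$ on all of $\cA(\cM)$, and in particular on every balanced derivative $s(q)\in\cS_q^n$, the equality $\o_{\tilde{\bb},q}(s(q))=\o_{\tilde{\bb}}(s(q))$ holds identically for all orders $n$, so the $[\tilde{\bb},q]$-LTE condition is met.

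Finally, the implication $(i)\Rightarrow(ii)$ of Theorem~\ref{theorem:LKMS} immediately converts this into the statement that $\o_{\tilde{\bb},q}$ fulfills the $[\tilde{\bb},q]$-LKMS condition, which completes the argument. I do not expect any genuine obstacle beyond the regularity bookkeeping of the first step; the mathematical substance lies entirely in the equivalence established above. The conceptual point worth stressing in the write-up is that the restriction $\o_{\tilde{\bb},q}$ need not itself be a KMS state — it is generally not invariant under the global time translations and hence fails the full $\bb$-KMS condition — yet it still obeys the purely local condition, which is precisely what makes such states nontrivial examples.
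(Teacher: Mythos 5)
Your proposal is correct and takes essentially the same route as the paper: there, the corollary is obtained precisely by restricting the (unique) $\tilde{\bb}$-KMS state $\o_{\tilde{\bb}}$ to $\cA(\cM)$, observing that this restriction trivially satisfies the $[\tilde{\bb},q]$-LTE condition with $\o_{\tilde{\bb}}$ itself as comparison state, and invoking the implication $(i)\Rightarrow(ii)$ of Theorem \ref{theorem:LKMS}. Your extra bookkeeping step --- that the restriction is analytic Hadamard, via the Bros--Buchholz tube analyticity and the stability of the analytic wavefront-set condition under restriction to $\cM\times\cM$ --- is exactly what the paper implicitly assumes for KMS states, so nothing is missing.
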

\subsubsection*{Finite-order LTE-states}

Similar to the case of infinite-order LTE states one can express the $[\bb,q,N]$-LTE condition for analytic Hadamard states in terms of the relative-variable two-point function ${\sf w}_q\in\cD'(\cU_q)$:

\begin{lemma}
\label{lemma:LTENequiv}
 Let $q\in\IM$ and $N\in\IN_0$. A Hadamard state $\o$ on $\cA(\cM)$ is a $[\bb,q,N]$-LTE state if and only if there exists a $\bb$-KMS state $\o_{\bb}$ such that
\ben 
[\del_\va ({\sf w}_{\bb}-{\sf w}_q)](0)=0 \quad\forall \va\in\{\va'\in\IN_0^4:\abs{\va'}\leq N\}.
\label{eq:LTENequiv}
\een
\end{lemma}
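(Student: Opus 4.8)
The plan is to run the proof of Lemma~\ref{lemma:LTEequiv} essentially verbatim, the key observation being that every step there is carried out degree by degree in the order $n$ of the balanced derivatives, so that truncating at $n\le N$ causes no structural change. The common starting point is the identity~\eqref{eq:bderiv2},
\[
\o(\eth_{\mu_1\ldots\mu_n}:\phi^2:(q))=[\del_{z^{\mu_1}}\ldots\del_{z^{\mu_n}}({\sf w}_q-{\sf w}_\text{vac})](0),
\]
valid for every $n\in\IN_0$ and every Hadamard state, which identifies the expectation value of the balanced derivative of order $n$ with the corresponding $n$-th partial derivative of the relative-variable two-point function at $z=0$.

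For the forward implication I would assume that $\o$ is a $[\bb,q,N]$-LTE state, so that there is a $\bb$-KMS state $\o_\bb$ with $\o(s(q))=\o_\bb(s(q))$ for all $s(q)\in\cS_q^n$ and all $n\le N$. Since $\cS_q^n$ is spanned by the balanced derivatives of order $n$, applying~\eqref{eq:bderiv2} to both $\o$ and $\o_\bb$ and subtracting yields
\[
v_1^{\mu_1}\cdots v_n^{\mu_n}\,[\del_{z^{\mu_1}}\cdots\del_{z^{\mu_n}}({\sf w}_\bb-{\sf w}_q)](0)=0
\]
for all $n\le N$ and all spacelike $v_1,\dots,v_n\notin\overline{V_+}\cup\overline{V_-}$, exactly as in Lemma~\ref{lemma:LTEequiv} but now only for $n\le N$. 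For each such $n$ the symmetric $\binom{0}{n}$-tensor $\fR_{\mu_1\cdots\mu_n}:=[\del_{z^{\mu_1}}\cdots\del_{z^{\mu_n}}({\sf w}_\bb-{\sf w}_q)](0)$ thus vanishes on all spacelike arguments, so Lemma~\ref{lemma:symtens}, applied separately in each degree $n\le N$, gives $[\del_\va({\sf w}_\bb-{\sf w}_q)](0)=0$ for every multi-index $\va$ with $\abs{\va}\le N$, which is~\eqref{eq:LTENequiv}.

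The reverse implication is immediate and needs no appeal to Lemma~\ref{lemma:symtens}: if~\eqref{eq:LTENequiv} holds then in particular all full partial derivatives agree up to order $N$, whence $[\del_{z^{\mu_1}}\cdots\del_{z^{\mu_n}}({\sf w}_q-{\sf w}_\text{vac})](0)=[\del_{z^{\mu_1}}\cdots\del_{z^{\mu_n}}({\sf w}_\bb-{\sf w}_\text{vac})](0)$ for all $n\le N$, and by~\eqref{eq:bderiv2} this says precisely that $\o$ and $\o_\bb$ agree on all of $\cS_q^n$ for $n\le N$, i.e.\ the $[\bb,q,N]$-LTE condition. The only nontrivial point—and the sole place where finiteness of $N$ must be checked—is the spacelike-to-full upgrade in the forward direction: the LTE condition constrains only balanced derivatives, which are taken along spacelike directions, and hence a priori only the contractions of $\fR_{\mu_1\cdots\mu_n}$ with spacelike vectors; recovering all tensor components is exactly the content of Lemma~\ref{lemma:symtens}. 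Since that lemma concerns a single symmetric tensor of fixed rank, it applies unchanged to each degree $n\le N$ independently, and I expect no genuine obstacle beyond invoking it degree by degree.
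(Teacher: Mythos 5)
Your proof is correct and follows essentially the same route as the paper's: both directions rest on the identity \eqref{eq:bderiv2}, with the forward implication upgrading the spacelike contractions $v_1^{\mu_1}\cdots v_n^{\mu_n}\fR_{\mu_1\cdots\mu_n}=0$ to the vanishing of all partial derivatives up to order $N$ via Lemma \ref{lemma:symtens} applied to each symmetric tensor $\fR_{\mu_1\cdots\mu_n}$, $n\leq N$, and the converse following directly from \eqref{eq:bderiv2} without it. Your closing remark correctly identifies the spacelike-to-full upgrade as the only substantive step, exactly as in the paper's proof of Lemma \ref{lemma:LTEequiv} truncated at order $N$.
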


\begin{proof}
Assume that $\o$ is a $[\bb,q,N]$-LTE state on $\cA(\IM)$, which in view of (\ref{eq:bderiv2}) yields
\ben
v_1^{\mu_1}\ldots v_n^{\mu_n}[(\del_{z ^{\mu_1}}\ldots\del_{z ^{\mu_n}}({\sf w}_{\bb}-{\sf w}_q))(0)]=0\quad\forall n\leq N;\ v_1,\ldots,v_n\notin \overline{V_+}\cup\overline{V_-}.
\een
Applying Lemma \ref{lemma:symtens} to the symmetric $\binom{0}{n}$-tensors $$\fR_{\mu_1\cdots\mu_n}:=[\del_{z ^{\mu_1}}\ldots\del_{z ^{\mu_n}}({\sf w}_{\bb}-{\sf w}_q)](0),\ n\leq N,$$ it follows that 
\ben 
[\del_\va ({\sf w}_{\bb}-{\sf w}_q)](0)=0 \quad\forall \va\in\{\va'\in\IN_0^4:\abs{\va'}\leq N\}.
\een
Assume conversely that there exists a $\bb$-KMS state $\o_{\bb}$ on $\cA(\IM)$ such that the latter equation holds. In particular we have
\ben
[\del_{z ^{\mu_1}}\ldots\del_{z ^{\mu_n}}({\sf w}_q-{\sf w}_\text{vac})](0)=[\del_{z ^{\mu_1}}\ldots\del_{z ^{\mu_n}}({\sf w}_{\bb}-{\sf w}_\text{vac})](0) \quad\forall n\leq N,
\een
which implies the $[\bb,q,N]$-LTE condition by the definition of the balanced derivatives of the Wick square and by eq. (\ref{eq:bderiv2}).
\end{proof}

\begin{coro}
\label{coro:LTENequiv}
Let $q\in\cM$ and $N\in\IN_0$. An analytic Hadamard state $\o$ on $\cA(\cM)$ is a $[\bb,q,N]$-LTE state if and only if there exists a $\bb$-KMS state $\o_\bb$ on $\cA(\IM)$ and symmetric distribution ${\sf R}_q\in\cD'(\cU_q)$ with $\WF_A({\sf R}_q)=\emptyset$ and
\ben
[\del^\va {\sf R}_q(0)]=0 \quad\forall \va\in\{\va'\in\IN_0^4:\abs{\va'}\leq N\}
\een
such that $\fw_\bb\in\cS'(\IR^4)$ is the unique extension of ${\sf w}_q+{\sf R}_q\in\cD'(\cU_q)$, where ${\sf w}_q$ is the relative-variable two-point function  of $\o$ at $q$.
\end{coro}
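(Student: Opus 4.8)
The plan is to treat this statement as the finite-order refinement of Corollary~\ref{coro:LTEequiv} and to reduce everything to the jet characterisation already supplied by Lemma~\ref{lemma:LTENequiv}. The candidate for the auxiliary distribution is forced: any extension must satisfy $\fw_\bb|_{\cU_q} = {\sf w}_q + {\sf R}_q$, so there is no choice but to set ${\sf R}_q := {\sf w}_\bb - {\sf w}_q \in \cD'(\cU_q)$, where ${\sf w}_\bb$ denotes the restriction of $\fw_\bb$ to $\cU_q$. The proof then consists in verifying that this ${\sf R}_q$ has the three advertised properties and that $\fw_\bb$ is indeed \emph{the} (unique) extension of ${\sf w}_q + {\sf R}_q$.

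For the direction ``LTE $\Rightarrow$ existence'', I would start from a $\bb$-KMS state $\o_\bb$ supplied by Lemma~\ref{lemma:LTENequiv}, so that $[\del_\va({\sf w}_\bb - {\sf w}_q)](0) = 0$ for all $\va \in \IN_0^4$ with $\abs{\va}\le N$; this is exactly $\del^\va {\sf R}_q(0) = 0$ for $\abs{\va}\le N$. Next I would argue that $\WF_A({\sf R}_q) = \emptyset$: since both $\o$ and $\o_\bb$ are analytic Hadamard states, each of ${\sf w}_q - {\sf w}_\text{vac}$ and ${\sf w}_\bb - {\sf w}_\text{vac}$ is real-analytic on $\cU_q$ by \eqref{eq:WFAmink} and Lemma~\ref{lemma:pullback} (their universal lightcone singularities coincide with the vacuum one), so the difference ${\sf R}_q = ({\sf w}_\bb - {\sf w}_\text{vac}) - ({\sf w}_q - {\sf w}_\text{vac})$ is real-analytic; in particular its derivatives at $0$ are classical and the distributional jet condition coincides with the pointwise one. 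The symmetry ${\sf R}_q(z) = {\sf R}_q(-z)$ I would obtain from the CCR \eqref{eq:CCR}: the antisymmetric part ${\sf w}_q(z) - {\sf w}_q(-z) = \o_2(q - z/2, q+z/2) - \o_2(q + z/2, q - z/2)$ equals a $c$-number determined by $E$, hence is state-independent and cancels in ${\sf w}_\bb - {\sf w}_q$, leaving ${\sf R}_q$ even. Finally, that $\fw_\bb$ is the unique extension of ${\sf w}_q + {\sf R}_q = {\sf w}_\bb$ follows by the same analytic-continuation argument as in Corollary~\ref{coro:LTEequiv}: the analytic Hadamard condition fixes the singular part, so the real-analytic remainder is determined on $\cU_q$ and continues uniquely to $\cS'(\IR^4)$.

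The converse is the easy direction: given $\o_\bb$ and a symmetric ${\sf R}_q$ with $\WF_A({\sf R}_q) = \emptyset$, vanishing $N$-jet at $0$, and such that $\fw_\bb$ restricts to ${\sf w}_q + {\sf R}_q$ on $\cU_q$, I would simply read off ${\sf w}_\bb - {\sf w}_q = {\sf R}_q$ on $\cU_q$, whence $[\del_\va({\sf w}_\bb - {\sf w}_q)](0) = \del^\va {\sf R}_q(0) = 0$ for $\abs{\va}\le N$, and invoke Lemma~\ref{lemma:LTENequiv} to conclude the $[\bb,q,N]$-LTE property. Note that the hypothesis of symmetry of ${\sf R}_q$ is not actually used here; it is only the automatic feature of the extremal choice made in the forward direction.

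The genuinely delicate points, as opposed to routine bookkeeping, are (i) the claim $\WF_A({\sf R}_q) = \emptyset$, which rests on the cancellation of the universal analytic lightcone singularities guaranteed by the analytic microlocal spectrum condition, and (ii) the \emph{uniqueness} of the extension $\fw_\bb$, which is the one place where mere smoothness would not suffice and analyticity is essential. The symmetry of ${\sf R}_q$ is the only ingredient genuinely new relative to Corollary~\ref{coro:LTEequiv}, and it is handled by the CCR-based cancellation above; everything else reduces to Lemma~\ref{lemma:LTENequiv}.
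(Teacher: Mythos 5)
Your proposal is correct and takes essentially the same route as the paper's own proof: the same forced choice ${\sf R}_q := {\sf w}_\bb - {\sf w}_q$, reduction to the jet condition of Lemma~\ref{lemma:LTENequiv} in both directions, the CCR-based cancellation of the (state-independent) antisymmetric parts to get symmetry of ${\sf R}_q$, and uniqueness of the extension $\fw_\bb$ by analyticity via the analytic Hadamard condition. Your vacuum-subtraction argument for $\WF_A({\sf R}_q)=\emptyset$ simply spells out what the paper asserts directly, and your remark that symmetry is unused in the converse direction is accurate.
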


\begin{proof}
Assume first that there exists a $\bb$-KMS state $\o_\bb$ on $\cA(\IM)$ and a ${\sf R}_q\in\cD'(\cU_q)$ with the above properties such that $\fw_\bb$ is the unique extension of ${\sf w}_q+{\sf R}_q$. In particular, by restriction to $\cU_q\subset\IR^4$ , we have ${\sf w}_q={\sf w}_\bb+{\sf R}_q$ in the sense of distributions. Therefore, 
\ben 
[\del_\va ({\sf w}_{\bb}-{\sf w}_q)](0)=[\del_\va {\sf R}_q](0)=0 \quad\forall \va\in\{\va'\in\IN_0^4:\abs{\va'}\leq N\},
\een
which implies the $[\bb,q,N]$-LTE condition by Lemma \ref{lemma:LTENequiv}.
${}$\\
\noindent Assume conversely that $\o$ is a $[\bb,q,N]$-LTE state on $\cA(\cM)$ and define ${\sf R}_q\in\cD'(\cU_q)$ by $${\sf R}_q(z):=({\sf w}_\bb-{\sf w}_q)(z)\ \forall z\in\cU_q.$$ Since $\o$ and $\o_\bb$ fulfill the analytic Hadamard condition it follows that $\WF_A({\sf R}_q)=\emptyset$ and , since by eq. \eqref{eq:CCR} the antisymmetric parts of ${\sf w}_\bb$ and ${\sf w}_q$ cancel each other, ${\sf R}_q$ is symmetric, ${\sf R}_q(-z)={\sf R}_q(z)\ \forall z\in\cU_q$. Furthermore, by eq. \eqref{eq:LTENequiv}, it follows that
\ben
[\del^\va {\sf R}_q(0)]=[\del_\va ({\sf w}_{\bb}-{\sf w}_q)](0)=0 \quad\forall \va\in\{\va'\in\IN_0^4:\abs{\va'}\leq N\}.
\een
Since ${\sf w}_\bb={\sf w}_q+{\sf R}_q$ in the sense of distributions it follows from the analytic Hadamard condition on $\o $ and $\o_\bb$ as well as from the analyticity of ${\sf R}_q$ that 
\ben
\WF_A({\sf w}_q+{\sf R}_q)=\WF_A({\sf w}_q)=\WF_A({\sf w}_\bb).
\een
Thus, we obtain that $\fw_\bb\in\cS'(\IR^4)$ is the unique extension (by analyticity) of ${\sf w}_q+{\sf R}_q\in\cD'(\cU_q)$ and the proof is complete. 
\end{proof}

\begin{defi}
\label{defi:LTEN}
Let  $q\in\cM$,  let $\cU_q$ be an open neighbourhood of $0\in\IR^4$ with $\cU_q\subset\{z\in\IR^4:q\pm z/2\in\cM\}$, and let $\o$ be an analytic Hadamard state on $\cA(\cM)$. We say that $\o$ fulfills the \emph{local KMS condition at $q$ with respect to $\bb$ at order $N\in\IN$}, or $[\bb,q,N]$-\emph{LKMS condition} for short, iff here exists a $\bb=\b e\in V_+$ and a complex function $F_q$ with the following properties:
\begin{itemize}
 \item[(i)] $F_q$ is defined and holomorphic in the (flat) tube $\cT_q=\{\zeta\in\IC^4:\Im \zeta=\s e, 0<\s<\b\}$.
 \item[(ii)] For all compact $K\subset(0,\b)$ there exist constants $N_K\in\IN$ and $C_K>0$ such that
\ben
 \abs{F_q(z+i\s e)}\leq C_K(1+\abs{z +i\s e})^{N_K},\qquad \forall\ \s\in K.
 \label{eq:LKMSN1} 
\een
 \item[(iii)] There exists a symmetric distribution ${\sf R}_q\in\cD'(\cU_q)$ with $\WF_A({\sf R}_q)=\emptyset$ and
\begin{align} 
[\del^\va {\sf R}_q(0)]=0 &\quad\forall \va\in\{\va'\in\IN_0^4:\abs{\va'}\leq N\},\label{eq:Rq2}
\end{align}
such that 
\begin{align}
 \int d^4 z\ F_q(z +i\s e)h(z)&\xrightarrow[\s\to 0^+]{}({\sf w}_q+{\sf R}_q)(h) \quad\forall h\in 
\cD(\cU_q)
 \label{eq:LKMSN2}
\\[1ex]
\int d^4z\ F_q(z +i(\b-\eta)e)h(z)&\xrightarrow[\eta\to 0^+]{}({\sf w}_q+{\sf R}_q)(h^-)\quad\forall h\in 
\cD(\cU_q).
 \label{eq:LKMSN3}
\end{align}
\item[(iv)] We have
\ben
F_q((t+i\s)e)\xrightarrow[\abs{t}\to\infty]{} 0\quad\forall \s\in(0,\b).
\label{eq:LKMSN4}
\een
\end{itemize}
\end{defi}

We now prove a generalization of Theorem \ref{theorem:LKMS}.

\begin{theorem}
 Let $q\in\IM$, $\o$ an analytic Hadamard state on $\cA(\IM)$. and denote by ${\sf w}_q\in\cD'(\cU_q)$ the relative-variable two-point function of $\o$ at $q$. Then the following are equivalent:
\begin{itemize}
 \item[(i)] $\o$ is a $[\bb,q,N]$-LTE state. 
\item[(ii)] $\o$ fulfills the $[\bb,q,N]$-LKMS condition.
\item[(iii)]  There exists a $\bb\in V_+$ and a symmetric distribution ${\sf R}\in\cD'(\cU_q)$ with $\WF_A({\sf R}_q)=\emptyset$ and
\ben
[\del^\va {\sf R}_q(0)]=0 \quad\forall \va\in\{\va'\in\IN_0^4:\abs{\va'}\leq N\},
\een
 such that ${\sf w}_q+{\sf R}_q$ has an extension $\fT_q\in\cS'(\IR^4)$ with
 \ben
 \fT_q(te)\xrightarrow[\abs{t}\to\infty]{} 0,\label{eq:clustrem2} 
 \een
 and the following holds in the sense of distributions:
\ben
e^{\bb p}\hat{\fT_q}(-p)=\hat{\fT}_q(p).
\label{eq:LKMSNfou}
\een
\end{itemize}
\end{theorem}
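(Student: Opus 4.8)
The plan is to establish the cycle $(i)\Rightarrow(ii)\Rightarrow(iii)\Rightarrow(i)$, following the architecture of the proof of Theorem~\ref{theorem:LKMS} almost verbatim, the only structural novelty being that the sharp identity $\fw_q=\fw_\bb$ is everywhere replaced by the ``smeared'' identity ${\sf w}_q+{\sf R}_q={\sf w}_\bb$ supplied by Corollary~\ref{coro:LTENequiv}. Thus the symmetric, real-analytic correction ${\sf R}_q$, with vanishing derivatives up to order $N$ at the origin, is carried through each step as a passive spectator, and one has only to check at the end that its defining properties are exactly those reproduced.

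For $(i)\Rightarrow(ii)$: if $\o$ is a $[\bb,q,N]$-LTE state, Corollary~\ref{coro:LTENequiv} furnishes a $\bb$-KMS state $\o_\bb$ and a symmetric ${\sf R}_q\in\cD'(\cU_q)$ with $\WF_A({\sf R}_q)=\emptyset$ and $[\del^\va {\sf R}_q](0)=0$ for $\abs{\va}\le N$, such that $\fw_\bb$ is the unique extension of ${\sf w}_q+{\sf R}_q$. I would then apply Lemma~\ref{lemma:KMSrem4} to $\o_\bb$ to obtain the holomorphic function $F_\bb$ on the tube $\cT_\bb$ with the polynomial bound and boundary values $\fw_\bb(z)$, $\fw_\bb(-z)$, and set $F_q:=F_\bb$. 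Properties (i) and (ii) of the $[\bb,q,N]$-LKMS condition are then immediate, while restricting the boundary-value limits to $h\in\cD(\cU_q)$ and using $\fw_\bb|_{\cU_q}={\sf w}_q+{\sf R}_q$ yields \eqref{eq:LKMSN2}--\eqref{eq:LKMSN3} with this same ${\sf R}_q$. The clustering \eqref{eq:LKMSN4}, $F_q((t+i\s)e)\to 0$, I would deduce from the time-clustering \eqref{eq:clust} of $\o_\bb$, which propagates from the boundary into the interior of the strip by the analyticity and polynomial bounds of Lemma~\ref{lemma:KMSrem4} via a Phragm\'en--Lindel\"of argument.

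For $(ii)\Rightarrow(iii)$: this is the finite-order transcription of the corresponding step in Theorem~\ref{theorem:LKMS}. By \cite[Thm.~7.4.2]{Hoe90}, properties (i)--(ii) produce a distribution $\hat T_q\in\cS'(\IR^4)$ with $e^{\s e p}\hat T_q\in\cS'(\IR^4)$ for $0\le\s\le\b$ and $F_q(z+i\s e)=\cF[e^{\s e p}\hat T_q(p)](z)$. Weak continuity of $\cF$ together with \eqref{eq:LKMSN2} identifies $\fT_q:=T_q^-$ as a tempered extension of ${\sf w}_q+{\sf R}_q$; the opposite boundary value, via \eqref{eq:LKMSN3}, gives $e^{\bb p}\hat\fT_q(-p)=\hat\fT_q(p)$, i.e.\ \eqref{eq:LKMSNfou}, while \eqref{eq:LKMSN4} produces the clustering \eqref{eq:clustrem2}. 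The symmetric ${\sf R}_q$ of (iii) is inherited unchanged from (ii).

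The heart of the matter, and the step I expect to be the main obstacle, is $(iii)\Rightarrow(i)$. Here the crucial observation is that, because ${\sf R}_q$ is symmetric, it contributes nothing to the antisymmetric (commutator) part: on $\cU_q$ the antisymmetric part of ${\sf w}_q+{\sf R}_q$ coincides with that of ${\sf w}_q$, which the CCR \eqref{eq:CCR} fix to be $i$ times the state-independent, globally tempered causal propagator. This is exactly what permits combining the reflection identity \eqref{eq:LKMSNfou} with the CCR into the global relation $i\hat E(p)=\hat\fT_q(p)-\hat\fT_q(-p)=(e^{\bb p}-1)\hat\fT_q(-p)$, precisely as in Theorem~\ref{theorem:LKMS}; the delicate point is the passage from the local equality of antisymmetric parts on $\cU_q$ to this global momentum-space identity, which I would justify by noting that $\hat E$ is a fixed distribution and that $e^{\bb p}-1$ vanishes only at $p=0$. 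The general distributional solution is therefore $\hat\fT_q(-p)=i\hat E(p)/(e^{\bb p}-1)+c\,\d(p)$, and the clustering \eqref{eq:clustrem2} forces $c=0$, whence by \eqref{eq:KMStwo} one obtains $\fT_q=\fw_\bb$ for a $\bb$-KMS state $\o_\bb$. Restricting to $\cU_q$ then gives ${\sf w}_\bb={\sf w}_q+{\sf R}_q$ with ${\sf R}_q$ enjoying exactly the properties listed in Corollary~\ref{coro:LTENequiv}, and since $\fw_\bb$ is, by uniqueness of the analytic extension, the extension of ${\sf w}_q+{\sf R}_q$, that corollary returns the $[\bb,q,N]$-LTE condition and closes the cycle.
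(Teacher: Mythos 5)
Your proposal is correct and takes essentially the same route as the paper's own proof: the cycle $(i)\Rightarrow(ii)\Rightarrow(iii)\Rightarrow(i)$, using Corollary \ref{coro:LTENequiv} together with Lemma \ref{lemma:KMSrem4} (with $F_q\equiv F_\bb$) for the first implication, H\"ormander's Theorem 7.4.2 and weak continuity of $\cF$ for the second, and for the third the CCR-based division identity $i\hat{E}(p)=(e^{\bb p}-1)\hat{\fT}_q(-p)$ (where the symmetry of ${\sf R}_q$ cancels its contribution to the antisymmetric part) with the clustering condition \eqref{eq:clustrem2} eliminating the $c\,\d(p)$ ambiguity and Corollary \ref{coro:LTENequiv} closing the loop. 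The two points you single out as delicate --- propagating the decay \eqref{eq:LKMSN4} from the boundary into the strip, and the passage from the local equality of antisymmetric parts on $\cU_q$ to the global momentum-space identity --- are indeed passed over in silence in the paper, so your Phragm\'en--Lindel\"of remark and your flag on the local-to-global step are faithful elaborations rather than a different argument.
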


\begin{proof}
 We show $(i)\Rightarrow (ii)\Rightarrow (iii)\Rightarrow (i)$.

Let $\o$ be a $[\bb,q,N]$-LTE state on $\cA(\cM)$. Then, by Corollary \ref{lemma:LTENequiv}, ${\sf w}_q+{\sf R}_q\in\cD'(\cU_q)$ has the unique extension $\fw_\bb\in\cS'(\IR^4)$. Applying Lemma \ref{lemma:KMSrem4} to the $\bb$-KMS state $\o_{\bb}$ and subsequent restriction to test functions $h\in\cD(\cU_q)$ yields $(ii)$, with $F_q\equiv F_{\bb}$.

Now assume that $(ii)$ holds. By \cite[Thm. 7.4.2]{Hoe90} there exists a distribution $\hat{\fT}_q\in\cS'(\IR^4)$ such that $e^{\s e p}\hat{\fT}_q\in\cS'(\IR^4)$ for all $0\leq\s\leq \b$ and $F_q(z +i\s e)=\cF[e^{\s e p}\hat{\fT}_q(p)](z )$. Since $\cF\colon\cS'\to\cS'$ is weakly continuous and $e^{\s e p}\hat{\fT}_q\xrightarrow[\s\to 0^+]{} \hat{\fT}_q$ in the weak sense, it follows that
\ben
F_q(z +i\s e)\xrightarrow[\s \to 0^+]{}\cF[\hat{\fT}_q(p)](z)=\fT_q(-z).
\een
By eq. \eqref{eq:LKMSN2} this yields
\ben
\fT_q(h^-)=\fT_q^-(h)=({\sf w}_q+{\sf R}_q)(h)\quad  \forall h\in\cD(\cU_q),
\label{eq:TqN}
\een 
i.e. $\fT_q^-\in\cS'(\IR^4)$ is an extension of ${\sf w}_q+{\sf R}_q$. Now set $G_q(z +i\s e):=F_q(z +i(\b-\s))$, which as well is defined and holomorphic in $\cT_q$. By analogous arguments as above we have
\ben
G_q(z +i\s e)=\cF[e^{(\b-\s) e p}\hat{\fT}_q(p)](z )\xrightarrow[\s \to 0^+]{}\cF[e^{\bb p}\hat{\fT}_q(p)](z)
\een
in the sense of distributions. Restriction to test functions $h\in\cD(\cU_q)$, together with \eqref{eq:LKMSN3} and eq. (\ref{eq:TqN})gives
\ben
\left. \cF^{-1}\left[e^{\bb p}\hat{\fT}_q\right]\right|_{\cU_q}={\sf w}_q+{\sf R}_q=\left.\fT_q^-\right|_{\cU_q}.
\een
Thus, it follows that

\ben
e^{\bb p}\hat{\fT}_q(-p)=\hat{\fT}_q(p).
\een
That eq. \eqref{eq:clustrem2} is fulfilled follows directly from eq. \eqref{eq:LKMSN4} by which the proof of $(ii)\Rightarrow(iii)$ is complete.

It remains to show $(iii)\Rightarrow (i)$.  From (\ref{eq:LKMSNfou}) it follows together with the canonical commutation relations (\ref{eq:CCR}) that
\begin{align}
i\hat{E}(p)&=(e^{\bb p}-1)\hat{\fT}_q(-p),
\end{align}
where the symmetry of ${\sf R}_q$ has been used. Again, the distributional solution $\fw_q(-p)$ of the latter equation is given by
\ben
\hat{\fT}_q(-p)=\frac{i\hat{E}(p)}{e^{\bb p}-1}+c\delta(p),\quad c\in\IR.
\een 
Thus, by (\ref{eq:KMStwo}) there exists a $\bb$-KMS state $\o_\bb$, such that
\ben
\hat{\fT}_q(p)=\wfou_{\bb}(p)+c\delta(p).
\een
In view of the local clustering property, eq. (\ref{eq:clustrem2}), we find that $\hat{\fT}_q$ cannot contain a constant term, i.e. $c=0$ and thus
\ben
\hat{\fT}_q=\fw_{\bb}
\een
in the sense of distributions. This, together with the properties of ${\sf R}_q$, is equivalent to the $[\bb,q,N]$-LTE condition by Corollary \ref{coro:LTENequiv}, which completes the proof of the theorem. 
\end{proof}

\section{LTE States with Mixed Temperature}

So far we have only discussed locally thermal states with sharply defined thermal parameters. A generalization of this arises if one enlarges the space of thermal reference states to the space $C_B$ of mixtures of KMS states, as discussed in Section 2. 

\begin{defi}
Let $\rho$ be a positive normalized measure with support in some compact $B\subset V_+$. We shall call the state $\o_{B}$ on $\cA(\IM)$, defined by
\ben
\o_{B}(A)=\int_{B}d\rho(\bb)\o_\bb(A), \quad A\in\cA(\IM),
\een 
the \emph{mixed-temperature state with respect to $\rho$}.
\end{defi}
 
 It follows immediately from the respective properties of the KMS states $\o_\bb$ that $\o_B$ is an (not necessarily quasifree) analytic Hadamard state, i.e. it fulfills eq. \eqref{eq:WFAmink}. Thus, for any point $q\in\IM$ we can compute the relative-variable two-point function $\fw_q^B\in\cD'(\IR^4)$, according to Lemma \ref{lemma:pullback}. Since $\o_B$ can also easily be seen to be spacetime-translation invariant, $\o_B\circ\t_{(1,a)}=\o_B,a\in\IR^4$, the relative-variable two-point function does not depend on the choice of the point $q\in\IM$, i.e. $\fw_q^B\equiv\fw_B$. Furthermore we have $\fw_B\in\cS'(\IR^4)$ since the KMS two-point functions $\fw_\bb$ are tempered as well.

\begin{defi}
\label{defi:mLTE}
 Let $\omega$ be a Hadamard state on $\cA(\cM)$ and $N\in\IN$. We say that $\o$ is a \emph{local thermal equilibrium state of order $N$ at $q\in\cM$ with mixed temperature distribution $\rho$}, or $[\rho,q,N]$-\emph{LTE state} for short, iff there exists a positive normalized measure $\rho$ with support in some compact $B\subset V_+(q)$,  such that
 \ben
 \o(s(q))=\o_{B}(s(q)),\quad \forall s(q)\in S_q^n,\ n\leq N,
\label{eq:mLTE}
 \een
 where $\o_{B}$ is the mixed temperature state with respect to $\rho$.
 
\noindent We say that $\o$ is a $[\rho,q]$\emph{-LTE state} iff it is a $[\rho,q,N]$-LTE state for all $N\in\IN$. 
\end{defi}

It has been shown in \cite{BOR02} that for given $q\in\cM$ and $N\in\IN$ there exist non-trivial $[\rho,q,N]$-LTE states. This result was extended by Solveen \cite{Sol10} as follows:  For any compact subset $\cO$ of Minkowski spacetime and finite $N\in\IN$ there exist probability measure-valued functions $q\mapsto\rho_q,\ q\in\cO$ such that there are states which are non-trivial $[\rho_q,q,N]$-LTE states for all $q\in\cO$.

\subsubsection*{Mixed-temperature LTE states of infinite order}

We first show, in analogy to the case of sharp temperature, that the $[\rho,q]$-LTE condition can be expressed in terms of the relative-variable two-point function ${\sf w}_q\in\cD'(\cU_q)$. As before, we denote by $\cU_q$ an open neighbourhood of $0\in\IR^4$ with $\cU_q\subset\{z\in\IR^4:q\pm z/2\in\cM\}$.

\begin{lemma}
\label{lemma:mLTEequiv}
 Let $q\in\cM$. A Hadamard state $\o$ on $\cA(\cM)$ is a $[\rho,q]$-LTE state if and only if there exists a mixed-temperature state $\o_{B}$ with respect to $\rho$ on $\cA(\cM)$, such that 
\ben
[\del_\va ({\sf w}_{B}-{\sf w}_q)](0)=0 \quad\forall \va\in\IN_0^4,
\label{eq:mLTEequiv}
\een
where ${\sf w}_B\in\cD'(\cU_q)$ is the relative variable two-point function of $\left.\o_B\right|_\cM$.
\end{lemma}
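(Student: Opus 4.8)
The plan is to transcribe the proof of Lemma~\ref{lemma:LTEequiv} to the mixed-temperature setting, the only structural change being that the single comparison state $\o_\bb$ is replaced by the mixed state $\o_B$ of \eqref{eq:mixed}. What makes this transcription legitimate is the observation, already recorded before Definition~\ref{defi:mLTE}, that the mixed-temperature state $\o_B=\int_B\diff\rho(\bb)\,\o_\bb$ is again an (analytic) Hadamard state. Hence by Lemma~\ref{lemma:pullback} its relative-variable two-point function ${\sf w}_B\in\cD'(\cU_q)$ is well defined, and the balanced-derivative identity \eqref{eq:bderiv2}, i.e.\ $\o(\eth_{\mu_1\ldots\mu_n}:\phi^2:(q))=[\del_{z^{\mu_1}}\ldots\del_{z^{\mu_n}}({\sf w}_q-{\sf w}_\text{vac})](0)$, holds verbatim with $\o_B$ and ${\sf w}_B$ in place of $\o$ and ${\sf w}_q$.

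For the direction ``$\Rightarrow$'' I would assume $\o$ is a $[\rho,q]$-LTE state, so by Definition~\ref{defi:mLTE} there is a mixed-temperature state $\o_B$ with $\o(s(q))=\o_B(s(q))$ for all $s(q)\in\cS_q^n$ and all $n\in\IN_0$. Inserting \eqref{eq:bderiv2} for both $\o$ and $\o_B$ turns this into the statement that every contraction of the symmetric tensors $\fR_{\mu_1\cdots\mu_n}:=[\del_{z^{\mu_1}}\ldots\del_{z^{\mu_n}}({\sf w}_B-{\sf w}_q)](0)$ with spacelike vectors $v_1,\ldots,v_n\notin\overline{V_+}\cup\overline{V_-}$ vanishes. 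Exactly as in the sharp case, Lemma~\ref{lemma:symtens} then upgrades this to the vanishing of all derivatives, $[\del_\va({\sf w}_B-{\sf w}_q)](0)=0$ for every $\va\in\IN_0^4$, which is \eqref{eq:mLTEequiv}. For ``$\Leftarrow$'' I would assume \eqref{eq:mLTEequiv} and read off, by restricting to the symmetric index combinations and again using \eqref{eq:bderiv2} for $\o$ and for $\o_B$, that $[\del_{z^{\mu_1}}\ldots\del_{z^{\mu_n}}({\sf w}_q-{\sf w}_\text{vac})](0)=[\del_{z^{\mu_1}}\ldots\del_{z^{\mu_n}}({\sf w}_B-{\sf w}_\text{vac})](0)$ for all $n$, equivalently $\o(s(q))=\o_B(s(q))$ for every $s(q)\in\cS_q^n$ and every $n$; this is precisely the $[\rho,q]$-LTE condition of Definition~\ref{defi:mLTE}.

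The only genuine departure from the sharp-temperature argument, and the step I expect to require the most care, is checking that the objects attached to $\o_B$ are the $\rho$-averages of those of the sharp KMS states, i.e.\ that ${\sf w}_B=\int_B\diff\rho(\bb)\,{\sf w}_\bb$ as a distribution in $\cD'(\cU_q)$ and, correspondingly, that forming balanced derivatives commutes with the $\rho$-integration. This amounts to interchanging integration against $\rho$ with the distributional limits defining the pullback and the balanced derivatives. I expect it to be harmless precisely because $\rho$ has compact support $B$ bounded away from $\del V_+$: the wave-front estimates of Lemma~\ref{lemma:pullback} then hold uniformly in $\bb\in B$, so that the union over $B$ of the relevant cones is still contained in the single cone of \eqref{eq:WFw} (resp.\ \eqref{eq:WFAw}), the integral ${\sf w}_B$ inherits the same wave-front control, and the interchange is justified by dominated convergence. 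Once this uniformity is established, the remainder is a verbatim copy of the proof of Lemma~\ref{lemma:LTEequiv}.
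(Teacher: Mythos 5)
Your proof is correct and takes essentially the same route as the paper, whose own proof simply notes that the argument of Lemma \ref{lemma:LTEequiv} carries over verbatim with $\o_\bb$ replaced by $\o_B$. The interchange of the $\rho$-integration with distributional limits that you flag as the delicate step is in fact not needed for this lemma: since $\o_B$ is itself a Hadamard state (as recorded before Definition \ref{defi:mLTE}), the identity \eqref{eq:bderiv2} applies to it directly, with no need to represent ${\sf w}_B$ as the $\rho$-average of the ${\sf w}_\bb$.
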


\begin{proof}
The proof is verbatim the same as the proof of Lemma \ref{lemma:LTEequiv}, with $\o_\bb$ replaced by $\o_B$.
\end{proof}

If, in addition, the state $\o$ fulfills the analytic Hadamard condition we have the following Corollary which also is proven by analogous arguments as in the sharp-temperature case, replacing $\o_\bb$ by $\o_B$.

\begin{coro}
\label{coro:mLTEequiv}
Let $q\in\cM$. An analytic Hadamard state $\o$ on $\cA(\cM)$ fulfills the $[\rho,q]$-LTE condition if and only if there exists a mixed-temperature state $\o_B$ with respect to $\rho$ on $\cA(\IM)$, such that 
\ben
\fw_q=\fw_B
\een
in the sense of distributions, where ${\sf w}_B\in\cD'(\cU_q)$ is the relative-variable two-point function of $\left.\o_B\right|_\cM$.
\end{coro}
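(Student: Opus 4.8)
The plan is to mirror, essentially verbatim, the structure of the proof of Corollary \ref{coro:LTEequiv}, since the only conceptual change is that the single comparison KMS state $\o_\bb$ is replaced by a mixed-temperature comparison state $\o_B = \int_B d\rho(\bb)\,\o_\bb$. The key fact that makes this transfer legitimate is already recorded in the text: $\o_B$ is an analytic Hadamard state (it fulfills \eqref{eq:WFAmink}), it is spacetime-translation invariant, and its relative-variable two-point function $\fw_B \in \cS'(\IR^4)$ is $q$-independent and tempered. Thus ${\sf w}_B$ has exactly the same analytic-wavefront-set properties that ${\sf w}_\bb$ had, which is all that the earlier argument used about the comparison state.

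First I would treat the easy direction. Suppose $\fw_q = \fw_B$ in the sense of distributions. Restricting the global tempered distribution $\fw_B$ to the neighbourhood $\cU_q$ gives ${\sf w}_q = {\sf w}_B$ on $\cU_q$, hence in particular $[\del_\va({\sf w}_B - {\sf w}_q)](0) = 0$ for all $\va \in \IN_0^4$. By Lemma \ref{lemma:mLTEequiv} this is precisely the $[\rho,q]$-LTE condition, so $\o$ is a $[\rho,q]$-LTE state.

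For the converse, assume $\o$ is a $[\rho,q]$-LTE state. Lemma \ref{lemma:mLTEequiv} furnishes a mixed-temperature state $\o_B$ with $[\del_\va({\sf w}_B - {\sf w}_q)](0) = 0$ for all $\va \in \IN_0^4$, i.e.\ the smooth difference ${\sf w}_B - {\sf w}_q$ has all derivatives vanishing at the origin. The point is that both $\o$ and $\o_B$ are analytic Hadamard states, so by Lemma \ref{lemma:pullback} the relative-variable two-point functions ${\sf w}_q$ and ${\sf w}_B$ have the \emph{same} analytic wavefront set, namely \eqref{eq:WFAw}, and consequently their difference is real-analytic on $\cU_q$ (its analytic wavefront set is empty). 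A real-analytic function all of whose derivatives vanish at a point vanishes identically on the connected component of that point; hence ${\sf w}_q = {\sf w}_B$ as distributions on $\cU_q$. Finally, because ${\sf w}_q$ (being the restriction of an analytic Hadamard two-point function) and ${\sf w}_B = \left.\fw_B\right|_{\cU_q}$ agree on $\cU_q$ and each extends by analyticity, the unique analytic extension $\fw_q$ coincides with the global $\fw_B$, giving $\fw_q = \fw_B$ in $\cS'(\IR^4)$.

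The only step that requires any care, and which I expect to be the main (albeit mild) obstacle, is the analyticity argument in the converse: one must be sure that the analytic-wavefront-set calculus genuinely applies to the \emph{mixed} comparison state. This is where I would lean on the already-asserted fact that $\o_B$ satisfies \eqref{eq:WFAmink}, so that Lemma \ref{lemma:pullback} applies to $\o_B$ exactly as it did to $\o_\bb$, yielding $\WF_A({\sf w}_B) \subset \{(z,p) : z^\mu z_\mu = 0,\ p \in \partial V_+(q)\}$; the vanishing of all derivatives of the analytic function ${\sf w}_B - {\sf w}_q$ at $0$ then forces it to vanish identically, exactly as in Corollary \ref{coro:LTEequiv}. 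In short, the proof is again verbatim the sharp-temperature argument with $\o_\bb$ replaced by $\o_B$ throughout.
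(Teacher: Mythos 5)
Your proof is correct and is essentially the paper's own argument: the paper disposes of this corollary by repeating the proof of Corollary \ref{coro:LTEequiv} verbatim with $\o_\bb$ replaced by $\o_B$ (forward direction by restriction of $\fw_B$ to $\cU_q$ plus Lemma \ref{lemma:mLTEequiv}; converse by real-analyticity of ${\sf w}_B-{\sf w}_q$ with all derivatives vanishing at $0$, then unique extension by analyticity), relying --- exactly as you do --- on the stated facts that $\o_B$ is an analytic Hadamard, spacetime-translation invariant state with tempered, $q$-independent $\fw_B$. One cosmetic caveat that does not change the verdict: the real-analyticity of ${\sf w}_B-{\sf w}_q$ is not a consequence of the two distributions merely sharing the wavefront-set bound \eqref{eq:WFAw} (a common bound does not force cancellation of singularities), but follows from the paper's remark that $\o_2-\left.\o_2^{\text{vac}}\right|_{\cM\times\cM}\in C^A(\cM\times\cM)$ for analytic Hadamard states, applied to both $\o$ and $\o_B$ and pulled back along $\k_q$ --- the same implicit step underlying the sharp-temperature proof you are mirroring.
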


Now we aim at an intrinsic description of mixed LTE states by defining a class of states which, in a sense, can be regarded as mixtures of LKMS states.

\begin{defi}
 Let $q\in\IM$ and let $\o$ be an analytic Hadamard state on $\cA(\cM)$. Then we say that $\o$ fulfills the \emph{mixed-temperature local KMS condition at $q$ with respect to $\rho$}, or \emph{$[\rho,q]$-LKMS condition} for short, iff there exists a positive normalized measure $\rho$ with support in some compact $B\subset V_+(q)$, such that
\ben
{\sf w}_q=\int_B d \rho(\tilde{\bb}) {\sf w}_q^{\tilde{\bb}}
\een
in the sense of distributions. Here, ${\sf w}_q^{\tilde{\bb}}\in\cD'(\cU_q)$ are the relative-variable two-point functions at $q$ of states $\o_{\tilde{\bb},q}$ on $\cA(\cM)$ which fulfill the $[\tilde{\bb},q]$-LKMS condition. 
\end{defi}

In fact, it is not hard to see that this condition is indeed equivalent to the $[\rho,q]$-LTE condition given above.

\begin{theorem}
\label{theorem:mLKMSequiv}
 Let $q\in\cM$. An analytic Hadamard state $\o$ on $\cA(\cM)$ is a $[\rho,q]$-LTE state if and only if $\o$ fulfills the $[\rho,q]$-LKMS condition.
\end{theorem}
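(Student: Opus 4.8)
The plan is to reduce the statement to an identity between relative-variable two-point functions and then apply the sharp-temperature equivalence already established. By Corollary~\ref{coro:mLTEequiv}, $\o$ is a $[\rho,q]$-LTE state precisely when ${\sf w}_q=\int_B d\rho(\tilde{\bb})\,{\sf w}_{\tilde{\bb}}={\sf w}_B$ holds in $\cD'(\cU_q)$, where ${\sf w}_B$ is the relative-variable two-point function of the mixed state $\o_B=\int_B d\rho(\tilde{\bb})\,\o_{\tilde{\bb}}$. Thus both the LTE and the LKMS conditions, once unwound, assert that ${\sf w}_q$ is a $\rho$-average of relative-variable two-point functions, and the whole proof hinges on identifying the two families of averaged pieces.

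The key step is a rigidity observation supplied by Theorem~\ref{theorem:LKMS}: for each $\tilde{\bb}$, \emph{every} state $\o_{\tilde{\bb},q}$ fulfilling the $[\tilde{\bb},q]$-LKMS condition is, by the equivalence (ii)$\Leftrightarrow$(i) there, a $[\tilde{\bb},q]$-LTE state, so by Corollary~\ref{coro:LTEequiv} its relative-variable two-point function at $q$ is uniquely fixed: ${\sf w}_q^{\tilde{\bb}}={\sf w}_{\tilde{\bb}}$ on $\cU_q$. Hence the constituents appearing in the definition of the $[\rho,q]$-LKMS condition are, independently of how the states $\o_{\tilde{\bb},q}$ are chosen, exactly the KMS pieces ${\sf w}_{\tilde{\bb}}$.

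With this in hand both implications are immediate. For LKMS $\Rightarrow$ LTE I would substitute ${\sf w}_q^{\tilde{\bb}}={\sf w}_{\tilde{\bb}}$ into ${\sf w}_q=\int_B d\rho(\tilde{\bb})\,{\sf w}_q^{\tilde{\bb}}$ to obtain ${\sf w}_q={\sf w}_B$, and then invoke Corollary~\ref{coro:mLTEequiv}. For LTE $\Rightarrow$ LKMS I would start from ${\sf w}_q={\sf w}_B=\int_B d\rho(\tilde{\bb})\,{\sf w}_{\tilde{\bb}}$ (Corollary~\ref{coro:mLTEequiv}) and exhibit admissible constituents by taking $\o_{\tilde{\bb},q}=\left.\o_{\tilde{\bb}}\right|_\cM$, whose $[\tilde{\bb},q]$-LKMS property is the remark following Corollary~\ref{coro:exLKMS} and whose relative-variable two-point function at $q$ is ${\sf w}_{\tilde{\bb}}$; this realizes ${\sf w}_q$ as the required $\rho$-average. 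The only point requiring care --- and the place where I expect the real (if mild) work to sit --- is the measurability and $\rho$-integrability of the distribution-valued map $\tilde{\bb}\mapsto{\sf w}_{\tilde{\bb}}$ together with the interchange of $\int_B d\rho$ with the distributional pairing against test functions $h\in\cD(\cU_q)$; this is, however, already implicit in the very definition of $\o_B$ and ${\sf w}_B$, so it transfers to the present setting without new input.
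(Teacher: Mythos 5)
Your proposal is correct and takes essentially the same route as the paper: both directions reduce to Corollary \ref{coro:mLTEequiv}, use the identification ${\sf w}_q^{\tilde{\bb}}={\sf w}_{\tilde{\bb}}$ forced by the sharp-temperature equivalence (Theorem \ref{theorem:LKMS} together with Corollary \ref{coro:LTEequiv}), and invoke Corollary \ref{coro:exLKMS} (restrictions of $\tilde{\bb}$-KMS states) to supply the constituent states in the LTE-to-LKMS direction. Your explicit statement of the rigidity step, and your flagging of the measurability of $\tilde{\bb}\mapsto{\sf w}_{\tilde{\bb}}$ and the interchange of $\int_B d\rho$ with the distributional pairing, merely spell out what the paper's proof leaves implicit.
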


\begin{proof}
Assume that $\o$ fulfills the $[\rho,q]$-LKMS condition. By the $[\tilde{\bb},q]$-LKMS condition on the states $\o_{\tilde{\bb},q}$ it follows that
\ben
{\sf w}_q=\int_B d \rho(\tilde{\bb}) {\sf w}_q^{\tilde{\bb}}=\int_B d \rho(\tilde{\bb}) {\sf w}_{\tilde{\bb}}={\sf w}_B.
\een
Therefore, $\o$ is a $[\rho,q]$-LTE state by Corollary \ref{coro:mLTEequiv}.

 Assume conversely that $\o$ is a $[\rho,q]$-LTE state. From Corollary \ref{coro:mLTEequiv} it follows that there exists a mixed-temperature state $\o_{B}$ such that
\ben
{\sf w}_q={\sf w}_{B}=\int_B d\rho(\tilde{\bb}) {\sf w}_{\tilde{\bb}}
\label{eq:mLTEproof}
\een
in the sense of distributions. By Corollary \ref{coro:exLKMS}, we have for any given $\tilde{\bb}\in B$ an analytic Hadamard state $\o_{\tilde{\bb},q}$ on $\cA(\cM)$ which fulfills the $[\tilde{\bb},q]$-LKMS condition, i.e. ${\sf w}_q^{\tilde{\bb}}={\sf w}_{\tilde{\bb}}$. We can thus replace the relative variable two-point functions ${\sf w}_\bb$ occurring in eq. \eqref{eq:mLTEproof} by the relative variable-two-point functions ${\sf w}_q^{\tilde{\bb}}$ which yields
\ben
{\sf w}_q={\sf w}_{B}=\int_B d\rho(\tilde{\bb}) {\sf w}_{\tilde{\bb}}=\int_B d\rho(\tilde{\bb}){\sf w}_q^{\tilde{\bb}}.
\een
This shows that the $[\rho,q]$-LTE condition implies the $[\rho,q]$-LKMS condition and the proof is complete.
\end{proof}

\subsection*{Mixed-temperature LTE states of finite order}

\begin{lemma}
\label{lemma:mLTENequiv}
Let $q\in\cM$. A Hadamard state $\o$ on $\cA(\cM)$ is a $[\rho,q,N]$-LTE state if and only if there exists a mixed-temperature state $\o_B$ with respect to $\rho$ on $\cA(\IM)$ such that
\ben
[\del_\va ({\sf w}_{B}-{\sf w}_q)](0)=0 \quad\forall \va\in\{\va'\in\IN_0^4:\abs{\va'}\leq N\},
\een
where ${\sf w}_B\in\cD'(\cU_q)$ is the relative variable two-point function of $\left.\o_B\right|_\cM$.
\end{lemma}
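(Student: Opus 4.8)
The plan is to mirror the proof of Lemma \ref{lemma:LTENequiv} almost verbatim, replacing the sharp comparison state $\o_\bb$ by the mixed-temperature comparison state $\o_B$ throughout, exactly as the proof of Lemma \ref{lemma:mLTEequiv} was obtained from that of Lemma \ref{lemma:LTEequiv}. The single structural fact that legitimizes the substitution is that $\o_B$ is itself a Hadamard state (indeed an analytic Hadamard state, as observed after Definition \ref{defi:mLTE}). Consequently ${\sf w}_B-\left.{\sf w}_\text{vac}\right.$ is smooth, so ${\sf w}_B-{\sf w}_q$ is smooth and its pointwise derivatives at $0$ are well defined, and the dictionary (\ref{eq:bderiv2}) between balanced derivatives of the Wick square and partial derivatives of the relative-variable two-point function applies to $\o_B$ just as to $\o$:
\ben
\o_B(\eth_{\mu_1\ldots\mu_n}:\phi^2:(q))=[\del_{z^{\mu_1}}\ldots\del_{z^{\mu_n}}({\sf w}_B-{\sf w}_\text{vac})](0)\quad\forall\,n\in\IN_0.
\een

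For the forward implication I would start from the $[\rho,q,N]$-LTE condition of Definition \ref{defi:mLTE}, which by the spanning property of $\cS_q^n$ unfolds to the equality $\o(\eth_{\mu_1\ldots\mu_n}:\phi^2:(q))=\o_B(\eth_{\mu_1\ldots\mu_n}:\phi^2:(q))$ for all $n\leq N$. Applying (\ref{eq:bderiv2}) to both $\o$ and $\o_B$ and cancelling the common ${\sf w}_\text{vac}$ term, and recalling that the balanced derivatives arise from a spacelike limit, this is equivalent to
\ben
v_1^{\mu_1}\cdots v_n^{\mu_n}\,[\del_{z^{\mu_1}}\cdots\del_{z^{\mu_n}}({\sf w}_B-{\sf w}_q)](0)=0
\een
for all $n\leq N$ and all $v_1,\ldots,v_n\notin\overline{V_+}\cup\overline{V_-}$. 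Exactly as in Lemma \ref{lemma:LTENequiv}, I would then feed the symmetric tensors $\fR_{\mu_1\cdots\mu_n}:=[\del_{z^{\mu_1}}\cdots\del_{z^{\mu_n}}({\sf w}_B-{\sf w}_q)](0)$ (now of rank at most $N$) into Lemma \ref{lemma:symtens} to upgrade the vanishing of all spacelike contractions to the vanishing of the full tensors, giving $[\del_\va({\sf w}_B-{\sf w}_q)](0)=0$ for all $\va$ with $\abs{\va}\leq N$.

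For the converse, assuming $[\del_\va({\sf w}_B-{\sf w}_q)](0)=0$ for all $\abs{\va}\leq N$, I would restrict to the spacelike directional derivatives, which in particular then agree, so that by (\ref{eq:bderiv2}) the balanced derivatives of the Wick square in $\o$ and in $\o_B$ coincide up to order $N$; this is precisely the $[\rho,q,N]$-LTE condition.

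I do not expect a genuine obstacle. The only point requiring attention is that (\ref{eq:bderiv2}) is available for $\o_B$, which is immediate once one knows $\o_B$ is Hadamard, and that the measure-theoretic mixing over $\rho$ has been absorbed entirely into $\o_B$ and its two-point function ${\sf w}_B$ \emph{before} any derivative is taken, so that $\rho$ never appears explicitly and no differentiation under the integral sign is involved. Thus the substance of the argument is identical to the sharp-temperature finite-order case, with the finite-order truncation $\abs{\va}\leq N$ handled by applying Lemma \ref{lemma:symtens} only to tensors of rank at most $N$.
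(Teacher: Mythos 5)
Your proof is correct and follows exactly the route the paper takes: the paper's own proof of this lemma simply states that it is verbatim the proof of Lemma \ref{lemma:LTENequiv} with $\o_\bb$ replaced by $\o_B$, which is precisely your substitution argument, including the justifying observation that $\o_B$ is (analytic) Hadamard so that eq.\ (\ref{eq:bderiv2}) and Lemma \ref{lemma:symtens} apply unchanged.
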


\begin{proof}
The proof is verbatim the same as the proof of Lemma \ref{lemma:LTENequiv}, with $\o_\bb$ replaced by $\o_B$.
\end{proof}

Again, assuming the analytic Hadamard condition to be fulfilled by $\o$ we have the following Corollary.

\begin{coro}
\label{coro:mLTENequiv}
Let $q\in\cM$ and $N\in\IN_0$. An analytic Hadamard state $\o$ on $\cA(\cM)$ fulfills the $[\rho,q,N]$-LTE condition if and only if there exists a mixed-temperature state $\o_B$ with respect to $\rho$ on $\cA(\IM)$ and a symmetric distribution ${\sf R}_q\in\cD'(\cU_q)$ with $\WF_A({\sf R}_q)=\emptyset$ and
\ben
[\del^\va {\sf R}_q(0)]=0 \quad\forall \va\in\{\va'\in\IN_0^4:\abs{\va'}\leq N\}
\een 
such that
\ben
{\sf w}_q={\sf R}_q+\int_B d \rho(\tilde{\bb}) {\sf w}_q^{\tilde{\bb}}
\een
in the sense of distributions, where ${\sf w}_B\in\cD'(\cU_q)$ is the relative variable two-point function of $\left.\o_B\right|_\cM$.
\end{coro}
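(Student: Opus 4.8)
The plan is to follow the proof of Corollary~\ref{coro:LTENequiv} almost verbatim, replacing the single comparison state $\o_\bb$ by the mixed-temperature state $\o_B$ and invoking Lemma~\ref{lemma:mLTENequiv} in place of Lemma~\ref{lemma:LTENequiv}. The one genuinely new bookkeeping step is to identify the integral $\int_B d\rho(\tilde\bb)\,{\sf w}_q^{\tilde\bb}$ with the relative-variable two-point function ${\sf w}_B$ of $\left.\o_B\right|_\cM$. This is immediate: by Corollary~\ref{coro:exLKMS} each state $\o_{\tilde\bb,q}$ exists, and by Theorem~\ref{theorem:LKMS} together with Corollary~\ref{coro:LTEequiv} the $[\tilde\bb,q]$-LKMS condition forces ${\sf w}_q^{\tilde\bb}={\sf w}_{\tilde\bb}$; integrating against $\rho$ gives $\int_B d\rho(\tilde\bb)\,{\sf w}_q^{\tilde\bb}=\int_B d\rho(\tilde\bb)\,{\sf w}_{\tilde\bb}={\sf w}_B$.

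For the ``if'' direction I would assume a mixed state $\o_B$ and a symmetric ${\sf R}_q\in\cD'(\cU_q)$ with $\WF_A({\sf R}_q)=\emptyset$ and $[\del^\va{\sf R}_q](0)=0$ for $\abs\va\le N$ are given, with ${\sf w}_q={\sf R}_q+\int_B d\rho(\tilde\bb)\,{\sf w}_q^{\tilde\bb}$. By the identification above this reads ${\sf w}_q={\sf R}_q+{\sf w}_B$, so that $[\del_\va({\sf w}_B-{\sf w}_q)](0)=-[\del_\va{\sf R}_q](0)=0$ for all $\va$ with $\abs\va\le N$, and Lemma~\ref{lemma:mLTENequiv} yields the $[\rho,q,N]$-LTE condition.

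For the ``only if'' direction I would start from a $[\rho,q,N]$-LTE state $\o$; Lemma~\ref{lemma:mLTENequiv} provides a mixed state $\o_B$ with $[\del_\va({\sf w}_B-{\sf w}_q)](0)=0$ for $\abs\va\le N$. I then set ${\sf R}_q:={\sf w}_q-{\sf w}_B\in\cD'(\cU_q)$ and verify the three required properties. Vanishing of the derivatives up to order $N$ is immediate from Lemma~\ref{lemma:mLTENequiv}. Symmetry follows because, by the CCR~\eqref{eq:CCR}, the antisymmetric parts of ${\sf w}_q$ and ${\sf w}_B$ both equal the state-independent commutator contribution $iE$ and hence cancel in the difference, giving ${\sf R}_q(-z)={\sf R}_q(z)$. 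Finally $\WF_A({\sf R}_q)=\emptyset$ follows since $\o$ and $\o_B$ are analytic Hadamard states with the same analytic wavefront set \eqref{eq:WFAmink}, so their two-point functions differ by a real-analytic function whose pullback under $\k_q$ (cf.\ Lemma~\ref{lemma:pullback}) has empty analytic wavefront set. With these properties in hand, ${\sf w}_q={\sf R}_q+{\sf w}_B={\sf R}_q+\int_B d\rho(\tilde\bb)\,{\sf w}_q^{\tilde\bb}$ is exactly the asserted decomposition.

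The main obstacle—modest, as it was already in Corollary~\ref{coro:LTENequiv}—is confirming that ${\sf R}_q={\sf w}_q-{\sf w}_B$ is genuinely symmetric with empty analytic wavefront set. Both facts rest entirely on the analytic Hadamard property of the mixed state $\o_B$ (established before Definition~\ref{defi:mLTE}) and on the cancellation of the state-independent antisymmetric part; everything else is a direct transcription of the sharp-temperature argument, so I expect no further complications.
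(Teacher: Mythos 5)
Your proposal is correct and takes essentially the same approach the paper intends: the paper prints no proof for this corollary, relying (as it states for Corollary \ref{coro:mLTEequiv}) on transcribing the sharp-temperature argument of Corollary \ref{coro:LTENequiv} with $\o_\bb$ replaced by $\o_B$, including your definition ${\sf R}_q={\sf w}_q-{\sf w}_B$ with symmetry from the CCR cancellation and $\WF_A({\sf R}_q)=\emptyset$ from the analytic Hadamard property of $\o$ and $\o_B$. Your identification $\int_B d\rho(\tilde{\bb})\,{\sf w}_q^{\tilde{\bb}}={\sf w}_B$ via Corollary \ref{coro:exLKMS} and ${\sf w}_q^{\tilde{\bb}}={\sf w}_{\tilde{\bb}}$ is precisely the step the paper carries out in the proof of Theorem \ref{theorem:mLKMSequiv}.
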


\begin{defi}
\label{defi:mLKMS}
 Let $q\in\IM$, $N\in\IN_0$ and $\o$ an analytic Hadamard state on $\cA(\cM)$. Then we say that $\o$ fulfills the \emph{mixed-temperature local KMS condition at $q$ with respect to $\rho$ at order $N$}, or \emph{$[\rho,q,N]$-LKMS condition} for short, iff there exists a positive normalized measure $\rho$ with support in some compact $B\subset V_+(q)$ and a symmetric distribution ${\sf R}_q\in\cD'(\cU_q)$ with $\WF_A({\sf R}_q)=\emptyset$ and
\ben
[\del^\va {\sf R}_q(0)]=0 \quad\forall \va\in\{\va'\in\IN_0^4:\abs{\va'}\leq N\}
\een 
such that
\ben
{\sf w}_q={\sf R}_q+\int_B d \rho(\tilde{\bb}) {\sf w}_q^{\tilde{\bb}}
\een
in the sense of distributions. Here, ${\sf w}_q^{\tilde{\bb}}\in\cD'(\cU_q)$ are the relative-variable two-point functions at $q$ of states $\o_{\tilde{\bb},q}$ on $\cA(\cM)$ which fulfill the $[\tilde{\bb},q]$-LKMS condition. 
\end{defi}

\begin{theorem}
Let $q\in\cM$. An analytic Hadamard state $\o$ on $\cA(\cM)$ is a $[\rho,q,N]$-LTE state if and only if $\o$ fulfills the $[\rho,q,N]$-LKMS condition.
\end{theorem}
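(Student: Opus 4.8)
The plan is to mimic the proof of Theorem~\ref{theorem:mLKMSequiv}, carrying along the extra smoothing distribution ${\sf R}_q$ and replacing the infinite-order characterisation of Corollary~\ref{coro:mLTEequiv} by its finite-order counterpart, Corollary~\ref{coro:mLTENequiv}. The observation that makes both conditions directly comparable is that the building blocks ${\sf w}_q^{\tilde{\bb}}$ entering Definition~\ref{defi:mLKMS} are, by construction, relative-variable two-point functions of states $\o_{\tilde{\bb},q}$ satisfying the (infinite-order) $[\tilde{\bb},q]$-LKMS condition; hence by the equivalence established in Theorem~\ref{theorem:LKMS} together with Corollary~\ref{coro:LTEequiv} they coincide on $\cU_q$ with the corresponding KMS data, ${\sf w}_q^{\tilde{\bb}}={\sf w}_{\tilde{\bb}}$ in the sense of distributions. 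This single identity is what lets one pass freely between an integral over the ${\sf w}_q^{\tilde{\bb}}$ and the relative-variable two-point function ${\sf w}_B=\int_B d\rho(\tilde{\bb})\,{\sf w}_{\tilde{\bb}}$ of the mixed-temperature state $\o_B$.

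For the direction $[\rho,q,N]$-LKMS $\Rightarrow$ $[\rho,q,N]$-LTE I would start from the defining decomposition ${\sf w}_q={\sf R}_q+\int_B d\rho(\tilde{\bb})\,{\sf w}_q^{\tilde{\bb}}$ of Definition~\ref{defi:mLKMS}, substitute ${\sf w}_q^{\tilde{\bb}}={\sf w}_{\tilde{\bb}}$, and recognise the integral as ${\sf w}_B$. Since ${\sf R}_q$ already carries the required symmetry, the vanishing analytic wavefront set $\WF_A({\sf R}_q)=\emptyset$, and $[\del^\va {\sf R}_q](0)=0$ for all $\abs{\va}\leq N$, the resulting relation ${\sf w}_q={\sf R}_q+{\sf w}_B$ is exactly the characterisation of a $[\rho,q,N]$-LTE state supplied by Corollary~\ref{coro:mLTENequiv}, and the conclusion follows at once.

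For the converse, starting from a $[\rho,q,N]$-LTE state I would invoke Corollary~\ref{coro:mLTENequiv} to obtain a mixed-temperature state $\o_B$ and a symmetric ${\sf R}_q$ with $\WF_A({\sf R}_q)=\emptyset$ and $[\del^\va {\sf R}_q](0)=0$ for $\abs{\va}\leq N$ such that ${\sf w}_q={\sf R}_q+\int_B d\rho(\tilde{\bb})\,{\sf w}_{\tilde{\bb}}$. Corollary~\ref{coro:exLKMS} then supplies, for each $\tilde{\bb}\in B$, an analytic Hadamard state $\o_{\tilde{\bb},q}$ on $\cA(\cM)$ fulfilling the $[\tilde{\bb},q]$-LKMS condition, whose two-point function satisfies ${\sf w}_q^{\tilde{\bb}}={\sf w}_{\tilde{\bb}}$ by the same equivalence as above. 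Replacing ${\sf w}_{\tilde{\bb}}$ by ${\sf w}_q^{\tilde{\bb}}$ under the integral recovers precisely the decomposition demanded in Definition~\ref{defi:mLKMS}, so $\o$ fulfills the $[\rho,q,N]$-LKMS condition.

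The only genuine subtlety, and the step I would treat with most care, is the interplay between the $\rho$-integration and the pointwise identification ${\sf w}_q^{\tilde{\bb}}={\sf w}_{\tilde{\bb}}$: one must ensure that the family of LKMS states can be chosen $\rho$-measurably in $\tilde{\bb}$ so that $\int_B d\rho(\tilde{\bb})\,{\sf w}_q^{\tilde{\bb}}$ is well defined as a distribution and agrees with ${\sf w}_B$. Since each ${\sf w}_q^{\tilde{\bb}}$ is literally equal to the KMS datum ${\sf w}_{\tilde{\bb}}$, which depends measurably (indeed analytically) on $\tilde{\bb}$ through the explicit KMS two-point function, this measurability is automatic and the exchange of integration with the distributional identification is justified exactly as in the infinite-order proof of Theorem~\ref{theorem:mLKMSequiv}. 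All the analytic-wavefront and finite-order derivative bookkeeping is then inherited verbatim from Corollary~\ref{coro:mLTENequiv}, so no new estimates are required.
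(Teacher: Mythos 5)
Your proposal is correct and takes essentially the same route as the paper, whose proof consists precisely of running the argument of Theorem \ref{theorem:mLKMSequiv} with Corollary \ref{coro:mLTENequiv} replacing Corollary \ref{coro:mLTEequiv} and with Corollary \ref{coro:exLKMS} supplying the $[\tilde{\bb},q]$-LKMS states whose relative-variable two-point functions satisfy ${\sf w}_q^{\tilde{\bb}}={\sf w}_{\tilde{\bb}}$, so that the symmetric remainder ${\sf R}_q$ is carried through both directions unchanged. Your added remark on the $\rho$-measurable choice of the family $\o_{\tilde{\bb},q}$ is a reasonable point of care, resolved exactly as you say, since ${\sf w}_q^{\tilde{\bb}}$ is identified with the explicit KMS datum ${\sf w}_{\tilde{\bb}}$.
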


\begin{proof}
This is proven by analogous arguments an in the proof of Theorem \ref{theorem:mLKMSequiv}, using Corollary \ref{coro:mLTENequiv} and the existence result on $[\bb,q]$-LKMS states, Corollary \ref{coro:exLKMS}.
\end{proof}

\section{Summary and outlook}

A new proposal for a concept of local temperature in quantum field theory has been made and
investigated in the present work. The basis of the concept is a local version of the KMS
condition at the level of the two-point correlation of states of a quantum field. That local
version permits temperature variations from point to point in spacetime, as well as variations
of the time-direction with respect to which the condition is formulated and which ought to be
seen as defining an (approximate) momentary equilibrium rest-frame. The main result of the present work is the equivalence of our LKMS condition and the LTE condition of Buchholz, Ojima and Roos for states of the quantized Klein-Gordon field which fulfill the analytic microlocal spectrum condition, on patches of Minkowski spacetime. This result indicates the utility of our LKMS condition in view of the sound physical motivation of the LTE condition in providing a characterization of states in quantum field theory to which one can ascribe a locally
varying temperature. On the other hand, the result sheds a new light on the LTE condition, indicating
its relation to a local version of the KMS condition.

Nevertheless, the LTE condition is quite restrictive and its is likely that the set of LTE resp. LKMS states (to infinite order) that have a varying temperature distribution might be very small. This applies in particular to situations where the quantum fields propagate in the presence of time-varying external background fields; yet that is a potential domain of application for LTE and LKMS states, as for example in quantum field theory in the early cosmological eras. Therefore, seeking generalizations to the LTE concept appears a promising task. 

We notice that the LKMS condition proposed in this paper does not involve any regularization nor any Hadamard subtraction. For this reason, this condition might be easier to generalize to the case of a quantum field propagating on a curved spacetime than the LTE condition. In particular, a generalization to the curved case of the LKMS condition of order zero, as introduced in Definition \ref{defi:LTEN}, is surely possible. In that case, the function ${\sf R}_q$ introduced in Corollary \ref{coro:LTENequiv}, or in Definition \ref{defi:LTEN}, quantifies the failure of a state of being an exact LKMS state. 

More generally, we think that the LKMS condition offers several possibilities for a generalization. As already pointed out in the beginning, without imposing the analytic microlocal spectrum condition, the LKMS condition is implied by the LTE condition, but not necessarily vice versa. The finite order LKMS conditions of Definition \ref{defi:LTEN} could be modified to yield a generalized LKMS condition in a suitable microlocal sense which might render itself useful to quantum field theory in curved spacetime, and we see such a line of research as promising
in the future. Furthermore, the LKMS condition (or suitable generalizations) could provide links to other
conditions on states in quantum field theory in curved spacetimes which express thermodynamic stability or
asymptotic relation to thermal states, see, e.g., \cite{FV03, DHP11}. There is also the possibility to investigate
the LKMS condition beyond two-point correlations. The operator product expansion \cite{HK12} could provide a useful tool in that context. Progress in this direction should make it possible to understand potential relations between the LKMS condition and other characterizations of (locally) thermal states in interacting quantum field theory \cite{ABDM09, FL13}.

\subsubsection*{Acknowledgements} Part of this work has been carried out during a stay of the authors at the ESI, Vienna at the workshop ``AQFT - Its status and its future" in May 2014. M.G. gratefully acknowledges financial support by the Max Planck Institute for Mathematics in the Sciences and its International Max Planck Research School (IMPRS) ``Mathematics in the Sciences".

\newpage
\appendix
\section{Notation and Conventions}

The metric convention is $\eta=\text{diag}(+,-,-,-)$. We use the Einstein summation convention.
The open forward  light cone $V_+$ is defined as
\ben
V_+:=\{e\in T\IM:e_\mu e^\mu>0, e_0>0\},
\een
and its boundary is denoted by $\del V_+$. The open backward light cone is defined by $V_-:=-V_+$ and $\del V_-$ is its boundary. The set of unit vectors in $V_+$, also called the set of \emph{time-directions}, is denoted by 
\ben
V_+^1:=\{e\in V_+:e^\mu e_\mu=1\}.
\een
The open forward (backward) light cone emanating from a point $q\in\IM$ is defined by 
\ben
V_\pm(q):=\{e\in\IR^4: e-q\in V_\pm\}=V_\pm + q
\een
and its boundary by $\del V_\pm(q)$.

We symbolically write  $\cF[h(x)](p)$ for the Fourier transform of a function $h$ with respect to the variable $x\in\IR^n$. This symbolic notation also applies if $h$ is a (tempered) distribution.  Our conventions for the one- and four-dimensional Fourier transforms are as follows:
\begin{align}
\cF[f(x)](p)\equiv\hat{f}(p)&:=\frac{1}{(2\pi)^2}\int_{\IR^4}\diff^4 x\ f(x)e^{i x p }\\
\cF[f(t)](k)\equiv\hat{f}(k)&:=\frac{1}{\sqrt{2\pi}}\int_{\IR}\diff t\ f(t)e^{itk}
\end{align}
The unique vacuum state on the Klein-Gordon algebra $\cA(\IM)$ is denoted by $\o_\text{vac}$.
The set of real-analytic functions on a subset $X\subset\IR^n$ is denoted by $C^A(X)$.

\section{Pullbacks}
Let $\bb=\b e$, with $\b>0$ and $e\in V_{+,1}$, and $\o_\bb$ a $\bb$-KMS state on $\cA(\IM)$.  In this appendix we give the relative-time-variable two-point function $\fu_\bb$ and the relative-variable two-point function $\fw_\bb$ of $\o_\bb$. Those are well-defined as distributions in $\cS'(\IR)$ and $\cS'(\IR^4)$ by Lemma (\ref{lemma:pullback}).

From eq. \eqref{eq:KMStwo} one obtains that the relative-variable two-point function $\fw_{\bb}$ of $\o_\bb$ is given by
\ben
\fw_\bb(z )=\frac{1}{(2\pi)^3}\int d^4p\ \frac{\eps(p_0)\d(p^2-m^2)}{1-e^{\bb p}}e^{-ipz},
\een
which yields
\ben
\wfou_\bb(p)=\frac{1}{2\pi}\cdot\frac{\eps(p_0)\d(p^2-m^2)}{1-e^{\bb p}}.
\een
In order to compute $\fu_\bb\in\cS'(\IR)$, which arises as the restriction of the relative-variable two-point function to $\IR\cdot e$, we switch to the coordinate frame in which $e=(1,\vec{0})$ and obtain the formal expression
\ben
\fu_\bb(t)=\frac{1}{(2\pi)^3}\int d^4p\ \frac{\eps(p_0)\d(p^2-m^2)}{1-e^{\b p_0}}e^{-ip_0 t}.
\een
We integrate over the $p_0$-component, switch to spherical coordinates with $r=\abs{\vec{p}}$ and then substitute $k=\sqrt{r^2+m^2}$, which yields
\begin{align}
 \fu_\bb(t)&=\frac{1}{(2\pi)^3}\int\limits_{\IR^3}\frac{d^3\vec{p}}{2\o_{\vec{p}}}\left(\frac{e^{-i\o_{\vec{p}} t}}{1-e^{-\b\o_{\vec{p}}}}-\frac{e^{i\o_{\vec{p}} t}}{1-e^{\b\o_{\vec{p}}}}\right)\non\\
&=\frac{1}{4\pi^2}\int\limits_0^\infty dr\ \frac{r^2}{\sqrt{r^2+m^2}}\left(\frac{e^{-i\sqrt{r^2+m^2}t}}{1-e^{\b\sqrt{r^2+m^2}}}-\frac{e^{i\sqrt{r^2+m^2}t}}{1-e^{\b\sqrt{r^2+m^2}}}\right)\non\\
&=\frac{1}{4\pi^2}\int\limits_m^\infty dk\ \sqrt{k^2-m^2}\left(\frac{e^{-ikt}}{1-e^{-\b k}}-\frac{e^{ikt}}{1-e^{\b k}}\right)\non\\
&=\frac{1}{4\pi^2}\cdot\left[\int\limits_m^\infty dk\ \frac{\sqrt{k^2-m^2}}{1-e^{-\b k}}e^{-ikt}-\int\limits_{-\infty}^m dk \frac{\sqrt{k^2-m^2}}{1-e^{-\b k}}e^{-ikt}\right]\non\\
&=\frac{1}{4\pi^2}\int\limits_{-\infty}^\infty dk\ \eps(k)\Th(k^2-m^2)\frac{\sqrt{k^2-m^2}}{1-e^{-\b k}} e^{-ikt}\\
&=\frac{1}{2\sqrt{2}\pi^{3/2}}\cdot\cF^{-1}[\ufou_\bb(k)](t).
\end{align}
Therefore, the Fourier transform $\ufou_\bb$ of $\fu_\bb$ is given by
\ben
\ufou_\bb(k)=\frac{1}{2\sqrt{2}\pi^{3/2}}\ \frac{\eps(k)\Th(k^2-m^2)\sqrt{k^2-m^2}}{1-e^{-\b k}},
\label{eq:KMSpullm}
\een
which reduces in the massless case with $m=0$ to
\ben
\ufou_\bb(k)=\frac{1}{2\sqrt{2}\pi^{3/2}}\ \frac{k}{1-e^{-\b k}}.
\label{eq:KMSpull0}
\een
The relative-time-variable commutator distribution $\fE\in\cD'(\IR)$, is therefore given by
\ben
\hat{\fE}(k)=(-i)\cdot(\ufou_\bb(k)-\ufou_\bb(-k))=-\frac{i}{2\sqrt{2}\pi^{3/2}}\eps(k)\Th(k^2-m^2)\sqrt{k^2-m^2},
\label{eq:commpullm}
\een
which reduces for $m=0$ to 
\ben
\hat{\fE}(k)=-\frac{i}{2\sqrt{2}\pi^{3/2}}k.
\label{eq:commpull0}
\een
We thus obtain the following remnant of relation (\ref{eq:KMStwo}) in the massless  ($m=0$) as well as in the massive ($m>0$) case:
\ben
\ufou_\bb(k)=\frac{i\hat{\fE}(k)}{1-e^{-\b k}}.
\label{eq:KMSpull}
\een

\section{Proofs of some lemmas}

\begin{proof}[\textbf{Proof of Lemma \ref{lemma:pullback}}]
This is an application of H\"ormander's criterion. The set of normals of the map $\c_{(q,e)}\equiv(\c_1,\c_2)$ is defined as
\ben
 N_\c=\{(\c_{(q,e)}(t),\eta)\in (\cM \times\cM)\times (\IR^4\times\IR^4)|(\c_{(q,e)}^\prime)^T(t)[\eta]=0\}.
\een
We have $(\c_{(q,e)}^\prime)^T(t)=\frac{1}{2}(-e,e)$ for all $t\in\cI_{q,e}$ and thus
\ben
(\c_{(q,e)}^\prime)^T(t)[\eta]= \frac{1}{2}(\eta_2^\mu e_\mu-\eta_1^\mu e_\mu).
\label{eq:chiT}
\een   
Now suppose that $(x,p;x',p')\in\WF(\o_2)\cap N_\c$. In, particular this implies that $p$ is lightlike and future-pointing and $p'=-p$. But from \eqref{eq:chiT} it follows that we have $p^\mu e_\mu=0$, from which we we conclude that $\WF(\o_2)\cap N_\chi=\emptyset$ since $p^\mu e_\mu\neq 0$ for all future-pointing lightlike $p$ and timelike $e$. We can thus apply \cite[Thm 8.2.4]{Hoe90}, by which $\chi_{q,e}^*\o_2\in\cD'(\cI_{q,e})$ is well defined and  eq. \eqref{eq:WFu} holds. In the analytic case we can apply \cite[Thm 8.5.1]{Hoe90} to obtain eq. \eqref{eq:WFAu}. Next, the set of normals to the map $\k_q$ is defined as 
$$ N_\k=\{(\k_q(\z),\eta)\in (\cM \times\cM)\times (\IR^4\times\IR^4):(\k'_q)^T(\z)[\eta]=0\}.$$
The differential $\k'_q$ is given by the respective Jacobi matrices so we have
\begin{align}
(\k'_q)^T(\z)[\eta]=\frac{1}{2}\begin{pmatrix}
                   -\eins_4,\eins_4
					      \end{pmatrix}[\eta]
=\frac{1}{2}(\eta_2-\eta_1),\non
\end{align}
which vanishes if and only if $\eta_1=\eta_2$. Thus, we have
$$ N_\k=\{(q-\z/2,q+\z/2,\eta,\eta):\z\in \cU_q \}$$ 
Comparing this with the wave front set of $\o_2$, Eqn. (\ref{eq:WFmink}), we immediately see that $N_\k\cap\WF(\o_2)=\emptyset$. We can thus again apply Theorem 8.2.4 of \cite{Hoe90}, by which $\k_q^*\o_2\in\cD'(\cU_q)$ is well defined and eq. \eqref{eq:WFw} holds. In the analytic case we can apply Theorem 8.5.1. of \cite{Hoe90}, which yields eq. \eqref{eq:WFAw} and completes the proof.
\end{proof}

In the proof of Lemma \ref{lemma:LTEequiv} we will make use of the following polarization identity for totally symmetric multilinear maps.

\begin{lemma}
\label{lemma:polid}
Let $E$ and $F$ be $\IR$-vector spaces and $T:E^n\to F$ a totally symmetric multilinear map. If we denote the restriction to the diagonal by $\tilde{t}(v):=T(v,\ldots,v)$, for $v\in E$, there holds the polarization identity
\ben
T(v_1,\ldots,v_n)=\frac{1}{n!}\sum\limits_{k=1}^n (-1)^{n-k}\sum\limits_{J,\abs{J}=k}\tilde{t}\left(\sum\limits_{i\in J} v_i\right)
\een
 where the $J$ are subsets of $\{1,2,\ldots,n\}$ and $\abs{J}$ denotes the number of elements of $J$.
\end{lemma}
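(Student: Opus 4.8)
The plan is to prove the identity by expanding each diagonal term $\tilde{t}(\sum_{i\in J}v_i)$ via multilinearity and total symmetry, and then collecting the coefficient of each resulting monomial through a short inclusion–exclusion count. First I would fix notation: for a multi-index $\va=(\a_1,\dots,\a_n)\in\IN_0^n$ with $\abs{\va}:=\a_1+\dots+\a_n=n$, write $T^\va$ for the value of $T$ on the argument list in which each $v_i$ appears exactly $\a_i$ times; total symmetry makes this well defined independently of the ordering of the arguments, and $\supp(\va)$ denotes the set of indices $i$ with $\a_i>0$. Expanding by multilinearity and regrouping ordered index lists by their associated multi-index then gives
\ben
\tilde{t}\Bigl(\sum_{i\in J}v_i\Bigr)=\sum_{i_1,\dots,i_n\in J}T(v_{i_1},\dots,v_{i_n})=\sum_{\substack{\va:\,\abs{\va}=n\\ \supp(\va)\subseteq J}}\binom{n}{\va}\,T^\va,
\een
where $\binom{n}{\va}=\tfrac{n!}{\a_1!\cdots\a_n!}$ counts the ordered lists realizing $\va$, and the support condition reflects that only those $v_i$ with $i\in J$ can occur.

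Next I would substitute this expansion into the right-hand side of the asserted identity and exchange the orders of summation, so that the whole expression becomes a sum $\frac{1}{n!}\sum_{\va}\binom{n}{\va}T^\va\,c_\va$ over multi-indices, with
\ben
c_\va=\sum_{k=1}^{n}(-1)^{n-k}\,\#\{J:\abs{J}=k,\ \supp(\va)\subseteq J\}.
\een
Writing $s:=\abs{\supp(\va)}$, the admissible sets $J$ are obtained by adjoining $k-s$ further indices to $\supp(\va)$ out of the $n-s$ remaining ones, so their number is $\binom{n-s}{k-s}$. After the substitution $j=k-s$ the coefficient becomes the alternating binomial sum
\ben
c_\va=(-1)^{n-s}\sum_{j=0}^{n-s}(-1)^{j}\binom{n-s}{j}=(-1)^{n-s}(1-1)^{n-s},
\een
which vanishes whenever $s<n$.

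Thus the only surviving term is $s=n$: since $\abs{\va}=n$ while every index in the support contributes at least $1$, full support forces $\va=(1,\dots,1)$, hence $T^\va=T(v_1,\dots,v_n)$ and $\binom{n}{\va}=n!$, with $c_\va=1$. Its coefficient is therefore $\frac{1}{n!}\cdot n!\cdot 1=1$, and all other multi-indices cancel, yielding the claimed identity. The substantive part of the argument is the combinatorial bookkeeping in the middle step — correctly identifying the multiplicity $\binom{n-s}{k-s}$ of admissible index sets and recognizing the resulting alternating sum as $(1-1)^{n-s}$; once these are established the cancellation is immediate and the rest is routine.
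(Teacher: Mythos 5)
Your proof is correct, and it is worth noting that the paper itself offers no argument at all for this lemma --- its ``proof'' consists of the single line ``A nice proof can be found e.g.\ in \cite{Tho14}'', deferring entirely to the reference of Thomas. Your self-contained route is the standard direct one: expand $\tilde{t}\bigl(\sum_{i\in J}v_i\bigr)$ by multilinearity, group ordered index tuples by their multi-index $\va$ using total symmetry (with multinomial multiplicity $\binom{n}{\va}$ and support contained in $J$), exchange the order of summation, and kill every $\va$ with $\abs{\supp(\va)}=s<n$ via the count $\binom{n-s}{k-s}$ of admissible $J$ and the alternating sum $(-1)^{n-s}(1-1)^{n-s}=0$; the surviving multi-index $\va=(1,\dots,1)$ carries coefficient $\frac{1}{n!}\cdot n!\cdot 1=1$, as required. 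All three steps check out, including the two points where such arguments usually go wrong: the multiplicity $\binom{n-s}{k-s}$ is the correct count of $k$-sets containing a fixed $s$-set, and starting the outer sum at $k=1$ rather than $k=0$ is harmless since $\abs{\va}=n\geq 1$ forces $s\geq 1$, so no admissible $J$ is omitted. Compared with simply citing \cite{Tho14}, your argument buys self-containment and elementarity (nothing beyond multilinearity and the binomial theorem is used), at the modest cost of the combinatorial bookkeeping, which you carry out correctly; it also makes transparent why the identity holds over any $\IR$-vector space, or indeed over any field of characteristic zero, since the only division performed is by $n!$.
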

\begin{proof}
 A nice proof can be found e.g. in \cite{Tho14}.
\end{proof}

This leads to a slight generalization of Lemma 5.1.2. in \cite{Kri99}.

\begin{lemma}
\label{lemma:symtens}
 Any symmetric $\binom{0}{n}$-tensor $T$ is uniquely determined by its values $T(e,\ldots,e)$ for all timelike vectors $e$ with $
e^2=e^\mu e_\mu=1$. Similarly, $T$ is uniquely determined by its values $T(,\ldots,f)$ for all spacelike vectors $f$ with $f^2=f^\mu f_\mu=-1$ .
\end{lemma}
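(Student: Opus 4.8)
The plan is to reduce the statement to an elementary fact about homogeneous polynomials by means of the polarization identity of Lemma \ref{lemma:polid}. Writing $\tilde{t}(v) := T(v,\ldots,v)$ for the restriction of the symmetric tensor $T$ to the diagonal, Lemma \ref{lemma:polid} expresses $T(v_1,\ldots,v_n)$ entirely in terms of the values of $\tilde{t}$; hence $T$ is uniquely determined by $\tilde{t}$, and it suffices to prove that $\tilde{t}$ is in turn determined by its values on the unit hyperboloid $\{e:e^2=1\}$ (respectively $\{f:f^2=-1\}$). Phrased as a uniqueness statement, I would show: if $\tilde{t}(e)=0$ for every timelike $e$ with $e^2=1$, then $\tilde{t}\equiv 0$ (and symmetrically in the spacelike case), after which $T=0$ follows by polarization.

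The key observation is that, in components, $\tilde{t}(v)=T_{\mu_1\cdots\mu_n}v^{\mu_1}\cdots v^{\mu_n}$ is a homogeneous polynomial of degree $n$ on $\IR^4$, so $\tilde{t}(\l v)=\l^n\tilde{t}(v)$. First I would use this homogeneity to pass from the hyperboloid to the full open timelike cone: any $v$ with $v^2>0$ can be written as $v=\sqrt{v^2}\,e$ with $e:=v/\sqrt{v^2}$ satisfying $e^2=1$, whence
\ben
\tilde{t}(v)=(v^2)^{n/2}\,\tilde{t}(e).
\een
Thus vanishing of $\tilde{t}$ on the hyperboloid forces $\tilde{t}=0$ on the entire open set $\{v\in\IR^4:v^2>0\}$. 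Since this set is a nonempty open subset of $\IR^4$ and $\tilde{t}$ is a polynomial, $\tilde{t}$ must vanish identically. The spacelike case is verbatim the same, replacing $v^2>0$ by $v^2<0$, setting $f:=v/\sqrt{-v^2}$ with $f^2=-1$, and using $\tilde{t}(v)=(-v^2)^{n/2}\tilde{t}(f)$.

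I do not expect a serious obstacle here; the only nontrivial ingredient is the elementary fact that a polynomial on $\IR^4$ vanishing on a nonempty open set vanishes everywhere, and this applies precisely because the timelike (respectively spacelike) region is open. The normalization constraint $e^2=1$ is harmless exactly because homogeneity lets me rescale it away, so the normalized vectors sweep out an open cone once their overall scale is freed. The main point requiring care is simply to record the openness of the timelike and spacelike regions, which is what licenses the polynomial identity argument, with the polarization identity doing the remaining bookkeeping.
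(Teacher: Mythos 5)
Your proof is correct and follows essentially the same route as the paper's: reduce to the diagonal $\tilde{t}$ via the polarization identity of Lemma \ref{lemma:polid}, rescale by homogeneity to pass from the unit hyperboloid to the open timelike (resp.\ spacelike) region, and then exploit the openness of that region. The only (cosmetic) difference lies in the final step, where the paper extracts $T(w,\ldots,w)$ explicitly as $\frac{1}{n!}\dnull{n}T(v+tw,\ldots,v+tw)$ with $v$ timelike and $v+tw$ timelike for small $t$, whereas you invoke the equivalent standard fact that a polynomial on $\IR^4$ vanishing on a nonempty open set vanishes identically.
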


\begin{proof}
We give the proof for the timelike case. The spacelike case is proven analogously. Let $T$ and $S$ be two symmetric $\binom{0}{n}$-tensors which coincide on the diagonal for all timelike unit vectors, i.e. $T(e,\ldots,e)=S(e,\ldots,e)$ for all $e$ with $e^2=1$. Let now $v$ be a timelike vector. Denoting the normalized vector by $\overline{v}=\frac{v}{\sqrt{v^2}}$, we have
\ben
T(v,\ldots,v)=\left(\frac{1}{\sqrt{v^2}}\right)^n T(\overline{v},\ldots,\overline{v})=\left(\frac{1}{\sqrt{v^2}}\right)^n S(\overline{v},\ldots,\overline{v})=S(v,\ldots,v)
\een
Now, for any arbitrary vector $w$ we can find a $\d>0$ such that $v+tw$ is timelike as long as $t\in[-\d,\d]$, since $V_+\cup V_-$ is an open set. This implies that $T$ and $S$ coincide on the diagonal for any vector $w$, since
\begin{align}
T(w,\ldots,w)&= \frac{1}{n!}\dnull{n} T(v+tw,\ldots,v+tw)\notag\\
		& = \frac{1}{n!}\dnull{n} S(v+tw,\ldots,v+tw)=S(w,\ldots,w).
\end{align}
Hence, by Lemma \ref{lemma:polid}, $T$ and $S$ coincide on any vectors and thus $T=S$. 
\end{proof}

\begin{proof}[\textbf{Proof of Lemma \ref{lemma:KMSrem4}}]
By eq. (\ref{eq:KMStwo}), the Fourier transform of $\fw_\bb\in\cS'(\IR^4)$ is given by
\ben
\wfou_\bb(p)=\frac{1}{2\pi}\cdot\frac{i \hat{E}(p)}{1-e^{-\bb p}},
\een
where $E\in\cS'(\IR^4)$ denotes the commutator distribution. In view of the time-clustering property of $\o_\bb$, eq. (\ref{eq:clust}), this is equivalent to 
\ben
e^{\bb p}\wfou_\bb(-p)=\wfou_\bb(p).
\label{eq:KMSfou}
\een
 Thus, defining $$\G_\bb=\{y\in\IR^4:e^{yp}\wfou_\bb(-p)\in\cS'(\IR^4)\},$$ we have $0\in\G_\bb$ and $\bb\in\G_\bb$.  Since the set $\G_\bb$ is convex \cite[Thm. 2-5]{SW00} this implies $\l\bb\in\G_\bb$ for all $\l\in[0,1]$, or equivalently
\ben
e^{\s e p}\wfou_\bb(-p)\in\cS'(\IR^4),\quad \forall \s\in[0,\b].
\een
Applying Theorem 7.4.2 in \cite{Hoe90}, one obtains that the function $F_\bb$, defined by $$F_\bb(z +i\s e):=\cF[e^{\s e p}\wfou_q(-p)](z )\quad\forall z \in\IR^4,\s\in(0,\b),$$ fulfills properties $(i)$ and $(ii)$ above. In particular we have $F(\Bigcdot +i\s_0e)\in\cS'(\IR^4)$ for every fixed $\s_0\in(0,\b)$. Since the Fourier transform $\cF:\cS'\to\cS'$ is weakly continuous and
 \ben
e^{\s e p}\wfou_\bb(-p) \xrightarrow[\s\to 0^+]{} \wfou_\bb(-p)
\een
 in the weak sense, we find
\ben
F_\bb(z+i\s e)=\cF[e^{\s e p}\wfou_\bb(-p)] \xrightarrow[\s\to 0^+]{} \cF[\wfou_\bb(-p)]=\fw_\bb(z).
\een
Now consider the function $G_\bb(z +i\eta e):=F_\bb(z +i(\b-\eta)e)$, which is also holomorphic on $\cT_\bb$. Then analogous arguments as for $F_\bb$, together with (\ref{eq:KMSfou}) yield
\ben
 G_\bb(z +i\eta e)=\cF[e^{(\b-\eta) p e}\wfou_\bb(-p)] \xrightarrow[\eta\to 0^+]{} \cF[e^{\bb p}\wfou_\b(-p)]=\cF[\wfou_\b(p)]=\fw_\bb(-z ),
\een
which completes the proof.
\end{proof}

%

\newpage
\bibliographystyle{amsalpha}

\providecommand{\bysame}{\leavevmode\hbox to3em{\hrulefill}\thinspace}
\providecommand{\MR}{\relax\ifhmode\unskip\space\fi MR }
\providecommand{\MRhref}[2]{%
  \href{http://www.ams.org/mathscinet-getitem?mr=#1}{#2}
}
\providecommand{\href}[2]{#2}

\end{document}